\def\BibTeX{{\rm B\kern-.05em{\sc i\kern-.025em b}\kern-.08em
    T\kern-.1667em\lower.7ex\hbox{E}\kern-.125emX}}
\newcommand\Ter[1]{\textcolor{ForestGreen}{\textrm{Teresa: }#1}}
\newcommand\ST[1]{\textcolor{teal}{[ST: #1]}}
\newcommand{\compl}{\textrm{c}} 
\newcommand{\ballop}[2]{B_{#1}(#2)} 
\newcommand{\diam}{\textrm{diam}} 
\newcommand{\reach}{\ensuremath{\textrm{reach}}} 
\newcommand{\mreach}[1]{\ensuremath{\reach(\partial #1)}}
\newcommand{\dt}[1]{d^{\mp}_{#1}} 
\newcommand{\dtonesided}[1]{d_{#1}} 
\newcommand{\tcx}{X(r,t)} 
\newcommand{\DSEDT}[1]{\textrm{D}[#1]} 
\newcommand{\tg}{D_r} 
\newcommand{\dH}{d_{\mathcal{H}}} 
\newcommand{\leash}[2]{\textrm{leash}_{#1}(#2)} 
\newcommand{\mleash}[2]{l_{#1}(#2)} 
\newcommand{\erosion}[2]{\textrm{E}_{#2}(#1)} 
\newcommand{\mm}{\ensuremath{m_{\rho_r}}} 
\theoremstyle{plain}
\newtheorem{lemma}{Lemma}[section]
 \newtheorem{corollary}[lemma]{Corollary}
 \newtheorem{proposition}[lemma]{Proposition}
 \newtheorem{theorem}[lemma]{Theorem}
 \newtheorem{example}[lemma]{Example}
\theoremstyle{definition}
\newtheorem{definition}[lemma]{Definition}
 \newtheorem{remark}[lemma]{Remark}
\newcommand{\R}{\ensuremath{\mathbb{R}}}
\newcommand{\Z}{\ensuremath{\mathbb{Z}}}
\DeclareMathOperator{\cl}{cl}
\DeclareMathOperator{\inter}{int}
\DeclareMathOperator{\dB}{d_B}
\DeclareMathOperator{\cf}{f} 
\DeclareMathOperator{\dfr}{f_r} 
\newcommand{\PD}[1]{\ensuremath{\mathrm{PD}\!\left({#1}\right)}}
\newcommand{\kPD}[2]{\ensuremath{\mathrm{PD}_{#1}\!\left({#2}\right)}}
\newcommand{\norm}[1]{\ensuremath{\left\lVert {#1} \right\rVert}} 
\newcommand{\norminf}[1]{\ensuremath{\left\lVert {#1} \right\rVert_\infty}} 
\newcommand{\absolute}[1]{\ensuremath{|{#1}|}} 
\begin{document}
\title{The Impact of Changes in Resolution on the Persistent Homology of Images\\
\thanks{The authors thank the Mathematical Sciences Institute at ANU, the US National Science Foundation through the award CCF-1841455, the Australian Mathematical Sciences Institute and the Association for Women in Mathematics for funding the second Workshop for Women in Computational Topology in July 2019 where this project began. The project reached completion thanks to funding from the MSRI Summer Research in Mathematics program awarded in 2020. 
Teresa Heiss has received funding from the ERC under the Horizon 2020 programme (No.\ 788183).
We are also grateful to Kritika Singhal for her input to the first part of the project. The authors want to thank Mathijs Wintraecken for the idea of tightening the bounds when the Hausdorff distance is below the reach, i.e.\ the idea for Lemma~\ref{lemma:app:remark:tighteningCS-E-H} in the appendix.}
}

\author{\IEEEauthorblockN{Teresa Heiss\IEEEauthorrefmark{1}, 
Sarah Tymochko\IEEEauthorrefmark{2},
Brittany Story\IEEEauthorrefmark{3}, 
Ad\'{e}lie Garin\IEEEauthorrefmark{4}, 
Hoa Bui\IEEEauthorrefmark{5}, 
Bea Bleile\IEEEauthorrefmark{6}, and 
Vanessa Robins\IEEEauthorrefmark{7}
}
\IEEEauthorblockA{\IEEEauthorrefmark{1}Institute of Science and Technology (IST) Austria, Klosterneuburg, Austria}
\IEEEauthorblockA{\IEEEauthorrefmark{2}Michigan State University, Department of Computational Mathematics, Science and Engineering, East Lansing, MI, USA}
\IEEEauthorblockA{\IEEEauthorrefmark{3}Colorado State University, Department of Mathematics, Fort Collins, CO, USA}
\IEEEauthorblockA{\IEEEauthorrefmark{4} \'{E}cole Polytechnique Fédérale de Lausanne (EPFL), Switzerland }
\IEEEauthorblockA{\IEEEauthorrefmark{5}{School of Electrical Engineering, Computing and Mathematical Sciences, Curtin University, Australia}}
\IEEEauthorblockA{\IEEEauthorrefmark{6}School of Science and Technology, University of New England, Armidale, Australia}
\IEEEauthorblockA{\IEEEauthorrefmark{7}Research School of Physics, The Australian National University, Canberra, Australia}
}

\maketitle

\begin{abstract}
Digital images enable quantitative analysis of material properties at micro and macro length scales, but choosing an appropriate resolution when acquiring the image is challenging. 
A high resolution means longer image acquisition and larger data requirements for a given sample, but if the resolution is too low, significant information may be lost. 
This paper studies the impact of changes in resolution on persistent homology, a tool from topological data analysis that provides a signature of structure in an image across all length scales. 
Given prior information about a function, the geometry of an object, or its density distribution at a given resolution, we provide methods to select the coarsest resolution yielding results within an acceptable tolerance. 
We present numerical case studies for an illustrative synthetic example and samples from porous materials where the theoretical bounds are unknown. 
\end{abstract}

\begin{IEEEkeywords}
image processing, image resolution, persistent homology
\end{IEEEkeywords}

\section{Introduction}\label{sec:intro}

Engineers know that macro-scale properties, such as the strength of a composite material and the rate of fluid flow through a porous solid, depend crucially on the microscopic geometric and topological structure of the material. 
In practice, micro-CT x-ray images of porous materials provide detailed three-dimensional maps of spatial structure for the different phases present --- typically summarised as solid grains and fluid-filled pores~\cite{wildenschild_x-ray_2013}.
The size, shape, and connectivity of the pore space are known to dictate physical properties of fluid flow such as permeability and trapping capacity~\cite{herring_effect_2013,herring_topological_2019}. 
This structure is exactly the type of information captured by a key tool of Topological Data Analysis (TDA), namely persistent homology. 

Simply put, persistent homology is the study of connected components and holes in a sequence of nested shapes~\cite{perssurvey}.
In image analysis, these nested shapes are obtained by thresholding a real-valued function defined on the image domain. 
For example, to characterise the geometry of a porous material an appropriate function is the signed Euclidean distance transform derived from the binary map of solid and pore~\cite{cub1,delgado-friedrichs_morse_2014}. 
The persistence diagrams computed from such a distance transform then quantify the sizes, shapes, and connectivity of all the individual pores and grains.  
A key property of persistent homology is its stability with respect to perturbations in the input function~\cite{cohen2007stability}.   
Topological features with small persistence can be filtered out and may correlate with noise (although this depends on the type of noise and how the input function is defined and discretized). 
This stability does not hold for the regular homology of a shape because every topological feature is counted with equal `weight' no matter its size.  

Capturing  both the  micro-  and  macro-scale  properties of a material using micro-CT is a trade-off  between  accuracy at the micro-scale and the computational expense associated with the large image dimensions required to represent a macro-scale sample.  
The main focus of this paper is to establish both theoretical and numerical results that bound the difference between persistence diagrams computed from images of an object taken at different resolutions.
%
With prior knowledge of the object these bounds can guide the selection of the coarsest resolution that will yield a persistence diagram within an acceptable distance of the ground-truth.
When high resolution images lead to ``big data'' problems these can be scaled back to a more manageable size by reducing the image resolution or subsampling to a lower resolution. 
As persistent homology is very expensive to compute on high resolution imagery, studying large datasets of imagery is nearly impossible; being able to select a coarser resolution can alleviate this issue and allow for more data to be analyzed.

After covering the mathematical background in Section~\ref{sec:math_background}, we state results in Section~\ref{sec:PH_grayscale} comparing real-valued functions (not necessarily continuous) with digital approximations for different resolutions. 
Since we define the digital approximation of $f$ by taking average values within each voxel, these results involve fairly straightforward bounds on the variation of $f$ and an application of the stability theorem for persistence diagrams. 

In Section~\ref{sec:PH_distance_transform} we focus on a particular application where the ground-truth is the continuous signed Euclidean distance transform (CSEDT) derived from an object $X \subset \R^d$.  
We compare this function with the discrete signed Euclidean distance transform (DSEDT) of a digital approximation to $X$. 
This is more complex than simply applying the results of Section~\ref{sec:PH_grayscale} to the CSEDT of $X$, because the digital approximation to $X$ may result in small components being lost, and this in turn may lead to a large difference between the CSEDT and the DSEDT. 
The results and bounds obtained in this section are illustrated with simple examples to show their relevance. 


In Section~\ref{sec:theory_not_apply} we present a numerical case study of a synthetic image with structure on different length scales to illustrate circumstances where the theoretical bounds are unknown. Finally, we study some porous materials in Section~\ref{sec:matsci}.
These experiments show that the actual data behaves even better than the bounds derived in Section~\ref{sec:PH_distance_transform} suggest.

\subsection{Related Work}
\label{sec:related_work}

There are several results on point-cloud approximations of manifolds in the context of geometric triangulations built from randomly sampled points near a manifold embedded in $\R^d$, \cite{amenta_simple_2000,niyogi_finding_2008,chazal_sampling_2009,VR_approx,kim2020homotopy}. 
Most of these results start with the assumption that the point-cloud approximation is close to the original object $X$ in the Hausdorff distance. 
In contrast, the main challenge in Section~\ref{sec:PH_distance_transform} is to establish that the digital approximation is close to the object $X$ in the Hausdorff distance. 
Once this is achieved, we apply  Lemma~\ref{lemma:bound_with_suprema} --- an analogous result to those for point-clouds, but in the setting of digital images and extended to the signed 
distance transforms. 

An earlier paper working with the persistent homology of digital images~\cite{bendich_computing_2010} uses an adaptive grid derived from an oct-tree data structure to reduce data size.  
Those authors bound distance between persistence diagrams of the original high-resolution image and their oct-tree approximation, similar in spirit to the results in our Section~\ref{sec:PH_grayscale} and the numerical case study in Section~\ref{sec:theory_not_apply}, but using the adaptive grid in the approximation rather than a single coarser voxel size. 

Related work in~\cite{dlotko_rigorous_2018} starts with a known continuous function, $f$, builds an adaptive rectangular subdivision of the domain and uses rigorous computer arithmetic to guarantee their piecewise constant approximation is within $\varepsilon$ of $f$ thus guaranteeing the same bound on the bottleneck distances between persistence diagrams. 

Papers that have studied how estimates of material properties from micro-CT images change with image  resolution include~\cite{botha_mapping_2016,huang_effect_2021}. 


\section{Mathematical Background}\label{sec:math_background}

\subsection{Persistent Homology}

A summary of the mathematical formalism and results of persistent homology is available in~\cite{perssurvey}, and we very briefly cover the basic idea and necessary notation in this section.  

Persistent homology is a tool from TDA that quantifies shape features that appear and then disappear during a \emph{filtration}, i.e., a sequential growth process indexed by a real parameter. 
The filtrations in this paper are obtained as sublevel sets of real-valued functions. 
Given $f\colon\R^d \to \R$, these sublevel sets are $X_\delta = f^{-1}\left(-\infty,\delta \right]$. 
In three-dimensions, the shape features of $X_\delta$ are the numbers of components, tunnels, and voids, i.e., its homology classes in dimensions $k=0,1,2$. Each feature has an associated interval $\left[b,d\right)$ that defines the range of filtration parameter values for which that feature persists.  
The persistence intervals for a filtration defined by $f$ are collected in \emph{persistence diagrams} which are multisets of points for each dimension of homology, $\kPD{k}{f} = \{(b_i,d_i)\}_{i=1}^{n_k}$.

Let $f, f'$ be two functions with persistence diagrams $\kPD{k}{f}$ and $\kPD{k}{f'}$. 
The \emph{bottleneck distance} between $\kPD{k}{f}$ and $\kPD{k}{f'}$ is defined using matchings $\gamma$ on points in the two diagrams, 
\[\dB(\kPD{k}{f},\kPD{k}{f'}) = \min_{\gamma} \max_x \lVert x - \gamma(x) \rVert_\infty.\] 
Note that a matching, $\gamma$, is permitted to pair any point $(b, d)$ in either diagram with a point on the diagonal in the other diagram.   
The stability theorem of persistent homology~\cite{cohen2007stability} now tells us that if $f$ and $f'$ are tame functions that are point-wise close then their persistence diagrams are also: 
\[ \dB(\kPD{k}{f},\kPD{k}{f'}) \leq \lVert f - f' \rVert_\infty. \]

\subsection{Digital Images as Discretized Functions} \label{subsec:images_cont_func}

There is a variety of different methods to define digital approximations to algebraic functions whose domain is $\R^d$. 
Here we consider a tame $\mu$-integrable function $\cf \colon \R^d \to \R$ (where $\mu$ denotes Lebesgue measure) and take local averages over each voxel. 

Let $r$ be the desired spacing for the digital grid and define a voxel $\sigma$ to be an open $d$-cube of side length $r$, and volume $\mu(\sigma) = r^d$. 
Voxels are indexed by the integer grid $\Z^d$ so that $\sigma(i) = \sigma(i_1,\ldots,i_d) \subset \R^d$ is the product of $d$ open intervals $(i_k r, (i_k +1)r)$. 
The \emph{digital approximation} to $\cf$ is the function 
\[ 
\dfr: \R^d \to \R, 
\]
which is a piecewise constant function defined on voxels $\sigma(i)$ with  
\[
\dfr(x) = r^{-d} \int_{\sigma(i)} \cf d\mu, \quad \text{if } x \in \sigma(i). 
\]
On the voxel faces, we define $\dfr(x)$ to be the minimum value taken on voxels $\sigma$ with $x \in \partial \sigma$.  

If we restrict $\dfr$ to the voxel centers indexed by integers $i \in \Z^d$, this discrete function is usually referred to as a \emph{grayscale digital image}.  
The above definition of $\dfr$ then agrees with the $T$-construction for digital images defined in~\cite{duality}.  
The cubical complex and its sublevel sets are equivalent to using the indirect adjacency of digital grids. 


In Section~\ref{sec:PH_grayscale} we use a bound on the differences $\lvert \cf(x) - \dfr(x) \rvert$ to obtain a corresponding bound on the bottleneck distance of their persistence diagrams.

\subsection{Digital Images of Objects and their Distance Transforms} \label{subsec:images_object}

Let $X\subset \R^d$ be a compact subset representing the solid object we approximate with a digital image.
By a solid object, we mean $\cl(\inter(X)) = X$.  
Typically $d=2,3$, and we can shift $X$ so that it is a subset of a rectangular prism, $R$.

A \emph{digital approximation to $X$} is defined using a discrete function $\rho_r: \Z^d \to [0,1]$ that quantifies the proportion of space occupied by $X$ in each voxel $\sigma(i)$ with 
\[ \rho_r(i) = \frac{\mu(X \cap \sigma (i) )}{\mu(\sigma(i))} .  \] 
Equivalently, $\rho_r$ is the grayscale digital image obtained from the digital approximation to the indicator function for $X$.  

The digital approximation $X(r,t)$ is then the union of closed voxels of size $r$, with $\rho_r$-values at least threshold $t$ with $0<t\leq 1$,
\[  X(r,t) = \bigcup_{\rho_r(i) \geq t} \cl(\sigma(i)). \] 

This definition of digital approximation is a simplified model of segmenting an x-ray CT image. 
If the material consists of just two components with a high degree of x-ray contrast, (e.g., silica and air), then the CT image measures the average x-ray density of each voxel-sized patch of the sample.
In this case, x-ray density encodes the proportion of space occupied by the higher-density material. 
Segmenting converts the grayscale image to a binary image that approximates the distribution in space of the higher-density material.  
The simplest method of segmentation is to use a single threshold value, as we do here.  
In the presence of noise and imaging artefacts this method is unsatisfactory, and there is a considerable literature describing practical methods for achieving good segmentations~\cite{wildenschild_x-ray_2013}. 

Note that changing the choice of threshold $t$ can cause significant changes to the regular homology of $X(r,t)$. 
Fig.~\ref{fig:t_value_blob} shows an example where there is no choice of $t$ for which $X(r,t)$ and $X$ have the same homology. 
This is why we use persistent homology to compare $X$ and $X(r,t)$.  

\begin{figure}
    \centering
    \includegraphics[scale=0.08]{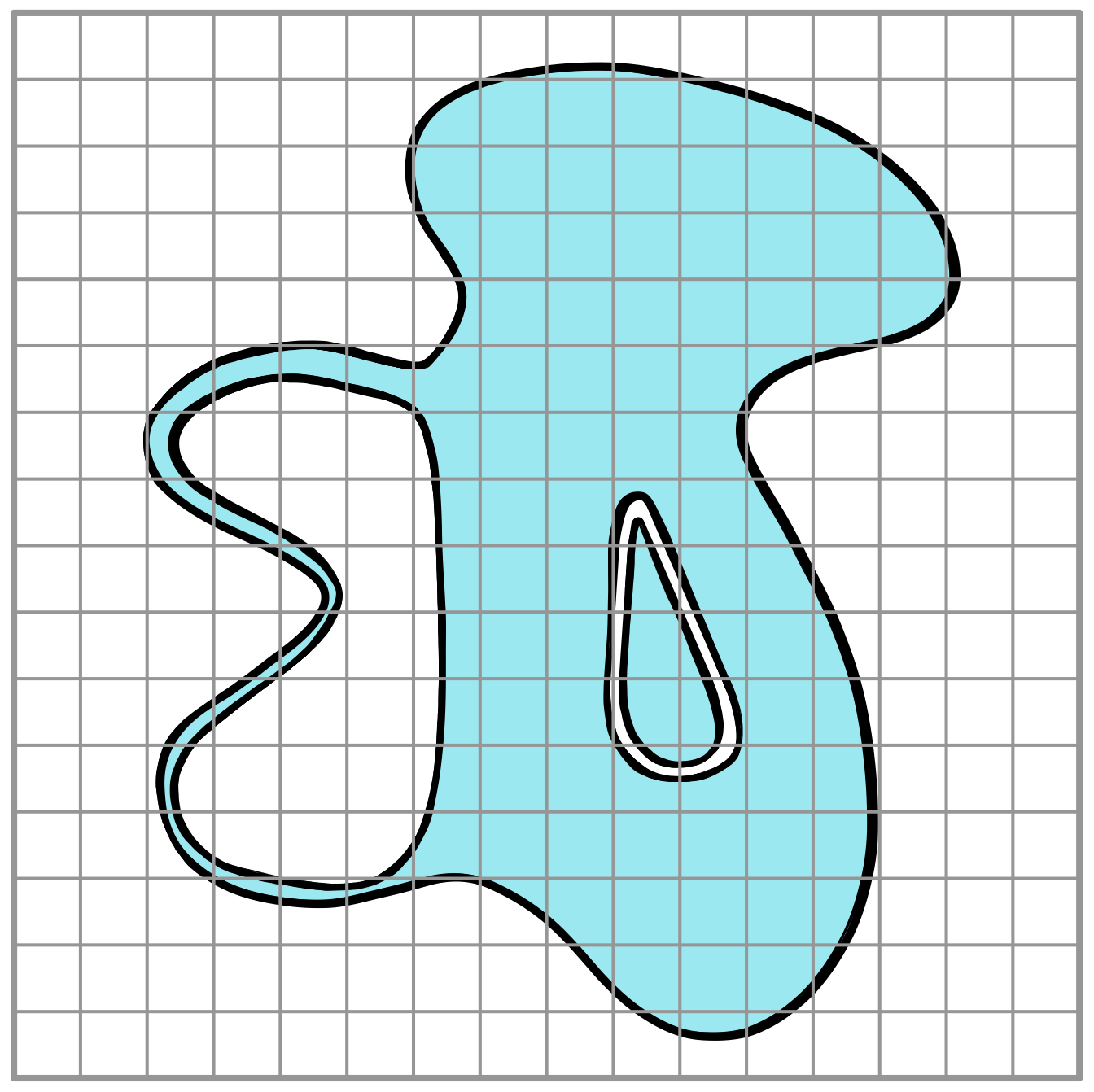}
    \caption{An object $X$ in blue, with a voxel grid of side $r$ overlayed in gray. There is no $t$ that makes $X(r,t)$ have the same homology as $X$. If $t$ is close to $1$, the handle-shaped part of $X$ is lost. If $t$ is small, then the narrow annulus in the middle will be filled in.}
    \label{fig:t_value_blob}
\end{figure}

To quantify the size of the holes and other topological and geometric features of $X$, a suitable function to use is the signed Euclidean distance. 
The \emph{continuous signed Euclidean distance transform} (CSEDT) of $X$ is defined as 
\begin{equation*}
\dt{X}\colon \R^d \to \R; \,
y \mapsto  \begin{cases}
-d(y,\partial X) & \text{if } y \in X\\
d(y,\partial X) & \text{if } y \in X^\compl,
\end{cases}
\end{equation*}
where $\partial X$ denotes the boundary of $X$ and $X^\compl$ denotes the complement of $X$ in $\R^d$.

In practice, since we want to compare $\dt{X}$ and its discretized version, we restrict the distance transforms to a rectangular prism $R$ that contains $X$.
The discrete analog of the CSEDT is the discrete signed Euclidean distance transform (DSEDT) 
\[
 D[X(r,t)] \colon R \to \R
\]
This is a piecewise constant function defined on voxels $\sigma(i) \subseteq X(r,t)$ and $\sigma(j) \subseteq X(r,t)^\compl$ by 
\[  x \mapsto 
\begin{cases}
-\min_{\sigma(j) \subseteq X(r,t)^\compl} \, r\, d(i,j) & \text{if } x \in \sigma(i) \subseteq X(r,t)\\
\min_{\sigma(i) \subseteq X(r,t)} \, r\, d(i,j) & \text{if } x \in \sigma(j) \subseteq X(r,t)^\compl .
\end{cases}
\]
On voxel faces, $D[X(r,t)]$ takes the minimum value over all voxels adjacent to the given face. 
Again, this corresponds to the $T$-construction mentioned earlier. 

In Section~\ref{sec:PH_distance_transform}, we investigate conditions on the geometry of $X$ and the voxel size $r$ necessary to control the distance between the persistence diagrams of the CSEDT of $X$ and the DSEDT of $X(r,t)$.  
Even though the DSEDT depends on the threshold $t$, 
the bounds we obtain do not depend on $t$. 
In other words, even though a wise choice of $t$ might help to achieve a better approximation at a larger voxel size, we can guarantee a good approximation for any value of $t>0$ once $r$ is chosen small enough compared to geometric characteristics of $X$, such as the \emph{reach}. 

Intuitively, the reach of a closed subset $A \subseteq \R^d$ encodes the minimum distance at which two or more fire fronts meet after $A$ is set on fire~\cite{blum}. 
Mathematically, the reach of a closed set $A\subseteq \R^d$ is the largest $\varepsilon$ (possibly $\infty$) such that for every $p\in \R^d$ with distance $d(p,A) < \varepsilon$, $A$ contains a unique point, $\xi_A(p)$, nearest to $p$, i.e. $d(p,A)=d(p,\xi_A(p))$~\cite{federer1959curvature}.
In this paper we use just one property of the reach proven by Federer in \cite[Theorem 4.8 (12)]{federer1959curvature}. For completeness, we give this lemma in Appendix~\ref{sec:app:federer}.

Note that the $\reach(A)$ characterizes the geometry of its complement. Since we work with the signed Euclidean distance transform of a solid object $X$ in this paper, we will use $$\mreach{X} = \min\{\reach(X), \reach(\cl(X^\compl))\}$$ to characterize the geometry of both $X$ and its complement. 
The geometric attributes that determine the reach of $\partial{X}$ are its radii of curvature and distances to the generalised critical points of the signed Euclidean distance transform. 
It is known~\cite{thale_50_2008}, that if $\partial X$ is a closed $C^2$ submanifold of $\R^d$, then $\mreach{X} > 0$. 

In Section~\ref{sec:PH_distance_transform} we establish a bound on the difference between the CSEDT of $X$ and the DSEDT of $X(r,t)$ in terms of a geometric quantity we call the leash. 
We denote by $A_{\delta}$, the set of all points within distance $\delta$ of a set $A$.  
Recall now, that the \emph{erosion} of $X$ by balls of radius $\delta$ is given by 
\[\erosion{X}{\delta} = ((X^\compl)_\delta)^\compl = \{x \in X \,\vert\, B_\delta(x) \subseteq X\}. \]  
The \emph{leash} then measures the Hausdorff distance between $X$ and $E_\delta(X)$: 
\[ \leash{X}{\delta} = \dH(X,\erosion{X}{\delta}) = \sup_{x \in X} d(x,\erosion{X}{\delta}). 
\]
The name leash can be best understood by imagining an infinitesimally small dog connected by a leash to the center of a ball of radius $s$, see Figure~\ref{fig:leash_dog}. 
The ball must stay fully inside the set $A$, while the dog is free to visit the whole of $\R^d$. The minimal length of the leash such that the dog can access every point of $A$ is then $\leash{A}{s}$.

\begin{figure}
    \centering
    \includegraphics[align=c,width=0.9\columnwidth]{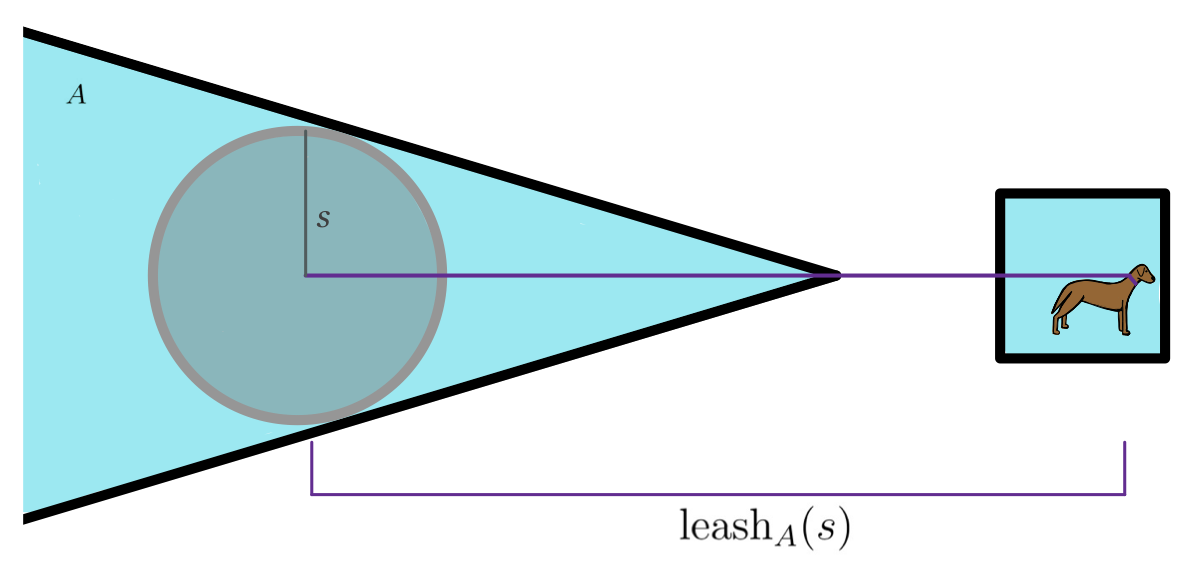}
    \caption{Illustration of the $\leash{A}{s}$: The ball of radius $s$ gets stuck inside part of $A$ but with a long enough leash the dog can reach every part of $A$.}
    \label{fig:leash_dog}
\end{figure}

In Lemma~\ref{lemma:reach_leash}, we show that if the reach is strictly positive and $0< s < \reach(\cl(X)^\compl)$, then $\leash{X}{s} = s$.  
This is equivalent to saying that the closed $\delta$-neighborhood of $E_\delta(X)$ exactly recovers $X$. 

As for the reach, we need to characterize both the geometry of $X$ and its complement $X^c$, so we will use the two-sided leash, defined as 
\[ \mleash{X}{\delta} = \max \{ \leash{X}{\delta}, \leash{\cl(X^\compl)}{\delta} \}.  \]


\section{Persistent Homology of Grayscale Images}\label{sec:PH_grayscale}

Using the definitions and tools outlined in Section~\ref{sec:math_background}, we compare the persistence diagrams of a real-valued function to the persistence diagrams of its digital approximation at a given voxel size.  This is achieved by bounding the difference between the function and its approximation and invoking the stability theorem for persistence diagrams in Proposition~\ref{claim2:gr_bound_higherD}. 
This result has a number of simple corollaries giving us bounds on the differences between persistence diagrams of digital approximations for different voxel sizes, and for the case that the ground-truth function is Lipschitz continuous. 
 


\begin{proposition}\label{claim2:gr_bound_higherD}
	Let $\cf$ be a tame $\mu$-integrable function, and $\dfr$ its digital approximation on a grid with spacing $r$.  Suppose there is a positive number $M_r>0$ such that for every closed voxel $\cl(\sigma(i))$ in the domain of $\cf$, the difference $$\sup_{ x \in \cl(\sigma(i))}  \cf(x) - \inf_{x \in \cl(\sigma(i))} \cf(x) \le M_r.$$
	Then $\dB(PD(\cf), PD(\dfr)) \leq M_r$.
\end{proposition}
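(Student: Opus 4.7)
The plan is to reduce the bottleneck bound to a pointwise $L^\infty$ bound and then invoke the stability theorem for persistence diagrams stated in Section~\ref{sec:math_background}. Concretely, I aim to show that $\norminf{\cf - \dfr} \le M_r$, from which the stability theorem immediately gives $\dB(\PD{\cf}, \PD{\dfr}) \le M_r$.

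First I would handle the generic case of a point $x$ lying in the open interior of some voxel $\sigma(i)$. By definition, $\dfr(x) = r^{-d}\int_{\sigma(i)} \cf\, d\mu$ is the average value of $\cf$ over $\sigma(i)$, so it lies between $\inf_{\cl(\sigma(i))} \cf$ and $\sup_{\cl(\sigma(i))} \cf$. Since $\cf(x)$ also lies between these two quantities, their difference is bounded by $\sup_{\cl(\sigma(i))}\cf - \inf_{\cl(\sigma(i))}\cf \le M_r$.

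Next I would handle the measure-zero set of points lying on voxel faces. For such a point $x$, the definition sets $\dfr(x)$ equal to the minimum of the average values $\dfr|_{\sigma(k)}$ taken over the finitely many voxels $\sigma(k)$ with $x \in \partial \sigma(k)$. Pick a voxel $\sigma(k_0)$ that attains this minimum; then $x \in \cl(\sigma(k_0))$ as well, so both $\cf(x)$ and $\dfr(x)$ lie in the interval $[\inf_{\cl(\sigma(k_0))}\cf, \sup_{\cl(\sigma(k_0))}\cf]$, and we again get $|\cf(x) - \dfr(x)| \le M_r$ by the hypothesis applied to the voxel $\sigma(k_0)$.

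Combining the two cases yields $\norminf{\cf - \dfr} \le M_r$ over all of $\R^d$, and applying the stability theorem for persistence diagrams (recalled in Section~\ref{sec:math_background}) closes the argument. I do not expect any serious obstacle here; the only subtle point is the treatment of voxel-face values, which is purely a bookkeeping matter since the face convention was chosen precisely so that $\dfr(x)$ still equals the average over some voxel whose closure contains $x$. Tameness of $\cf$ (and hence of $\dfr$, which is piecewise constant on a locally finite cubical complex) is needed only to ensure that the persistence diagrams and their bottleneck distance are well-defined so that the stability theorem applies.
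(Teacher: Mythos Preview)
Your proposal is correct and follows essentially the same approach as the paper's proof: both argue that for each point $x$ one can select a voxel whose closure contains $x$ and whose average value equals $\dfr(x)$, so that both $\cf(x)$ and $\dfr(x)$ lie in the interval $[\inf_{\cl(\sigma)}\cf,\ \sup_{\cl(\sigma)}\cf]$, giving $\norminf{\cf-\dfr}\le M_r$ and hence the bottleneck bound via stability. The paper merely collapses your two cases into a single choice of voxel, but the content is identical.
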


\begin{proof}
For any $x$ in the domain of $\cf$, there is at least one voxel $\sigma(i)$ with side-length $r$ such that $x\in \cl(\sigma(i))$.  If $x$ belongs to more than one closed voxel, choose the one for which $\dfr(x) = \dfr(y)$ for $y \in \sigma(i)$.
Let 
	$$M_x := \sup_{y\in \cl(\sigma(i))} \cf(y),\quad \text{and}\quad m_x := \inf_{y\in \cl(\sigma(i))} \cf(y).
	$$
We have $M_r \ge M_x - m_x$.
The definition of $\dfr$ and choice of $\sigma(i)$ ensures that $m_x \leq \dfr(x) \leq M_x$. 	
It follows that 
	\begin{align*}
		|\cf(x) - \dfr(x)| &\leq \max\{\cf(x) - m_x, M_x-\cf(x)\}\\
		& \leq M_x - m_x\\
		& \leq M_r. 
	\end{align*}
We conclude that $\lVert \cf - \dfr \rVert_{\infty} \leq M_r$.
Thus, by the stability theorem \cite{cohen2007stability} the bottleneck distance between the persistence diagrams is also bounded by $M_r$.
\end{proof}
 

 
We now compare the persistence diagrams of the digital approximations $\dfr$ at different resolutions. 
Given two grid spacings $r_1>0$ and $r_2>0$, suppose $r_2$ is divisible by $r_1$, i.e., $a:= r_2/r_1 \in \mathbb{N}$, then any voxel of size $r_2$ contains $a^d$ voxels of size $r_1$. 
Since the measure of the voxel faces is zero, $\cf_{r_2}$ is the $r_2$ digital approximation of $\cf_{r_1}$. 
Therefore, Proposition~\ref{claim2:gr_bound_higherD}, applied to $\cf_{r_1}$, implies an upper bound for the distance between the persistence diagrams of the two digital approximations $\cf_{r_1}$ and $\cf_{r_2}$ as follows.

\begin{corollary}
\label{finer-c1}
Suppose $r_2> r_1 >0$, $a ={r_2}/{r_1} \in \mathbb{N}$, and choose $M\ge 0$ such that for all $d$-voxels $\sigma(i)$ of size $r_2$, the difference 
{\small
\begin{align*}
    \max_{\sigma' \subseteq \sigma(i) } \{ \cf_{r_1} (x) :\; x \in \cl(\sigma') \} 
    - \min_{\sigma' \subseteq \sigma(i) } \{ \cf_{r_1} (x) :\; x \in \cl(\sigma') \}  \leq M ,
\end{align*}}
where $\sigma'$ refers to voxels of size $r_1$.  
Then $\dB(PD(\cf_{r_1}),PD(\cf_{r_2})) \le M$.
\end{corollary}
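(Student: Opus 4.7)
The plan is to realize this as a direct application of Proposition~\ref{claim2:gr_bound_higherD} with the ``continuous function'' role played not by $\cf$ itself but by its finer digital approximation $\cf_{r_1}$, and the ``digital approximation'' role played by $\cf_{r_2}$. The reduction the text suggests rests on the identity $\cf_{r_2} = (\cf_{r_1})_{r_2}$, so the first step is to verify this identity, and the second is to translate the voxel-supremum hypothesis of the proposition into the $r_1$-voxel max/min form stated in the corollary.

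For the identity, I would fix an $r_2$-voxel $\sigma(i)$ and note that it decomposes (up to a measure-zero union of faces) into exactly $a^d$ disjoint $r_1$-voxels $\sigma' \subseteq \sigma(i)$. Since $\cf_{r_1}$ is constant on each open $\sigma'$ with value $r_1^{-d}\int_{\sigma'}\cf\, d\mu$, additivity of the Lebesgue integral gives
\[
r_2^{-d}\int_{\sigma(i)} \cf_{r_1}\, d\mu
= r_2^{-d} \sum_{\sigma' \subseteq \sigma(i)} r_1^d \cdot r_1^{-d}\!\int_{\sigma'} \cf\, d\mu
= r_2^{-d}\int_{\sigma(i)} \cf\, d\mu,
\]
which is exactly $\cf_{r_2}$ on $\sigma(i)$. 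On $r_2$-voxel faces, both $(\cf_{r_1})_{r_2}$ and $\cf_{r_2}$ are defined by the same ``min over adjacent voxels'' convention, so the two functions agree everywhere. I should also remark that $\cf_{r_1}$ is piecewise constant on a cubical grid and hence tame and $\mu$-integrable on any bounded domain, so it is a legitimate input for Proposition~\ref{claim2:gr_bound_higherD}.

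Next I would check the bound hypothesis. Because $\cf_{r_1}$ is constant on each open $r_1$-voxel $\sigma'$ and takes the minimum of adjacent voxel values on faces, the supremum of $\cf_{r_1}$ over the closed $r_2$-voxel $\cl(\sigma(i))$ equals the maximum of $\cf_{r_1}$ over the $a^d$ sub-voxels $\sigma' \subseteq \sigma(i)$, and similarly its infimum equals the minimum over those sub-voxels. Hence
\[
\sup_{x \in \cl(\sigma(i))} \cf_{r_1}(x) - \inf_{x \in \cl(\sigma(i))} \cf_{r_1}(x)
\]
coincides with the quantity assumed bounded by $M$ in the statement of the corollary. Applying Proposition~\ref{claim2:gr_bound_higherD} to $\cf_{r_1}$ at grid spacing $r_2$ then yields $\dB(\mathrm{PD}(\cf_{r_1}), \mathrm{PD}((\cf_{r_1})_{r_2})) \le M$, and by the identity above this is $\dB(\mathrm{PD}(\cf_{r_1}), \mathrm{PD}(\cf_{r_2})) \le M$.

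The only genuinely delicate point, and the one I would write out most carefully, is the bookkeeping on voxel faces: one must confirm that the ``minimum over adjacent voxels'' convention used to define both $\cf_{r_1}$ and $\cf_{r_2}$ on faces is compatible with the nested grid structure (so that taking the $r_2$-approximation of the already-discretized $\cf_{r_1}$ reproduces $\cf_{r_2}$ on faces too), and that it does not inflate the sup--inf gap beyond the maximum/minimum already taken over the interior $r_1$-sub-voxels. Everything else is a one-line invocation of the previous proposition.
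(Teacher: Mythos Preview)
Your proposal is correct and follows exactly the route the paper takes: the paragraph immediately preceding the corollary is its entire proof, observing that since voxel faces have measure zero, $\cf_{r_2}$ is the $r_2$-digital approximation of $\cf_{r_1}$, and then invoking Proposition~\ref{claim2:gr_bound_higherD} with $\cf_{r_1}$ in the role of the base function. Your write-up simply makes explicit the integral identity and the face bookkeeping that the paper leaves implicit.
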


For the special case when $\cf$ is Lipschitz continuous, we can estimate the constant $M$ in Proposition~\ref{claim2:gr_bound_higherD} by the Lipschitz constant of $\cf$. 
Recall that a function $\cf\colon\R^d \to \R$ is Lipschitz continuous with constant $L >0$ if $|\cf(x)-\cf(y)|\leq L\|x-y\|$.
For brevity, we say $\cf$ is $L$-Lipschitz continuous.  
Note that the CSEDT $\dt{X}$ is Lipschitz continuous with constant $1$.  

\begin{corollary}\label{claim:gr_bound_higherD}
	Suppose $\cf\colon\R^d\to \R$ is $L$-Lipschitz continuous.   
	Then $\dB(PD(\cf), PD(\dfr)) \leq Lr\sqrt{d}$.
\end{corollary}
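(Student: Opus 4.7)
The plan is to derive this directly from Proposition~\ref{claim2:gr_bound_higherD} by identifying an explicit constant $M_r$ that works when $\cf$ is $L$-Lipschitz. The proposition already reduces the problem to controlling the oscillation $\sup_{x\in \cl(\sigma(i))}\cf(x) - \inf_{x\in \cl(\sigma(i))}\cf(x)$ of $\cf$ over each closed voxel of side length $r$, so the only task is to upper-bound this oscillation in terms of $L$, $r$ and $d$.

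First I would note that the closed voxel $\cl(\sigma(i))$ is a closed $d$-cube of side length $r$, whose Euclidean diameter equals the length of its main diagonal, namely $r\sqrt{d}$. Then for any two points $x,y \in \cl(\sigma(i))$, the Lipschitz condition gives
\[
|\cf(x)-\cf(y)| \leq L\,\|x-y\| \leq L r\sqrt{d}.
\]
Taking the supremum over such $x$ and the infimum over such $y$ (or equivalently, noting that $\cl(\sigma(i))$ is compact so the sup and inf are attained at some pair of points whose distance is at most $r\sqrt{d}$), I get
\[
\sup_{x\in \cl(\sigma(i))}\cf(x) - \inf_{y\in \cl(\sigma(i))}\cf(y) \leq Lr\sqrt{d}.
\]

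Since this bound is independent of the voxel, it holds uniformly over all voxels, so the hypothesis of Proposition~\ref{claim2:gr_bound_higherD} is satisfied with $M_r = Lr\sqrt{d}$. Applying that proposition immediately yields $\dB(\mathrm{PD}(\cf),\mathrm{PD}(\dfr)) \leq Lr\sqrt{d}$, as claimed.

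I do not anticipate a genuine obstacle: the entire argument is a one-line diameter computation feeding into the already-established Proposition~\ref{claim2:gr_bound_higherD}. The only mild thing to keep track of is that one should use the closed voxel (so that the extremal points are actually attained, and so that Proposition~\ref{claim2:gr_bound_higherD} applies verbatim) rather than the open voxel $\sigma(i)$, but since $\cf$ is continuous the sup and inf agree on $\sigma(i)$ and $\cl(\sigma(i))$ anyway.
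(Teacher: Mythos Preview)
Your proposal is correct and follows essentially the same approach as the paper: bound the oscillation of $\cf$ over each closed voxel by $Lr\sqrt{d}$ using the Lipschitz condition and the voxel diameter, then invoke Proposition~\ref{claim2:gr_bound_higherD} with $M_r = Lr\sqrt{d}$. The paper phrases the oscillation bound via compactness (the max and min are attained at points $x_1,x_2$ in the closed voxel), which is exactly the alternative you mention in parentheses.
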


\begin{proof}
For any voxel $\sigma(i)$, since the function $\cf$ is continuous and $\cl(\sigma(i))$ is compact, there are $x_1,x_2 \in \cl(\sigma(i))$ such that 
$$
	\cf(x_1) = \max_{y\in \cl(\sigma(i))} \cf(y),\quad \text{and}\quad \cf(x_2) = \min_{y\in \cl(\sigma(i))} \cf(y).
$$
It follows that 
\begin{align*}
	\max_{y\in \cl(\sigma(i))} \cf(y) - \min_{y\in \cl(\sigma(i))} \cf(y) & = \lvert \cf(x_1) - \cf(x_2) \rvert \\
	  & \leq L \lVert x_1- x_2 \rVert \leq Lr\sqrt{d}.
\end{align*}
From Proposition~\ref{claim2:gr_bound_higherD}, we see that the bottleneck distance between the persistence diagrams is bounded by $Lr\sqrt{d}$.
\end{proof}


Under the Lipschitz continuity assumption, the bottleneck distance between persistence diagrams of two digital approximations is bounded as follows. 

\begin{corollary}
\label{finer-c2}
Suppose $\cf\colon\R^d \to \R$ is $L$-Lipschitz continuous. Then, $\dB(PD(\cf_{r_1}),PD(\cf_{r_2})) \le Lr_2\sqrt{d}$, when $r_2 > r_1 >0$ and $r_2$ is divisible by $r_1$.
\end{corollary}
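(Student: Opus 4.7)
The plan is to reduce this corollary to Corollary~\ref{finer-c1} by exhibiting an admissible value of $M$ that depends only on $L$, $r_2$, and $d$. Specifically, I will show that $M := L r_2 \sqrt{d}$ satisfies the hypothesis of Corollary~\ref{finer-c1}, which immediately yields the desired bound $\dB(PD(\cf_{r_1}),PD(\cf_{r_2})) \le L r_2 \sqrt{d}$.

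To verify the hypothesis, fix a voxel $\sigma(i)$ of side length $r_2$, together with any two of its $r_1$-subvoxels $\sigma'_1, \sigma'_2 \subseteq \sigma(i)$ and points $x_1 \in \cl(\sigma'_1)$, $x_2 \in \cl(\sigma'_2)$. By definition of the digital approximation, $\dfr[r_1](x_k)$ equals the mean of $\cf$ over $\sigma'_k$ (or the minimum of such means taken over the adjacent subvoxels if $x_k$ lies on a face, but this does not affect the argument below). The key observation is that for any $y_1 \in \sigma'_1$ and $y_2 \in \sigma'_2$, both points lie inside the single $r_2$-voxel $\sigma(i)$, so $\lVert y_1 - y_2 \rVert \le r_2 \sqrt{d}$, and $L$-Lipschitz continuity of $\cf$ gives $|\cf(y_1) - \cf(y_2)| \le L r_2 \sqrt{d}$. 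Averaging this inequality over $y_1 \in \sigma'_1$ and $y_2 \in \sigma'_2$ (against the normalized Lebesgue measure) shows that $|\dfr[r_1](x_1) - \dfr[r_1](x_2)| \le L r_2 \sqrt{d}$. Taking the supremum over all such $x_1, x_2$ ranging over subvoxels of $\sigma(i)$ produces the required bound on the oscillation of $\dfr[r_1]$ within $\cl(\sigma(i))$.

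There is no serious obstacle here; the only subtlety is the boundary case where $x_1$ or $x_2$ lies on a shared voxel face, but since the definition selects the minimum over adjacent voxels, the value is still attained as a mean of $\cf$ over some $r_1$-subvoxel of $\sigma(i)$, so the averaging argument still applies. Note also that a naive triangle-inequality approach via Corollary~\ref{claim:gr_bound_higherD} applied to both $r_1$ and $r_2$ would only yield the weaker bound $L(r_1 + r_2)\sqrt{d}$, which is why invoking Corollary~\ref{finer-c1} directly is essential.
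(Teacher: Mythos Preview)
Your reduction to Corollary~\ref{finer-c1} with $M = Lr_2\sqrt{d}$ is exactly the intended route (the paper states this corollary without proof, as an immediate specialisation of Corollary~\ref{finer-c1}). However, your treatment of the boundary case is incorrect. You assert that when $x_k$ lies on a voxel face, ``the value is still attained as a mean of $\cf$ over some $r_1$-subvoxel of $\sigma(i)$.'' That fails when $x_k \in \partial\sigma(i)$: some of the $r_1$-voxels adjacent to $x_k$ then lie \emph{outside} $\sigma(i)$, and the minimum defining $\cf_{r_1}(x_k)$ may well be realised by one of those external voxels $\tau$. In that case your premise ``both points lie inside the single $r_2$-voxel $\sigma(i)$'' is false for $y_2 \in \tau$, and the pointwise bound $\|y_1 - y_2\| \le r_2\sqrt{d}$ can fail by as much as $r_1\sqrt{d}$.

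The repair is short: compare voxel means via their centres rather than via arbitrary points. For any two $r_1$-voxels $\tau_1,\tau_2$ with centres $c_1,c_2$, the substitution $y = c_k + z$ gives
\[
r_1^{-d}\!\int_{\tau_1}\!\cf - r_1^{-d}\!\int_{\tau_2}\!\cf \;=\; r_1^{-d}\!\int_{(-r_1/2,\,r_1/2)^d}\!\bigl(\cf(c_1+z)-\cf(c_2+z)\bigr)\,dz \;\le\; L\|c_1-c_2\|.
\]
Taking $\sigma(i) = (0,r_2)^d$, the supremum of $\cf_{r_1}$ over $\cl(\sigma(i))$ is the mean over some $\tau_1 \subseteq \sigma(i)$, so $c_1 \in [r_1/2,\,r_2-r_1/2]^d$; the infimum is the mean over some $\tau_2$ with $\cl(\tau_2)\cap\cl(\sigma(i)) \ne \emptyset$, so $c_2 \in [-r_1/2,\,r_2+r_1/2]^d$. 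Each coordinate of $c_1-c_2$ then has absolute value at most $r_2$, whence $\|c_1-c_2\| \le r_2\sqrt{d}$, and the hypothesis of Corollary~\ref{finer-c1} holds with $M = Lr_2\sqrt{d}$ as you claimed.
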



When $r_2$ is not an integer multiple of $r_1$, it is still possible to bound the bottleneck distance between the persistent diagrams of $\cf_{r_1}$ and $\cf_{r_2}$ using Proposition~\ref{claim2:gr_bound_higherD} and triangle inequality. However, in this case, the bound is not as tight as the bound in Corollary~\ref{finer-c2}.

\begin{corollary}
\label{finer-c3}
Suppose $\cf\colon\R^d \to \R$ is $L$-Lipschitz continuous. Then, $\dB(PD(\cf_{r_1}),PD(\cf_{r_2})) \le L(r_1+r_2)\sqrt{d}$, for all $r_1, r_2>0$.
\end{corollary}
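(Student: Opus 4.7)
The plan is to route through the continuous function $\cf$ itself rather than trying to relate $\cf_{r_1}$ and $\cf_{r_2}$ directly. Since $r_2$ need not be an integer multiple of $r_1$, the two digital grids do not nest, so Corollary~\ref{finer-c1} and Corollary~\ref{finer-c2} do not apply; however, the bottleneck distance satisfies the triangle inequality, and $\cf$ is a natural common reference point because Corollary~\ref{claim:gr_bound_higherD} controls $\dB(\kPD{}{\cf}, \kPD{}{\dfr})$ for any $r>0$.

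First I would invoke Corollary~\ref{claim:gr_bound_higherD} twice to conclude
\[
\dB(\kPD{}{\cf}, \kPD{}{\cf_{r_1}}) \le L r_1 \sqrt{d}
\quad\text{and}\quad
\dB(\kPD{}{\cf}, \kPD{}{\cf_{r_2}}) \le L r_2 \sqrt{d}.
\]
Both inequalities use only the assumed $L$-Lipschitz continuity of $\cf$, applied to voxels of side length $r_1$ and $r_2$ respectively, and the diameter bound $\|x_1-x_2\|\le r_k\sqrt{d}$ across a closed $d$-voxel. No further hypothesis on the grids is needed, in particular no divisibility.

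Next I would apply the triangle inequality for the bottleneck distance (standard, since $\dB$ is a metric on persistence diagrams, as follows from the stability theorem~\cite{cohen2007stability}) to obtain
\[
\dB(\kPD{}{\cf_{r_1}}, \kPD{}{\cf_{r_2}}) \le \dB(\kPD{}{\cf_{r_1}}, \kPD{}{\cf}) + \dB(\kPD{}{\cf}, \kPD{}{\cf_{r_2}}) \le L(r_1+r_2)\sqrt{d},
\]
which is the desired bound. There is essentially no obstacle here: the only subtlety is noting that the result requires $\cf$ itself (not merely $\cf_{r_1}$) to be $L$-Lipschitz, so that the Lipschitz bound holds uniformly across voxels of either size. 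This is exactly the hypothesis of the corollary, so the argument is a direct assembly of Corollary~\ref{claim:gr_bound_higherD} applied to each resolution plus one triangle-inequality step.
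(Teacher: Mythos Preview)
Your proof is correct and follows essentially the same approach as the paper: both route through $\cf$ as a common reference and combine the Lipschitz bound with a triangle inequality. The only cosmetic difference is that the paper applies the triangle inequality at the level of the $L_\infty$ norm (bounding $\|\cf_{r_1}-\cf_{r_2}\|_\infty$) and then invokes stability once, whereas you invoke Corollary~\ref{claim:gr_bound_higherD} twice and apply the triangle inequality at the level of the bottleneck distance; both yield the same bound.
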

\begin{proof}
From the proof of Proposition~\ref{claim2:gr_bound_higherD}, triangle inequality and Corollary~\ref{claim:gr_bound_higherD}, we have
\begin{align*}
    || f_{r_1} - f_{r_2}||_{\infty} &\le || f_{r_1} - f||_{\infty} + ||f_{r_2} - f||_{\infty}\\ 
    &\le Lr_1 \sqrt{d} + Lr_2 \sqrt{d} = L(r_1+r_2)\sqrt{d}.
\end{align*} 
By the stability theorem, the bottleneck distance between the persistence diagrams is also bounded by $L(r_1+r_2)\sqrt{d}$.
\end{proof}

\section{Persistent Homology of the Distance Transform of Binary Images}\label{sec:PH_distance_transform}
 
In this section, we consider $X \subseteq \R^d$ as a solid object that is imaged with voxel spacing $r$, yielding its digital approximation $X(r,t)$ for some choice of $0<t\leq 1$.
We derive bounds on the bottleneck distance between the persistence diagram of the DSEDT of $X(r,t)$, and the CSEDT of $X$, by first comparing both of them to the CSEDT of $X(r,t)$, as outlined in Fig.~\ref{fig:Sec4_diagram}.  
\begin{figure}[h!]
    \centering
 \[ \begin{tikzcd}
X\subseteq \R^d  
\arrow{r}{\textrm{discretize}}[swap]{r} \arrow{d}{CSEDT} &\genfrac{}{}{0pt}{1}{\textrm{grayscale}}{\textrm{image}} 
\, \rho_r\arrow{r}{\textrm{threshold}}[swap]{t} &\genfrac{}{}{0pt}{1}{\textrm{binary}}{\textrm{image}}
\, X(r,t) \arrow[swap,gray,near end]{dl}{CSEDT} \arrow[swap]{d}{DSEDT}\\%
\dt{X}\!:\R^d\to\R \arrow{d}{PH} & 
\textcolor{gray}{\dt{\tcx}\!:\R^d\to\R} 
&\genfrac{}{}{0pt}{1}{\textrm{grayscale}}{\textrm{image}} 
\, \tg \arrow[swap]{d}{PH}\\
\PD{\dt{X}} & 
& \PD{\tg} 
\end{tikzcd}
\]
    \caption{Diagram of our model for the digital approximation of a solid object $X$, and its geometric characterisation using signed Euclidean distance transforms and persistent homology. The top row is a simplified version of CT-imaging and segmentation.  
    In gray: In the proof of Thm.~\ref{thm:main}, instead of comparing the continuous distance transform $\dt{X}$ to the discrete distance transform $\tg$ directly, we compare both to the continuous distance transform $\dt{\tcx}$ of the discrete object.
    }
    \label{fig:Sec4_diagram}
\end{figure}
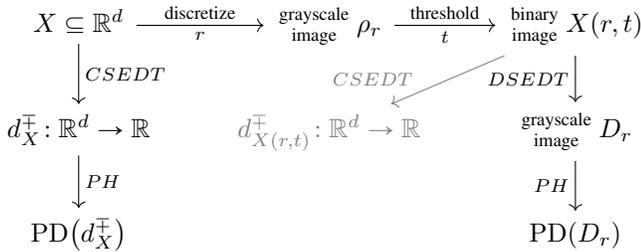

\subsection{Bounds using the reach or leash of $X$} \label{sec:bounds_reach_leash}

The most relevant result of this section is Corollary~\ref{cor:reach_main}, which states that when discretizing with voxel length $r < \frac{1}{\sqrt{d}} \mreach{X}$, the bottleneck distance is bounded by a constant times $r$. 
This follows directly from Theorem~\ref{thm:main}, which bounds the bottleneck distance in terms of the two-sided leash $\mleash{X}{\sqrt{d}r}$ when discretizing with any voxel length $r$, not necessarily smaller than the reach. 
Note that the bounds we give here do not depend on the threshold $t$. 
The results could be slightly improved by incorporating $t$ but we do not do so for the sake of brevity.

\begin{theorem}
\label{thm:main}
Given $X \subseteq \R^d$ with $X=\cl(\inter(X))$.
Let $\dt{X}$ be the CSEDT of $X$, $r>0, \ t\in(0,1]$, and $X(r,t)$ be its digital approximation.
Let $\tg = \DSEDT{X(r,t)}\colon \R^d \to \R$ denote the discrete signed Euclidean distance transform.
Then,
\begin{align*}
\dB(\PD{\dt{X}}, \PD{\tg}) \leq \mleash{X}{\sqrt{d} r} + 2 \sqrt{d} r
\end{align*}
\end{theorem}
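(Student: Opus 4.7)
The plan is to use the continuous signed Euclidean distance transform $\dt{\tcx}$ of the discretized object as a bridge between $\dt{X}$ and $\tg$, as suggested by the gray arrows in Fig.~\ref{fig:Sec4_diagram}. Setting $\delta := \sqrt{d}\,r$, the diameter of a voxel, I would combine the stability theorem with the triangle inequality to obtain
\[
\dB(\PD{\dt{X}}, \PD{\tg}) \le \norminf{\dt{X} - \dt{\tcx}} + \norminf{\dt{\tcx} - \tg},
\]
and bound each term separately, targeting $\norminf{\dt{X} - \dt{\tcx}} \le \mleash{X}{\delta} + \delta$ and $\norminf{\dt{\tcx} - \tg} \le \delta$.

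The second bound is a purely discrete estimate. For $x$ in a voxel $\sigma(i) \subseteq \tcx$, $\tg(x) = -r\min_j\|i-j\|$ is the scaled grid distance from the center of $\sigma(i)$ to the center of a nearest voxel contained in $\tcx^\compl$, whereas $\dt{\tcx}(x) = -d(x, \tcx^\compl)$ uses the true Euclidean distance. Since $x$ may be displaced from its voxel center by up to $\delta/2$, and the realizing point of $d(x, \tcx^\compl)$ can lie up to $\delta/2$ from the center of its voxel, two triangle inequalities give $|\dt{\tcx}(x) - \tg(x)| \le \delta$; a symmetric argument handles $x \in \tcx^\compl$.

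For the first bound I would first establish the geometric inclusions forced by the voxel diameter. A voxel containing a point $y \in \erosion{X}{\delta}$ lies inside the ball $B_\delta(y) \subseteq X$, so has $\rho_r = 1$ and is included in $\tcx$; thus $\erosion{X}{\delta} \subseteq \tcx$ regardless of $t$. Conversely, every voxel in $\tcx$ has $\rho_r > 0$ and hence intersects $X$, so $\tcx \subseteq X_\delta$. Analogous statements hold with $\cl(X^\compl)$ and $\tcx^\compl$ in place of $X$ and $\tcx$. These inclusions yield the one-sided Hausdorff bounds $\sup_{y\in X}d(y,\tcx) \le \leash{X}{\delta}$ and $\sup_{y\in\tcx}d(y,X) \le \delta$, together with the complementary analogs in which $\leash{\cl(X^\compl)}{\delta}$ replaces $\leash{X}{\delta}$. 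I would then apply Lemma~\ref{lemma:bound_with_suprema} together with a case analysis on the position of $x$. In the matching-sign cases ($x \in X \cap \tcx$ or $x \in X^\compl \cap \tcx^\compl$) the difference $|\dt{X}(x) - \dt{\tcx}(x)|$ is the absolute difference of two unsigned distances and is bounded by the corresponding max of the above suprema, hence by $\mleash{X}{\delta} + \delta$. In the mismatch cases (e.g.\ $x \in X \cap \tcx^\compl$) the two signed values have opposite signs and the difference splits as a sum: $d(x, X^\compl) \le \delta$, because the voxel containing $x$ has $\rho_r < 1$ and therefore intersects $X^\compl$, while $d(x, \tcx) \le \leash{X}{\delta}$ via $\erosion{X}{\delta} \subseteq \tcx$. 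Both cases produce the uniform bound $\mleash{X}{\delta} + \delta$.

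The main obstacle I expect is the mismatch case. A naive bound of the form $\norminf{\dt{A}-\dt{B}} \le 2\,\mdh{A}{B}$ loses a factor of two and would produce $2\mleash{X}{\delta}$ in place of $\mleash{X}{\delta}$, overshooting the target. The bound is saved by the asymmetry baked into the inclusions: the ``thickening'' side of the Hausdorff comparison is always controlled by $\delta$ (voxels in $\tcx$ meet $X$ and voxels in $\tcx^\compl$ meet $X^\compl$), so only the ``erosion'' side requires the leash, and in the mismatch region one of the two summed distances is therefore automatically $\delta$ rather than the leash.
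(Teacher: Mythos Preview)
Your proposal is correct and follows essentially the same route as the paper: the bridge via $\dt{\tcx}$, the discrete estimate $\norminf{\dt{\tcx}-\tg}\le\sqrt d\,r$ (Lemma~\ref{lemma:DSEDTvsCSEDT}), the four one-sided Hausdorff bounds (Lemma~\ref{lemma:bounding_suprema_by_leash}), and the four-case analysis that pairs each leash term with a $\sqrt d\,r$ term rather than with a second leash. Note that the case analysis you spell out is exactly the content of Lemma~\ref{lemma:generalizeCS-E-H}, which is already folded into Lemma~\ref{lemma:bound_with_suprema}; once you invoke the latter you may simply plug in the four suprema bounds and conclude.
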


The proof requires the following lemmas.
The first lemma shows that the DSEDT of $X(r,t)$ is a good approximation to the CSEDT of $X(r,t)$. 
\begin{lemma}
\label{lemma:DSEDTvsCSEDT}
With the notation of Theorem~\ref{thm:main},
\begin{align*}
\norminf{\dt{\tcx} - \tg } \leq \sqrt{d}r.
\end{align*}
\end{lemma}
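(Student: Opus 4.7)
The plan is to prove the stronger pointwise estimate $|\dt{\tcx}(x) - \tg(x)| \leq \sqrt{d}\, r$ for every $x$ in $R$, from which the claimed $L^\infty$ bound is immediate. The single geometric fact driving everything is that each closed voxel is a $d$-cube of side $r$, hence has diameter $\sqrt{d}\, r$; in particular any two points in a common voxel differ by at most $\sqrt{d}\, r$, and any point in a voxel is within $\tfrac{\sqrt{d}}{2}\, r$ of that voxel's center.

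First I would reduce to the case that $x$ lies in the open interior of a voxel $\sigma(i)$ with $\sigma(i) \subseteq \tcx$. The case $\sigma(i) \subseteq \tcx^\compl$ is symmetric because both transforms flip sign together, and points on voxel faces are handled afterwards since both $\tg$ and $\dt{\tcx}$ attain their values there either by definition (min over adjacent voxels for $\tg$) or by continuity ($\dt{\tcx}$ is $1$-Lipschitz). Writing $\tg(x) = -r\, d(i, j^*)$ for a minimizing voxel $\sigma(j^*) \not\subseteq \tcx$, and $\dt{\tcx}(x) = -d(x, y^*)$ for a nearest boundary point $y^* \in \partial \tcx$, with $c_i$ and $c_{j^*}$ denoting voxel centers (so $d(c_i, c_{j^*}) = r\, d(i, j^*)$), the bound splits into two triangle-inequality estimates.

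For $r\, d(i, j^*) \leq d(x, \partial \tcx) + \sqrt{d}\, r$, I first argue that $y^* \in \cl(\sigma(k))$ for some $\sigma(k) \not\subseteq \tcx$: since $\tcx^\compl$ is open and $y^* \in \partial \tcx$ is a limit point of $\tcx^\compl$, every neighborhood of $y^*$ meets the open interior of some voxel with $\rho_r < t$, and by finiteness one such voxel's closure contains $y^*$. Minimality of $j^*$ then gives
\begin{align*}
r\, d(i, j^*) \leq r\, d(i, k) &= d(c_i, c_k) \\
&\leq d(c_i, x) + d(x, y^*) + d(y^*, c_k) \\
&\leq \tfrac{\sqrt{d}}{2}\, r + d(x, \partial \tcx) + \tfrac{\sqrt{d}}{2}\, r.
\end{align*}
For the reverse inequality $d(x, \partial \tcx) \leq r\, d(i, j^*) + \sqrt{d}\, r$, I would observe that $\sigma(j^*)$, being an open voxel disjoint from all other voxels' closures, lies entirely in $\tcx^\compl$. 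Thus the straight segment from $c_i \in \tcx$ to $c_{j^*} \in \tcx^\compl$ must cross $\partial \tcx$ at some point $z$, and then
\[
d(x, \partial \tcx) \leq d(x, z) \leq d(x, c_i) + d(c_i, z) \leq \tfrac{\sqrt{d}}{2}\, r + d(c_i, c_{j^*}) = \tfrac{\sqrt{d}}{2}\, r + r\, d(i, j^*).
\]

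The proof is mostly bookkeeping, so there is no real obstacle; the main subtlety is verifying the two containments ($y^*$ lying in some non-$\tcx$ voxel's closure, and $\sigma(j^*)$ lying entirely in $\tcx^\compl$) directly from the union-of-closed-voxels definition of $\tcx$. Everything else — the face case, the symmetric case, and the slack from centers versus arbitrary points — is absorbed by the $\sqrt{d}\, r$ budget.
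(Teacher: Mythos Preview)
Your proof is correct and follows essentially the same approach as the paper's: both arguments fix a point in $\tcx$, bound one direction by routing the triangle inequality through the nearest boundary point and a nearby complement-voxel center (each contributing $\tfrac{\sqrt{d}}{2}r$), and bound the other direction by taking the minimizing complement-voxel center and intersecting a segment with $\partial\tcx$. The only cosmetic differences are that the paper runs its second segment from $p$ itself to the complement center (yielding the slightly sharper $\tfrac{\sqrt{d}}{2}r$ in that direction), and handles face points by selecting the adjacent voxel of minimal $\tg$-value rather than via your limit argument; neither affects the final bound.
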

\begin{proof}
Let $p \in \R^d$ and assume $p \in \tcx$; the proof for $p \in \tcx^\compl$ works analogously.
Let $y$ be a closest point of $\partial \tcx$ to $p$, i.e.\ $\dt{\tcx}(p) = - d(p, \partial \tcx) = - d(p,y)$, see Fig.~\ref{fig:proof_DSEDT_CSEDT}.
\begin{figure}[hb]
    \centering
    \includegraphics[width = 0.55\columnwidth]{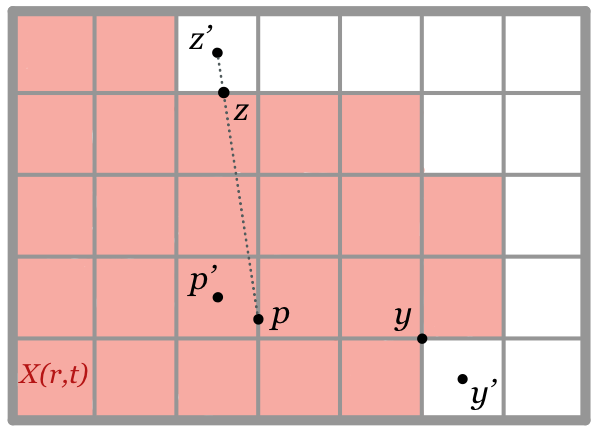}
    \caption{Illustration of the variables appearing in the proof of Lemma~\ref{lemma:DSEDTvsCSEDT}.}
    \label{fig:proof_DSEDT_CSEDT}
\end{figure}
As $y \in \partial \tcx$, there exists a voxel $\sigma(i) \subseteq \tcx^\compl$ with voxel center $y^\prime$ of distance $d(y,y^\prime) \leq \frac{1}{2} \sqrt{d} r$. 
Let $\sigma(j)$ be the voxel with $p$ in its closure with minimal filtration value (in case there are several), i.e.\ the voxel giving filtration value $\tg(p)=\tg(\sigma(j))$ to $p$, and let $p^\prime$ be its center. 
Thus, $d(p,p^\prime)\leq \frac{1}{2} \sqrt{d} r$. Hence,
\begin{align*}
    -\tg(p) &= -\tg(\sigma(j)) \\
     &= \min\{ d(p^\prime,c) \ | \ c \textrm{ voxel center of } \sigma(k) \subseteq \tcx^\compl \} \\
     &\leq d(p^\prime,y^\prime) \leq d(p^\prime,p) + d(p,y) + d(y,y^\prime) \\
     &\leq \frac{1}{2}\sqrt{d}r - \dt{\tcx}(p) + \frac{1}{2}\sqrt{d}r,
\end{align*}
yielding $\dt{\tcx}(p)-\tg(p) \leq \sqrt{d}r$.

Let $z^\prime$ be the voxel center minimizing 
$$\min\{ d(p^\prime,c) \ | \ c \textrm{ voxel center of } \sigma(k) \subseteq \tcx^\compl \} = -\tg(p).$$ 
As $z^\prime \in \tcx^\compl$ and $p \in \tcx$, there exists a $z \in \partial \tcx$ on the straight line segment between $z^\prime$ and $p$. Hence,
\begin{align*}
    - \dt{\tcx}(p) &= d(p,\partial \tcx) \leq d(p,z) \leq d(p,z^\prime) \\
    &\leq d(p,p^\prime) + d(p^\prime,z^\prime) \leq \frac{1}{2}\sqrt{d}r - \tg(p),
\end{align*}
yielding $-(\dt{\tcx}(p)-\tg(p)) \leq \frac{1}{2}\sqrt{d}r$.
\end{proof}

The authors of \cite{cohen2007stability} mention that, by definition,
the Hausdorff distance between two sets equals the $L_\infty$-distance between the (unsigned) Euclidean distance transforms of these two sets. 
For the proof of Theorem~\ref{thm:main} an extension of this to the \emph{signed} Euclidean distance transform is needed, see Lemma~\ref{lemma:generalizeCS-E-H}. Note that the bound is between the maximum and the sum of the two terms $\dH(A,B)$ and $\dH(A^\compl,B^\compl)$.
\begin{lemma}
\label{lemma:generalizeCS-E-H}
Let $A,B \subseteq \R^d$, then
\begin{align*}
\norminf{\dt{A}-\dt{B}} &\leq \max \{ \sup_{b\in B^\compl} d(A^\compl,b) + \sup_{a\in A} d(a,B), \\
& \quad \quad \quad \quad \quad \sup_{b\in B} d(A,b) + \sup_{a\in A^\compl} d(a,B^\compl) \}
\end{align*}
\end{lemma}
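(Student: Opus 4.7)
My plan is to prove the inequality pointwise and then take the supremum over $p \in \R^d$. The starting step is to establish the unifying identity
\[
\dt{X}(p) \;=\; d(p,X) - d(p,X^\compl),
\]
valid for every set $X \subseteq \R^d$ and every point $p$. If $p \in \inter(X)$, then $d(p,X)=0$, and since any straight segment from $p$ into $X^\compl$ must cross $\partial X$ by connectedness, $d(p,X^\compl) = d(p,\partial X)$; hence the right-hand side equals $-d(p,\partial X) = \dt{X}(p)$. The case $p \in \inter(X^\compl)$ is symmetric, and for $p \in \partial X$ both $d(p,X)$ and $d(p,X^\compl)$ vanish, matching $\dt{X}(p)=0$.

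With this identity in hand, for any $p$ one writes
\[
\dt{A}(p) - \dt{B}(p) \;=\; \bigl[d(p,A) - d(p,B)\bigr] \;+\; \bigl[d(p,B^\compl) - d(p,A^\compl)\bigr],
\]
and bounds each bracket by a one-sided triangle-inequality trick. For the first bracket I would pick a sequence $b_n \in B$ with $d(p,b_n) \to d(p,B)$; the triangle inequality gives $d(p,A) \le d(p,b_n) + d(b_n,A) \le d(p,b_n) + \sup_{b \in B} d(A,b)$, and passing to the limit yields $d(p,A) - d(p,B) \le \sup_{b \in B} d(A,b)$. An analogous argument applied to a sequence in $A^\compl$ approaching $d(p,A^\compl)$ yields $d(p,B^\compl) - d(p,A^\compl) \le \sup_{a \in A^\compl} d(a, B^\compl)$. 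Adding the two inequalities, $\dt{A}(p) - \dt{B}(p) \le \sup_{b \in B} d(A,b) + \sup_{a \in A^\compl} d(a,B^\compl)$, which is the second expression inside the maximum in the claim.

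Exchanging the roles of $A$ and $B$ in the same argument, and noticing that the two brackets swap signs, gives the symmetric bound $\dt{B}(p) - \dt{A}(p) \le \sup_{a \in A} d(a,B) + \sup_{b \in B^\compl} d(A^\compl,b)$, which is the first expression inside the maximum. Combining these two one-sided bounds yields $|\dt{A}(p) - \dt{B}(p)| \le \max\{M_1,M_2\}$ at every $p$, and taking the supremum over $p$ finishes the proof.

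The step I expect to take the most care is the unifying identity on the boundary $\partial X$ (especially distinguishing whether $p \in X$ or $p \in X^\compl$ when computing $d(p,X)$ and $d(p,X^\compl)$); after that the argument is a direct triangle-inequality computation. The reason the asymmetric sup-terms arise rather than the usual symmetric $\dH(A,B)$ is that in each bracket one must choose which side of the triangle inequality to approximate — selecting the sequence in $B$ (not in $A$) for the first bracket and in $A^\compl$ (not in $B^\compl$) for the second is precisely what produces the specific sums appearing in the statement.
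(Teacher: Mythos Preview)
Your proof is correct and takes a genuinely different route from the paper. The paper argues by a four--way case split on the membership of $p$ in $A$ and in $B$: in the two ``same'' cases ($p\in A\cap B$ or $p\in A^\compl\cap B^\compl$) the signed difference reduces to a difference of unsigned distances and is bounded by a single Hausdorff-type term; in the two ``mixed'' cases ($p\in A\setminus B$ or $p\in B\setminus A$) the signed difference is a \emph{sum} of two unsigned distances, each bounded separately, producing exactly the two sums in the statement. Your approach instead establishes the global identity $\dt{X}(p)=d(p,X)-d(p,X^\compl)$ once and for all, after which the lemma follows from two straight applications of the one-sided triangle inequality for unsigned distance functions, with no case analysis at all. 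This is cleaner and shorter. What the paper's decomposition buys is visibility into \emph{where} the slack sits: the later tightening (Remark~\ref{remark:tighteningCS-E-H} and Lemma~\ref{lemma:app:remark:tighteningCS-E-H}) explicitly revisits only Cases~3 and~4 under a reach hypothesis, leaving Cases~1 and~2 untouched; that refinement would be less transparent to locate in your unified argument.
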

\begin{proof}
Let $d_A$ denote the unsigned Euclidean distance transform defined as $d_A(p) = d(p,A)$, for any $p \in \R^d$. 
As mentioned above, we have by definition that $\norminf{\dtonesided{A}-\dtonesided{B}} = \dH(A,B)$. 
To bound $\norminf{\dt{A}-\dt{B}}$, we consider four cases for the point $p\in\R^d$.\par  
   \textit{Case 1:} $p \in A , \ p \in B$.
   Then $\absolute{\dt{A}(p)-\dt{B}(p)} = \absolute{-\dtonesided{A^\compl}(p)+\dtonesided{B^\compl} (p)} \leq \norminf{\dtonesided{B^\compl}-\dtonesided{A^\compl}} = \dH(A^\compl, B^\compl) = \max \{ \sup_{b\in B^\compl} d(A^\compl,b), \ \sup_{a\in A^\compl} d(a,B^\compl) \}$.\par 
   \textit{Case 2:} $p \notin A ,\  p \notin B$.
   Analogously $\absolute{\dt{A}(p)-\dt{B}(p)} \leq \max \{ \sup_{a\in A} d(a,B), \ \sup_{b\in B} d(A,b) \} $.\par 
   \textit{Case 3:} $p \in A ,\  p \notin B$. 
   Then $\absolute{\dt{A}(p)-\dt{B}(p)} =
   \absolute{-d(A^\compl,p) - d(p,B)}
   = d(A^\compl,p) + d(p,B) \leq \sup_{b\in B^\compl} d(A^\compl,b) + \sup_{a\in A} d(a,B)$.\par 
   \textit{Case 4:} $p \notin A ,\  p \in B$.
   Analogously $\absolute{\dt{A}(p)-\dt{B}(p)} \leq \sup_{b\in B} d(A,b) + \sup_{a\in A^\compl} d(a,B^\compl)$. 
\end{proof}
\begin{remark} \label{remark:tighteningCS-E-H}
In the special case that $\max \{ \dH(A,B),\dH(A^\compl,B^\compl) \} < \mreach{A}$, Lemma~\ref{lemma:generalizeCS-E-H} can be tightened to $\norminf{\dt{A}-\dt{B}} \leq \max \{ \dH(A,B),\dH(A^\compl,B^\compl) \}$, 
see Appendix~\ref{sec:app:tightening}.
\end{remark}

\begin{figure}
    \centering
    \includegraphics[align=c, width=0.9\columnwidth]{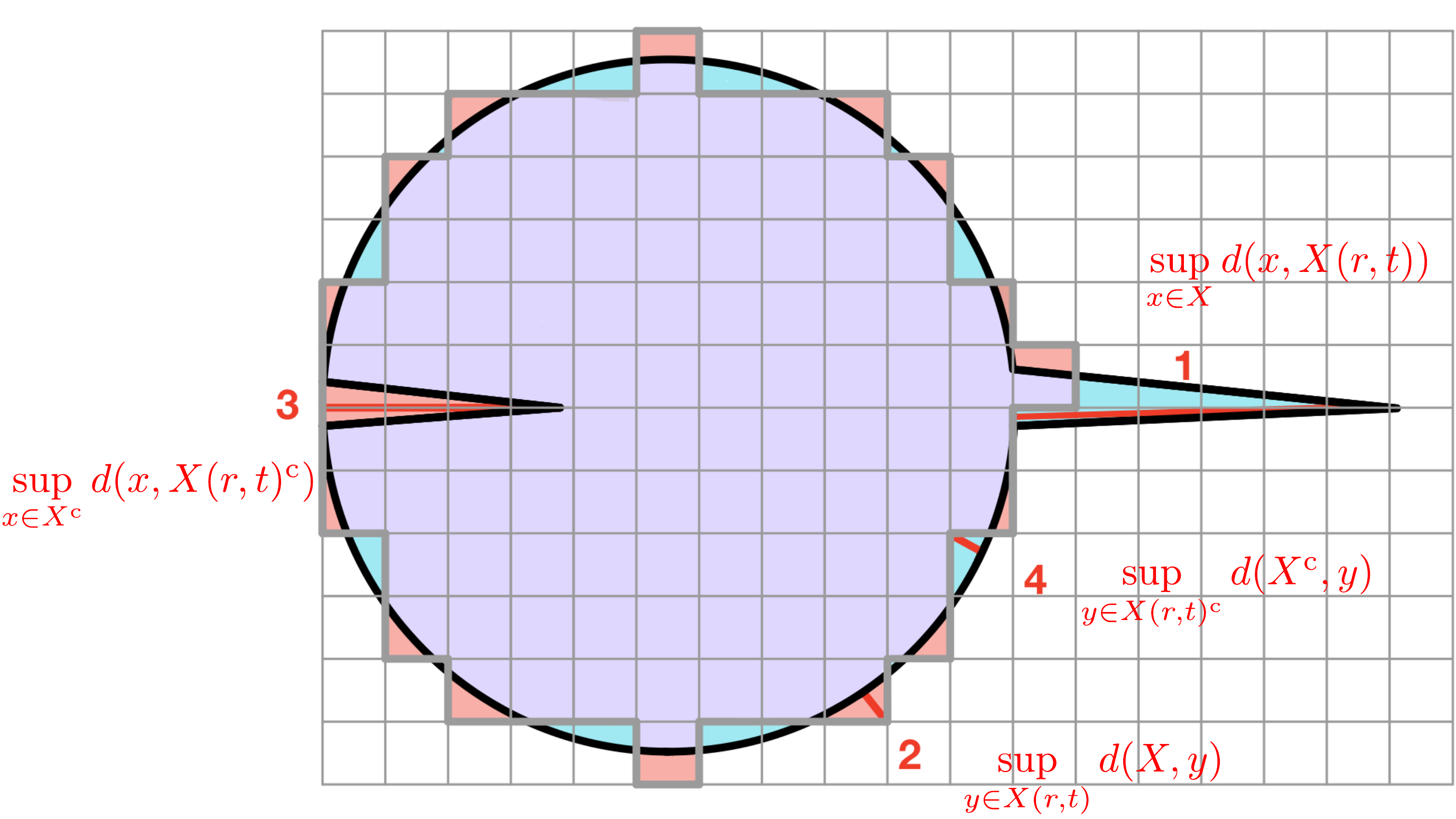}
    \caption{The four different suprema in the bound of Lemma~\ref{lemma:generalizeCS-E-H} can all have different values. In the proof of Theorem~\ref{thm:main} we show that suprema 1 and 3 are bounded by the two-sided leash $\mleash{X}{\sqrt{d}r}$, while suprema 2 and 4 are bounded by the voxel diameter $\sqrt{d}r$.}
    \label{fig:4suprema}
\end{figure}

However, when the sets differ by more than their reach, the bound in Lemma~\ref{lemma:generalizeCS-E-H} can be tight, as 
Example~\ref{example:tight_bound} shows.
\begin{example}\label{example:tight_bound}
Let $A=[-1.5,1.5] \times \R$ and $B=[-1.5,1.5] \times \left( (-\infty,-1] \cup [1,\infty)  \right)$. 
Then, the bound given in Lemma~\ref{lemma:generalizeCS-E-H} is tight: $\norminf{\dt{A}-\dt{B}} = \absolute{\dt{A}(0)-\dt{B}(0)} = \absolute{-1.5-1}= 1.5 + 1= \sup_{b\in B^\compl} d(A^\compl,b) + \sup_{a\in A} d(a,B)$. 
\end{example}


\begin{lemma} \label{lemma:bound_with_suprema}
With the notation of Theorem~\ref{thm:main},
\begin{align*}
&\dB(\PD{\dt{X}}, \PD{\tg})\\
&\leq \max \{ \sup_{y\in \tcx^\compl} d(X^\compl,y) + \sup_{x\in X} d(x,\tcx) , \\
&\quad \quad \quad \ \, \sup_{y\in \tcx} d(X,y) + \sup_{x\in X^\compl} d(x,\tcx^\compl) \} + \sqrt{d} r
\end{align*}
\end{lemma}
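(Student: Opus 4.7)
The plan is to factor the comparison through the continuous signed Euclidean distance transform of the discrete object $\tcx$, exactly as indicated by the gray step in Figure~\ref{fig:Sec4_diagram}. That is, I would apply the triangle inequality for the bottleneck distance to obtain
\begin{align*}
\dB(\PD{\dt{X}}, \PD{\tg}) \leq{}& \dB(\PD{\dt{X}}, \PD{\dt{\tcx}}) \\
&+ \dB(\PD{\dt{\tcx}}, \PD{\tg}),
\end{align*}
and then bound each summand separately. The second summand is immediate from the stability theorem of \cite{cohen2007stability} combined with Lemma~\ref{lemma:DSEDTvsCSEDT}, giving the $\sqrt{d}\,r$ tail in the claimed bound.

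For the first summand, I would again invoke the stability theorem to reduce to bounding $\norminf{\dt{X}-\dt{\tcx}}$, and then apply Lemma~\ref{lemma:generalizeCS-E-H} with $A=X$ and $B=\tcx$. Substituting $A$ and $B$ into the four suprema of that lemma produces exactly the two-term maximum that appears in the statement, matching the labelling in Figure~\ref{fig:4suprema}. Adding the $\sqrt{d}\,r$ contribution from the second summand yields the desired inequality.

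The main (only) obstacle here is bookkeeping: one must verify that the four suprema produced by Lemma~\ref{lemma:generalizeCS-E-H} under the substitution $(A,B)=(X,\tcx)$ line up with the four suprema written in the target bound, and that no hypothesis of Lemma~\ref{lemma:generalizeCS-E-H} (there are none beyond $A,B\subseteq\R^d$) is violated. Since Lemma~\ref{lemma:DSEDTvsCSEDT} requires only the setup of Theorem~\ref{thm:main}, which is the same setup assumed here, no additional assumptions need to be imposed, and the triangle inequality plus stability argument goes through cleanly with no further work.
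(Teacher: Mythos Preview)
Your proposal is correct and essentially matches the paper's proof: both factor through $\dt{\tcx}$, invoke stability together with Lemma~\ref{lemma:DSEDTvsCSEDT} for the $\sqrt{d}\,r$ term, and apply Lemma~\ref{lemma:generalizeCS-E-H} with $A=X$, $B=\tcx$ for the remaining term. The only cosmetic difference is that the paper applies stability once and then the triangle inequality in $L_\infty$, whereas you apply the triangle inequality at the level of $\dB$ and then stability twice; both orderings yield the identical bound.
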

\begin{proof}
The proof is a combination of the stability theorem of persistent homology \cite{cohen2007stability}, the triangle inequality, and Lemmas~\ref{lemma:generalizeCS-E-H}, \ref{lemma:DSEDTvsCSEDT}:
\begin{align*}
&\dB(\PD{\dt{X}}, \PD{\tg})\\
&\leq \norminf{\dt{X} - \tg} \\
&\leq \norminf{\dt{X} - \dt{\tcx}} + \norminf{ \dt{\tcx} - \tg} \\
&\leq \max \{ \sup_{y\in \tcx^\compl} d(X^\compl,y) + \sup_{x\in X} d(x,\tcx) , \\
&\quad \quad \quad \ \, \sup_{y\in \tcx} d(X,y) + \sup_{x\in X^\compl} d(x,\tcx^\compl) \} + \sqrt{d} r.
\end{align*}
\end{proof}

\begin{lemma} \label{lemma:bounding_suprema_by_leash}
With the notation of Theorem~\ref{thm:main},
\begin{enumerate}
    \item $\sup_{x\in X} d(x,\tcx) \leq \leash{X}{\sqrt{d}r}$
    \item $\sup_{x\in X^\compl} d(x,\tcx^\compl) \leq \leash{\cl(X^\compl)}{\sqrt{d}r}$
    \item $\sup_{y\in \tcx} d(X,y) \leq
    \sqrt{d}r$
    \item $\sup_{y\in \tcx^\compl} d(X^\compl,y) \leq
    \sqrt{d}r$
\end{enumerate}
\end{lemma}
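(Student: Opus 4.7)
The plan is to split the four bounds into two pairs. Items (3) and (4) are voxel-diameter arguments that only use the threshold range $0<t\le 1$, while items (1) and (2) will follow from the set containments $\erosion{X}{\sqrt{d}r}\subseteq\tcx$ and $\erosion{\cl(X^\compl)}{\sqrt{d}r}\subseteq\tcx^\compl$ together with the definition of the leash.

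For (3), I would take $y\in\tcx$ and pick a closed voxel $\cl(\sigma(i))$ containing $y$ with $\rho_r(i)\ge t>0$; positivity of $\mu(X\cap\sigma(i))$ yields an $x\in X\cap\sigma(i)$, and $\|x-y\|\le\sqrt{d}r$ since both lie in a voxel of diameter $\sqrt{d}r$. For (4), I would take $y\in\tcx^\compl$ and any voxel $\sigma(j)$ whose closure contains $y$; necessarily $\rho_r(j)<t\le 1$, so $\mu(\sigma(j)\setminus X)>0$, which produces an $x\in\sigma(j)\cap X^\compl$ within distance $\sqrt{d}r$ of $y$.

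For (1), given $y\in\erosion{X}{\sqrt{d}r}$, the condition $\ballop{\sqrt{d}r}{y}\subseteq X$ upgrades (by closedness of the compact set $X$) to $\ballcl{\sqrt{d}r}{y}\subseteq X$. Any voxel $\sigma(i)$ with $y\in\cl(\sigma(i))$ has diameter $\sqrt{d}r$, so $\cl(\sigma(i))\subseteq\ballcl{\sqrt{d}r}{y}\subseteq X$, forcing $\rho_r(i)=1\ge t$ and hence $y\in\tcx$. Once $\erosion{X}{\sqrt{d}r}\subseteq\tcx$ is established, for any $x\in X$ we obtain $d(x,\tcx)\le d(x,\erosion{X}{\sqrt{d}r})\le\leash{X}{\sqrt{d}r}$ by definition of the leash, and (1) follows by taking the supremum over $x\in X$.

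Item (2) follows the same template, but carries what I expect to be the main subtlety of the lemma. Given $y\in\erosion{\cl(X^\compl)}{\sqrt{d}r}$, suppose for contradiction that $y\in\tcx$; then some voxel $\sigma(i)$ with $y\in\cl(\sigma(i))$ and $\rho_r(i)\ge t$ satisfies $\sigma(i)\subseteq\ballcl{\sqrt{d}r}{y}\subseteq\cl(X^\compl)$. To conclude $\sigma(i)\cap X=\emptyset$ (and thus $\rho_r(i)=0$, contradicting $\rho_r(i)\ge t>0$) I need the set-theoretic identity $\inter(\cl(X^\compl))=X^\compl$. The hard step is verifying this identity from the standing hypothesis $X=\cl(\inter(X))$: any hypothetical point of $\inter(\cl(X^\compl))\cap X$ would admit an open neighborhood $U\subseteq\cl(X^\compl)$ that, since $X=\cl(\inter(X))$, must meet $\inter(X)$; the non-empty open set $U\cap\inter(X)\subseteq\cl(X^\compl)$ would then have to contain a point of $X^\compl$, contradicting $\inter(X)\cap X^\compl=\emptyset$. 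With the inclusion $\erosion{\cl(X^\compl)}{\sqrt{d}r}\subseteq\tcx^\compl$ in hand, (2) follows exactly as in (1), applied to $x\in X^\compl\subseteq\cl(X^\compl)$ and using $\leash{\cl(X^\compl)}{\sqrt{d}r}=\sup_{x\in\cl(X^\compl)}d(x,\erosion{\cl(X^\compl)}{\sqrt{d}r})$.
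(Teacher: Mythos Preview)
Your proof is correct and follows essentially the same strategy as the paper: establish the containments $\erosion{X}{\sqrt{d}r}\subseteq\tcx$ and $\erosion{\cl(X^\compl)}{\sqrt{d}r}\subseteq\tcx^\compl$ for items (1)--(2), and use a voxel-diameter argument for (3)--(4). Your treatment of item (2) is in fact more careful than the paper's, which simply calls it ``analogous'' to (1); you correctly identify and verify the identity $\inter(\cl(X^\compl))=X^\compl$ (from the standing hypothesis $X=\cl(\inter(X))$) needed to conclude $\rho_r(i)=0$ from $\sigma(i)\subseteq\cl(X^\compl)$.
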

\begin{proof}
1) To prove $\sup_{x\in X} d(x,\tcx) \leq \leash{X}{\sqrt{d}r}$, let $x\in X$ be arbitrary. 
By the definition of $\leash{X}{\sqrt{d}r}$, there exists a point $a \in ((X^\compl)_{\sqrt{d}r})^\compl$ with $d(x,a)\leq \leash{X}{\sqrt{d}r}$. 
As $a$ is at least $\sqrt{d}r$ far from $X^\compl$, the open ball $\ballop{\sqrt{d}r}{a}$ is fully inside $X$. 
Let $\sigma$ be a voxel containing $a$ in its closure. 
As the diameter of the voxel is $\diam(\sigma)=\sqrt{d}r$, the voxel $\sigma \subseteq \ballop{\sqrt{d}r}{a}\subseteq X$ has density $\rho_r( \sigma )=1\geq t$, for all $t \in (0,1]$. 
Hence, $a \in \cl(\sigma) \subseteq \tcx$ and $d(x,\tcx) \leq d(x,a) \leq \leash{X}{\sqrt{d}r}$.

2) The proof for $\sup_{x\in X^\compl} d(x,\tcx^\compl) \leq \leash{\cl(X^\compl)}{\sqrt{d}r}$ is analogous.

3) To prove $\sup_{y\in \tcx} d(X,y) \leq \sqrt{d}r$,
let $y\in \tcx$ be arbitrary.
Let $\sigma \subseteq \tcx$ be such that $y \in \cl (\sigma)$.
As $\rho_r(\sigma)>t>0$, there exists a point $x \in X \cap \sigma$.
Hence, $d(X,y) \leq d(x,y) \leq \diam(\sigma) = \sqrt{d}r$.
    
4) The proof for $\sup_{y\in \tcx^\compl} d(X^\compl,y) \leq \sqrt{d}r$
is analogous.
\end{proof}

\begin{proof}[Proof of Theorem~\ref{thm:main}]
Lemma~\ref{lemma:bounding_suprema_by_leash} shows that both the terms $\sup_{y\in \tcx^\compl} d(X^\compl,y) + \sup_{x\in x} d(x,\tcx)$ and $\sup_{y\in \tcx} d(X,y) + \sup_{x\in X^\compl} d(x,\tcx^\compl)$ are bounded by $\sqrt{d}r + \mleash{X}{\sqrt{d}r}$. In combination with Lemma~\ref{lemma:bound_with_suprema} this finishes the proof.
\end{proof}

In the special case when the voxel diameter $\sqrt{d}r$ is smaller than $\mreach{X}$, we can calculate the term $\mleash{X}{\sqrt{d}r}$ explicitly using the following Lemma.
\begin{lemma}\label{lemma:reach_leash}
Let $A$ be a subset of $\R^d$ with $A=\cl(\inter(A))$ and
$\reach(\cl(A^\compl))>0$.
Let $s<\reach(\cl(A^\compl))$, 
then $\leash{A}{s} = s$.
\end{lemma}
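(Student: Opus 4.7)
The plan is to establish $\leash{A}{s} = s$ by proving the two inequalities $\leash{A}{s} \leq s$ and $\leash{A}{s} \geq s$ separately. The degenerate case $A = \R^d$ is implicitly excluded, since then $\cl(A^\compl) = \emptyset$ forces $\erosion{A}{s} = A$ and the equality fails; so I will assume $A^\compl \neq \emptyset$, which, together with $A$ being closed, forces $\partial A \neq \emptyset$.

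For the upper bound, I will show every $x \in A$ satisfies $d(x, \erosion{A}{s}) \leq s$. Using $d(x, A^\compl) = d(x, \cl(A^\compl))$, I split into two cases. If $d(x, A^\compl) \geq s$, then $\ballop{s}{x} \subseteq A$, so $x \in \erosion{A}{s}$ already. Otherwise $d(x, \cl(A^\compl)) < s < \reach(\cl(A^\compl))$, so the reach hypothesis guarantees a unique nearest point $p = \xi_{\cl(A^\compl)}(x)$. Assuming first that $x \neq p$, set $v = (x-p)/\norm{x-p}$ and $y = p + sv$. The key step is to invoke the Federer-type lemma in Appendix~\ref{sec:app:federer} to conclude that $p$ remains the unique nearest point of $\cl(A^\compl)$ to every point on the segment from $p$ to $p + sv$, since $s$ is below the reach. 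This gives $d(y, \cl(A^\compl)) = d(y, A^\compl) = s$, hence $\ballop{s}{y} \subseteq A$ and $y \in \erosion{A}{s}$; because $x$ lies on the segment from $p$ to $y$, $d(x, y) = s - \norm{x - p} < s$, as wanted. The boundary case $x = p$ (which forces $x \in \partial A$) will be handled by approximation: using $A = \cl(\inter(A))$, pick $x_n \in \inter(A)$ with $x_n \to x$, apply the previous case to get $y_n \in \erosion{A}{s}$ with $d(x_n, y_n) \leq s$, and let $n \to \infty$ to conclude $d(x, \erosion{A}{s}) \leq s$.

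For the lower bound, I will exhibit a single $x \in A$ with $d(x, \erosion{A}{s}) \geq s$; any $x \in \partial A$ works. Since $x \in \partial A \subseteq \cl(A^\compl)$, pick a sequence $z_n \in A^\compl$ with $z_n \to x$. For any candidate $y \in \erosion{A}{s}$, the containment $\ballop{s}{y} \subseteq A$ forces $d(y, A^\compl) \geq s$, so $d(y, z_n) \geq s$ for every $n$; letting $n \to \infty$ yields $d(y, x) \geq s$, and taking the infimum over $y$ gives $d(x, \erosion{A}{s}) \geq s$.

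The main obstacle is in the upper bound: justifying that $y = p + sv$ really has $\cl(A^\compl)$-distance equal to $s$. Without the reach hypothesis, competing nearest points in $\cl(A^\compl)$ could appear along the normal ray through $p$, and $d(y, \cl(A^\compl))$ could be strictly less than $s$; the reach assumption and the cited Federer lemma package exactly the geometric fact needed, so once invoked the remainder of the argument is straightforward bookkeeping with the triangle inequality and the defining identity $\erosion{A}{s} = ((A^\compl)_s)^\compl$.
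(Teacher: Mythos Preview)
Your argument is correct and follows the same line as the paper's proof: for the upper bound you move from $x$ along the outward normal at its nearest point $p \in \cl(A^\compl)$ to produce $y = p + sv$, which Federer's lemma places in $\erosion{A}{s}$ at distance at most $s$, and for the lower bound you use any boundary point. The one small difference is your treatment of the case $x \in \partial A$: the paper handles it directly by choosing an arbitrary unit normal to $\cl(A^\compl)$ at $x$ and applying Federer's lemma once more, whereas you approximate $x$ by interior points $x_n$ and pass to the limit---your variant has the minor advantage of not needing to assert that a nonzero normal vector exists at every boundary point of a positive-reach set.
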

\begin{proof}
To show that $\leash{A}{s} = \sup_{a \in A} d(a,\erosion{A}{s})$ is at most $s$, we need to find for every $a \in A$ an element $b \in \erosion{A}{s}$ with distance $d(a,b)$ at most $s$.
Let $a \in A$ be arbitrary. If $a \in \erosion{A}{s}$, then $b=a$ fulfills $d(a,b)=0 \leq s$.
Therefore $a \notin \erosion{A}{s}$, i.e. $a \in (A^\compl)_s$. 
Let $x \in \partial \cl(A^\compl) $ 
be s.t. $d(a,x)=d(a, \cl(A^\compl))$.
If $x\neq a$, then $n_x = \frac{a-x}{\norm{a-x}}$ is a unit normal vector
of $\cl(A^\compl)$ at $x$ (for a rigorous definition of normal vector, see Definition~\ref{def:normal}). 
If $x=a$, let $n_x$ be any unit normal vector of $\cl(A^\compl)$ at $x$. 
Define $b = x + s n_x$. 
Since $s < \reach(\cl(A^\compl))$, Federer's Lemma~\ref{lemma:federer_rigorous} yields, $d(A^\compl,b)= d(\cl(A^\compl),b) = d(x,b)=s$. 
Hence, $b\notin (A^\compl)_s$ and thus $b \in ((A^\compl)_s)^\compl = \erosion{A}{s}$. 
The points $x$, $a$, and $b$ lie on a straight line by construction.
The point $a$ has distance $d(a,x)=d(a, \cl(A^\compl)))=d(a,A^\compl)$ to $x$, which is less than $s$ as $a \in (A^\compl)_s$. 
Hence, $a$ lies on the line segment between $x$ and $b$, and thus $d(a,b)\leq d(x,b)=s$, concluding the proof of $\leash{A}{s} \leq s$.

For any $a \in \partial A$, we get $d(a,\erosion{A}{s}) \geq s$, yielding $\leash{A}{s} = s$.
\end{proof}

\begin{corollary}
\label{cor:reach_main}
Using the notation of Theorem~\ref{thm:main}.
If $r < \frac{1}{\sqrt{d}} \mreach{X}$, then 
\begin{align*}
\dB(\PD{\dt{X}}, \PD{\tg}) \leq 3 \sqrt{d} r
\end{align*}
\end{corollary}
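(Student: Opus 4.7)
The plan is to combine Theorem~\ref{thm:main} with Lemma~\ref{lemma:reach_leash} in a completely direct way. Theorem~\ref{thm:main} already gives
\[
\dB(\PD{\dt{X}}, \PD{\tg}) \leq \mleash{X}{\sqrt{d}\,r} + 2\sqrt{d}\,r,
\]
so the entire task reduces to showing that, under the hypothesis $r < \frac{1}{\sqrt{d}}\mreach{X}$, the leash term $\mleash{X}{\sqrt{d}\,r}$ equals $\sqrt{d}\,r$. Adding the two contributions then yields $3\sqrt{d}\,r$.

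To control the two-sided leash, I would unfold its definition $\mleash{X}{\sqrt{d}\,r} = \max\{\leash{X}{\sqrt{d}\,r},\,\leash{\cl(X^\compl)}{\sqrt{d}\,r}\}$ and apply Lemma~\ref{lemma:reach_leash} separately to each factor, in each case with $s = \sqrt{d}\,r$. For the first factor, take $A = X$: the hypothesis gives $s = \sqrt{d}\,r < \mreach{X} \leq \reach(\cl(X^\compl))$, and the standing assumption $X = \cl(\inter(X))$ supplies the remaining condition of Lemma~\ref{lemma:reach_leash}, so $\leash{X}{\sqrt{d}\,r} = \sqrt{d}\,r$. For the second factor, take $A = \cl(X^\compl)$: then $\cl(A^\compl) = \cl(\inter(X)) = X$ because $X = \cl(\inter(X))$, so the condition $s < \reach(\cl(A^\compl))$ becomes $\sqrt{d}\,r < \reach(X)$, which again follows from $\sqrt{d}\,r < \mreach{X}$. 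One should also observe that $\cl(X^\compl) = \cl(\inter(\cl(X^\compl)))$, which is a standard consequence of $X = \cl(\inter(X))$ for solid objects. Lemma~\ref{lemma:reach_leash} then yields $\leash{\cl(X^\compl)}{\sqrt{d}\,r} = \sqrt{d}\,r$ as well.

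Substituting $\mleash{X}{\sqrt{d}\,r} = \sqrt{d}\,r$ back into the bound from Theorem~\ref{thm:main} immediately gives $\dB(\PD{\dt{X}}, \PD{\tg}) \leq 3\sqrt{d}\,r$. There is no real obstacle here beyond the small bookkeeping check that both $A = X$ and $A = \cl(X^\compl)$ satisfy the hypotheses of Lemma~\ref{lemma:reach_leash} under the solid-object assumption on $X$; all the substantive geometric content has already been done in Theorem~\ref{thm:main} and Lemma~\ref{lemma:reach_leash}.
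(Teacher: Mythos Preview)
Your proposal is correct and follows exactly the paper's approach, which simply says ``Apply Theorem~\ref{thm:main} and Lemma~\ref{lemma:reach_leash}.'' You have merely spelled out the bookkeeping that verifies the hypotheses of Lemma~\ref{lemma:reach_leash} for both $A=X$ and $A=\cl(X^\compl)$, which the paper leaves implicit.
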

\begin{proof}
Apply Theorem~\ref{thm:main} and Lemma~\ref{lemma:reach_leash}.
\end{proof}
\begin{remark} \label{remark:tightening_cor}
Using Remark~\ref{remark:tighteningCS-E-H}, the constant $3$ in Corollary~\ref{cor:reach_main} can be tightened to $2$, 
see Appendix~\ref{sec:app:tightening}.
\end{remark}

\subsection{Bounds using the density $\rho_r$} \label{sec:bounds_rho}

Suppose we start with $\rho_r$, the function that tells us the proportion of $X$ in each voxel defined in Section~\ref{subsec:images_object}. 
As mentioned there, this is a reasonable model of the information contained in an x-ray CT image of a two-phase material with high x-ray contrast. 
Theorem~\ref{thm:Vanessas_bound} bounds the bottleneck distance between the persistence diagrams of $\dt{X}$ and $\tg$ in terms of $\rho_r$. 
The proof reuses Lemma~\ref{lemma:bound_with_suprema}, but we need to bound the suprema by quantities other than the leash this time.

\begin{lemma} \label{lemma:bounding_suprema_by_m}
Using the notation of Theorem~\ref{thm:main}. 
\begin{enumerate}
    \item $\sup_{x\in X} d(x,\tcx) \leq \max_{\rho_r(i)>0} \tg(\sigma(i))$
    \item $\sup_{x\in X^\compl} d(x,\tcx^\compl) \leq - \min_{\rho_r(i)<1} \tg(\sigma(i))$.
\end{enumerate}
\end{lemma}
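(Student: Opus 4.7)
The plan is to treat both parts symmetrically: for a point $x$ in the ambient set ($X$ or $X^\compl$), I select a voxel $\sigma(i)$ with $x\in\cl(\sigma(i))$ whose density $\rho_r(i)$ meets the required inequality, and then use a translation isometry to transport $x$ into the target set at a distance controlled by $\tg(\sigma(i))$. The core geometric fact driving everything is that two axis-aligned $d$-cubes of side $r$ are related by a translation whose norm equals the Euclidean distance between their centers.

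For part (1), fix $x\in X$. If $x\in\tcx$ there is nothing to prove. Otherwise I would first produce a voxel $\sigma(i)$ with $x\in\cl(\sigma(i))$ and $\rho_r(i)>0$. Since $X=\cl(\inter(X))$, there is a sequence $x_n\in\inter(X)$ with $x_n\to x$; each $x_n$ lies in an open voxel that meets $\inter(X)$ and therefore has positive density, and since only finitely many voxels have $x$ in their closure, pigeonhole supplies one such voxel $\sigma(i)$. Because $x\notin\tcx$ we cannot have $\cl(\sigma(i))\subseteq\tcx$, so $\rho_r(i)<t$. Let $\sigma(j)$ realize the minimum in the definition of $\tg(\sigma(i))$, so $\tg(\sigma(i)) = \lVert c_i - c_j\rVert$ with $c_i,c_j$ the voxel centers. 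The key step is to apply the translation $T\colon y\mapsto y+(c_j-c_i)$, which is an isometry carrying $\cl(\sigma(i))$ bijectively onto $\cl(\sigma(j))$; the image $T(x)$ lies in $\cl(\sigma(j))\subseteq\tcx$ and satisfies $\lVert x-T(x)\rVert = \lVert c_i-c_j\rVert = \tg(\sigma(i))$. Therefore $d(x,\tcx)\leq \tg(\sigma(i))\leq \max_{\rho_r(k)>0} \tg(\sigma(k))$, which is the desired bound after taking the supremum in $x$.

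Part (2) proceeds by the mirror argument. For $x\in X^\compl$, the conclusion is trivial if $x\in\tcx^\compl$. If instead $x\in\tcx$, openness of $X^\compl$ gives a small ball around $x$ contained in $X^\compl$; this ball must intersect the interior of at least one voxel $\sigma(i)$ whose closure contains $x$, and such $\sigma(i)$ lies in $\tcx$ (so $\rho_r(i)\geq t$) while also satisfying $\rho_r(i)<1$ by the positive-measure contribution of the ball. Letting $\sigma(j)$ be the voxel of $\tcx^\compl$ realizing the minimum in $\tg(\sigma(i))$, the same translation isometry sends $x$ to a point of $\cl(\sigma(j))$ at distance $\lVert c_i-c_j\rVert = -\tg(\sigma(i))$ from $x$. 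After an arbitrarily small perturbation into $\inter(\sigma(j))\subseteq\tcx^\compl$ (to handle the case where $T(x)$ lands on a shared face), this yields $d(x,\tcx^\compl)\leq -\tg(\sigma(i))\leq -\min_{\rho_r(k)<1}\tg(\sigma(k))$.

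The main obstacle is the existence step in each part: choosing a voxel with $x$ in its closure whose density falls in the prescribed range. Once this is done, the translation argument is essentially a one-line computation. The remaining subtleties are the treatment of $x$ lying on a voxel face (where several candidate voxels compete) and the fact that $\tcx^\compl$ need not be closed (handled by the $\varepsilon$-perturbation above), neither of which affects the final bound.
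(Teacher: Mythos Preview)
Your argument is essentially the paper's proof: both select a voxel with $x$ in its closure and the appropriate density constraint, then translate $x$ by the vector between voxel centers to land in the target set. Your existence argument in part~(1) is in fact more careful than the paper's (which asserts, not quite correctly, that \emph{every} voxel with $x$ in its closure has $\rho_r>0$).

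One small slip in part~(2): the assertion ``such $\sigma(i)$ lies in $\tcx$'' does not follow from the ball meeting its interior. The ball around $x$ meets the interior of \emph{every} voxel with $x$ in its closure, and some of these may well lie in $\tcx^\compl$. The fix is immediate: since $x\in\tcx$, first choose $\sigma(i)$ to be a voxel of $\tcx$ with $x\in\cl(\sigma(i))$; then the ball in $X^\compl$ around $x$ meets $\sigma(i)$, giving $\rho_r(i)<1$ as needed. (The $\varepsilon$-perturbation at the end is harmless but unnecessary, since $d(x,\sigma(j))=d(x,\cl(\sigma(j)))$.)
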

\begin{proof}
We prove the first statement here; the second can be shown similarly. 
Let $x \in X$, and assume $x \notin \tcx$ (otherwise the statement follows trivially). 
As $x\in X=\cl(\inter(X))$, any voxel $\sigma(j)$ containing $x$ in its closure has $\rho_r(j)>0$. Let $x^\prime$ be the center of $\sigma(j)$. As $x \notin \tcx$, its voxel $\sigma(j)$ is in $\tcx^\compl$. Let $y^\prime$ be the voxel center minimizing 
    $$\min\{ d(x^\prime,c) \ | \ c \textrm{ voxel center } \in \tcx \} = \tg(\sigma(j)).$$ 
    As $x-x^\prime$ is a vector pointing from a voxel center to a point in the closure of the same voxel, adding this vector to the voxel center $y^\prime$ yields a point in the closure of the same voxel, and thus a point in $\tcx$. Hence,
    \begin{align*}
    d(x,\tcx) &\leq d(x,y^\prime + (x-x^\prime)) \\
    &= d(x - (x-x^\prime) ,y^\prime) \\
    &= d(x^\prime,y^\prime) \\
    &= \min\{ d(x^\prime,c) \ | \ c \textrm{ voxel center } \in \tcx \} \\
    &= \tg(\sigma(j)) \leq \max_{\rho_r(i)>0} \tg(\sigma(i)).
    \end{align*}
\end{proof}

Define  
$\mm$ as the maximum of $\max_{\rho_r(i)>0} \tg(\sigma(i))$ and $- \min_{\rho_r(i)<1} \tg(\sigma(i))$. 
This measures how far $X(r,t)$ is from the two most extreme ways of thresholding.  

\begin{theorem} \label{thm:Vanessas_bound}
With the notation of Theorem~\ref{thm:main},
\begin{align*}
\dB(\PD{\dt{X}}, \PD{\tg}) \leq \mm + 2 \sqrt{d} r
\end{align*}
\end{theorem}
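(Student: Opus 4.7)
The plan is to assemble Theorem~\ref{thm:Vanessas_bound} directly from the two preceding lemmas, using Lemma~\ref{lemma:bound_with_suprema} as the backbone and then bounding each of the four suprema that appear there. The key observation is that Lemma~\ref{lemma:bound_with_suprema} already separates the bottleneck distance into a ``max of two sums'' plus a single $\sqrt{d}r$, and each of the two sums pairs one supremum of the form $\sup_{y \in \tcx(\cdot)} d(X(\cdot),y)$ with one of the form $\sup_{x \in X(\cdot)} d(x,\tcx(\cdot))$. So I would treat the four suprema in two groups: those that push from $\tcx$ outward and those that push from $X$ inward.

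First, I would invoke parts (3) and (4) of Lemma~\ref{lemma:bounding_suprema_by_leash}, noting that these two bounds are purely voxel-size bounds and do not involve the leash at all. They give
\[
\sup_{y\in \tcx} d(X,y) \leq \sqrt{d}r, \qquad \sup_{y\in \tcx^\compl} d(X^\compl,y) \leq \sqrt{d}r.
\]
This is exactly why the reach/leash-free result in this subsection is available: the ``outward'' suprema are trivially controlled by the voxel geometry. Next, I would apply Lemma~\ref{lemma:bounding_suprema_by_m} to control the two ``inward'' suprema by $\mm$:
\[
\sup_{x\in X} d(x,\tcx) \leq \mm, \qquad \sup_{x\in X^\compl} d(x,\tcx^\compl) \leq \mm,
\]
since $\mm$ is defined as the maximum of the two quantities appearing on the right-hand sides of Lemma~\ref{lemma:bounding_suprema_by_m}.

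Putting these four bounds into the expression from Lemma~\ref{lemma:bound_with_suprema}, each of the two sums inside the $\max$ is at most $\sqrt{d}r + \mm$, so the whole $\max$ is at most $\mm + \sqrt{d}r$, and adding the trailing $\sqrt{d}r$ from Lemma~\ref{lemma:bound_with_suprema} yields $\mm + 2\sqrt{d}r$, which is the desired inequality. There is no real obstacle here: all the technical content (the stability theorem, the CSEDT-vs-DSEDT comparison in Lemma~\ref{lemma:DSEDTvsCSEDT}, and the signed-distance analogue of the Hausdorff-distance bound in Lemma~\ref{lemma:generalizeCS-E-H}) has been absorbed into Lemma~\ref{lemma:bound_with_suprema}, and the new ingredient specific to this theorem, Lemma~\ref{lemma:bounding_suprema_by_m}, has just been proved. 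The only thing to be careful about is making sure the pairing of suprema matches the definition of $\mm$ on both branches of the $\max$, which it does because $\mm$ is defined symmetrically.
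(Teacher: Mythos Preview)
Your proposal is correct and matches the paper's own proof essentially line for line: the paper simply says to combine Lemma~\ref{lemma:bound_with_suprema}, Lemma~\ref{lemma:bounding_suprema_by_m}, and items (3) and (4) of Lemma~\ref{lemma:bounding_suprema_by_leash}, which is exactly what you spelled out.
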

\begin{proof}
Combining Lemma~\ref{lemma:bound_with_suprema}, Lemma~\ref{lemma:bounding_suprema_by_m}, and items 3 and 4 of Lemma~\ref{lemma:bounding_suprema_by_leash}.
\end{proof}

\subsection{Examples}

We describe an example that illustrates many of the bounds derived above.  
The ground-truth object $X$ consists of an array of small circular spots lying inside a large disk together with the outside of this disk; see the image at top left of Fig.~\ref{fig:small_dot}.
In total this image is 2048$\times$2048 pixels and the large white disk is a circle of radius $R_2=510$ pixels.
The small black disks have radius $R_1=5$ pixels and  centers $w=85$ 
pixels apart. 

The reach of $\partial X$ is the radius $R_1$ of the small spots, so the image with $2048^2$ pixels has $1 = r <   \mreach{X}/\sqrt{2} = 3.54$.  
For any resolution with $r<3.54$, (no.~pixels $>580^2$)  Remark~\ref{remark:tightening_cor} guarantees that the bottleneck distance between the PDs for $\dt{X}$ and $\DSEDT{X(r,t)}$ is no larger than $2\sqrt{2} r$. 

At coarser resolutions, Theorem~\ref{thm:main} still applies with $\mleash{X}{\sqrt{d} r} = R_2 - \frac{w}{\sqrt{2}}+R_1+\sqrt{2}r = 454.9 +\sqrt{2}r$ when
$\mreach{X} \leq \sqrt{2} r \leq \frac{w}{\sqrt{2}}-R_1$. 
So we see that
\[ \dB(\PD{\dt{X}}, \PD{\DSEDT{X(r,t)}}) \leq 454.9 +3 \sqrt{2} r, \]
for $3.55 \leq r \leq 39.0$. 

For this synthetic example, we can also compute $\rho_r$ for all choices of $r$.  
Let $\epsilon>0$ be a small tolerance to accommodate noise.  
The pixels where $\epsilon < \rho_r < 1-\epsilon$ are those that have non-empty intersection with $\partial X$. 
At high resolutions, i.e., pixel sizes $r < \mreach{X}/\sqrt{2}$, this means the Hausdorff distances between $X(r,t)$ and $X(r,1-\epsilon)$ or $X(r,\epsilon)$ are small.  

At coarser resolutions, for example the image with $64\times 64$ pixels depicted at lower left of Fig.~\ref{fig:small_dot}, each spot is smaller than a pixel and all pixels intersecting the large disk have $\epsilon < \rho_r < 1-\epsilon$. 
In this situation, setting $t=1-\epsilon$ means the set $X(r,1-\epsilon)$ is contained in the outside of the large disk, while $X(r,\epsilon)$ is the entire square.  

Fig.~\ref{fig:small_dot} shows that the red bound from Section~\ref{sec:bounds_rho} has the advantage of being tighter than the green bound from Section~\ref{sec:bounds_reach_leash} at lower resolutions (i.e., pixel sizes larger than the reach). 
However, at high resolutions the bound given by Remark~\ref{remark:tightening_cor} is better. 

 
\begin{figure}
    \centering
    \includegraphics[align=c,width=0.48\columnwidth]{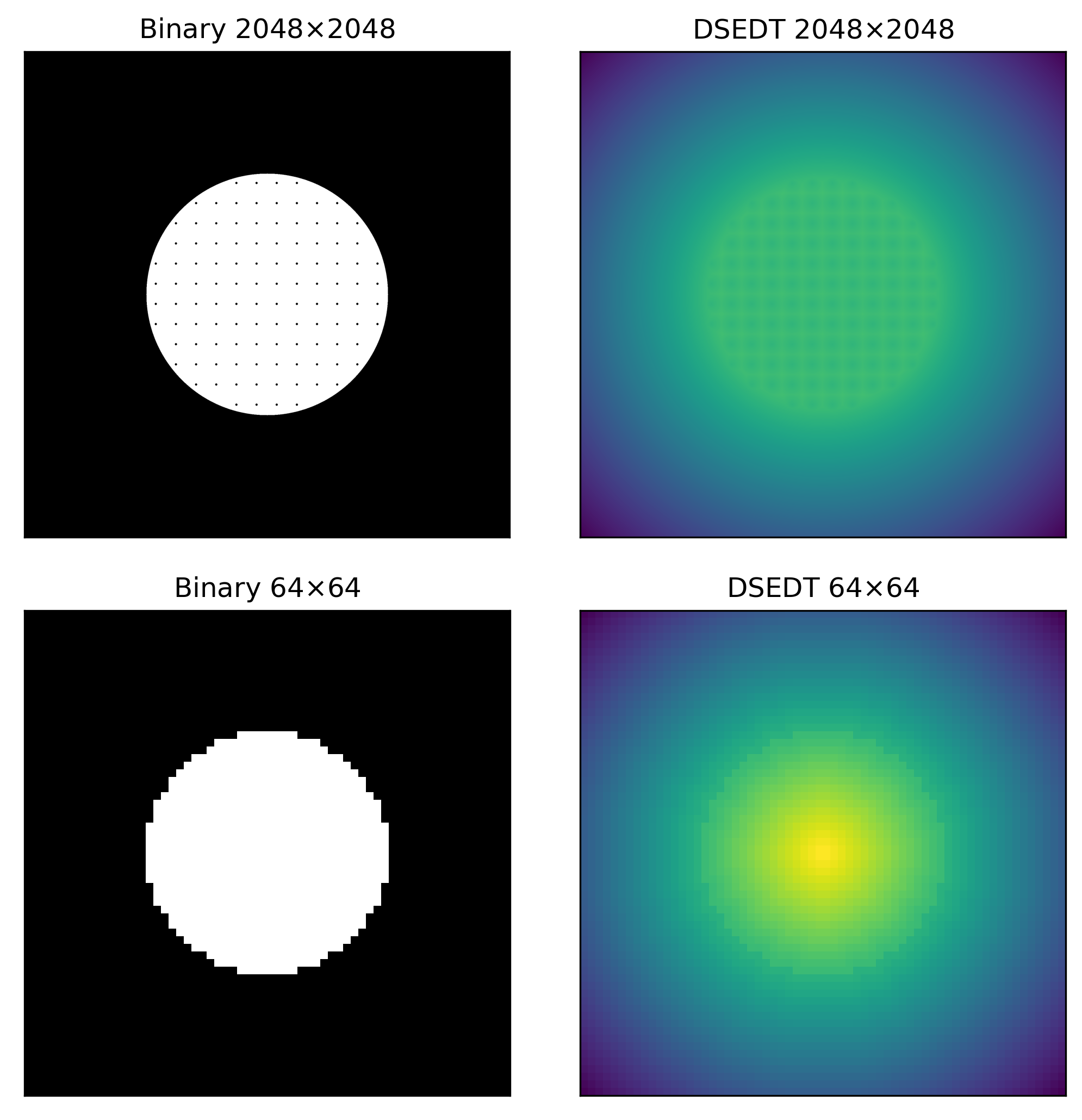}
    \includegraphics[align=c,width=0.5\columnwidth]{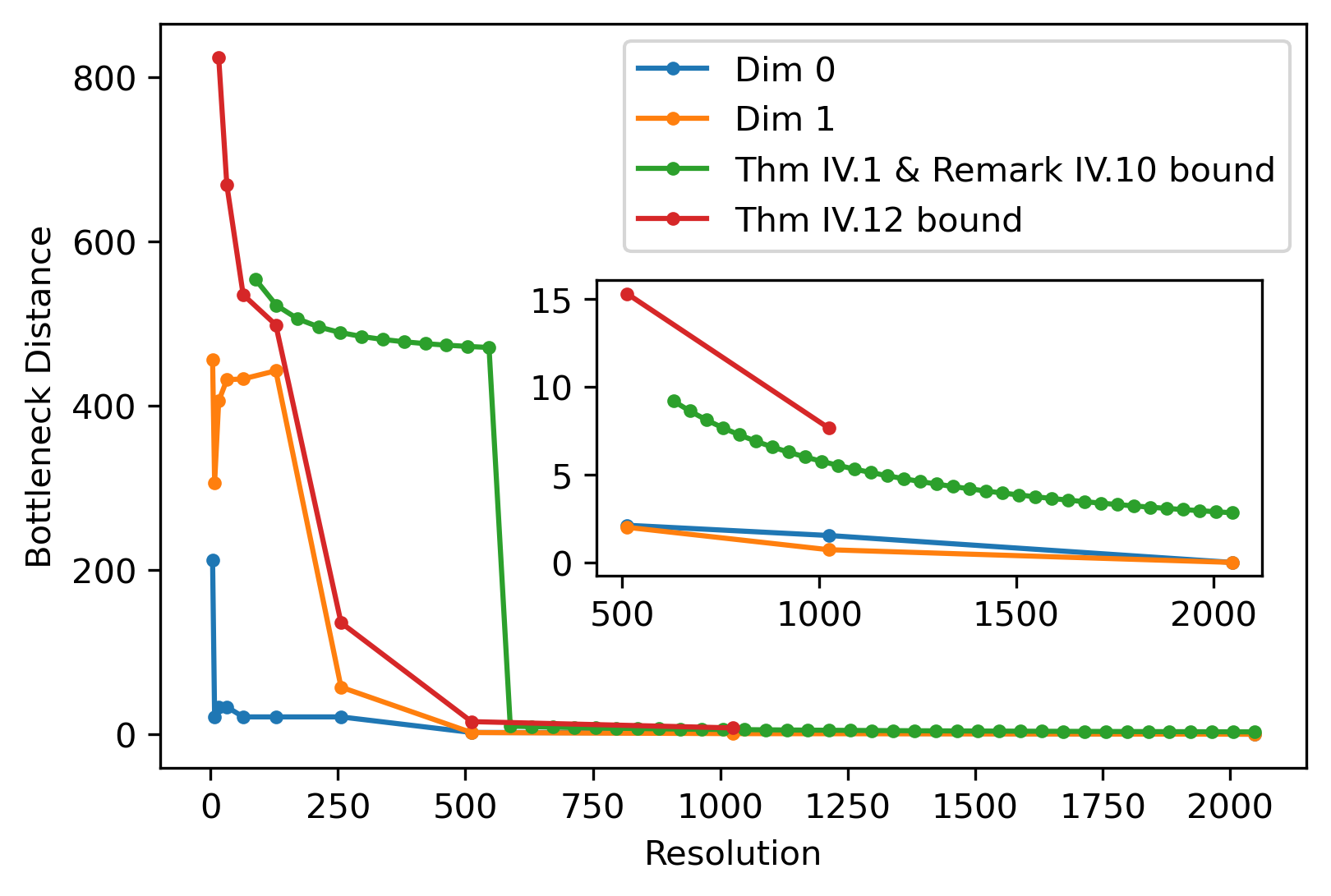}    \\
    \caption{ 
    Left two columns: The top row shows a high resolution binary image with corresponding DSEDT and the bottom row shows a lower resolution version obtained by downsampling with the averaging method and thresholding with $t=0.5$ and its corresponding DSEDT. Right column: Bottleneck distances comparing persistence diagrams for each resolution with that for the highest resolution. The distances for dimensions 0 and 1 are shown along with the bounds provided by Theorem~\ref{thm:main} with Remark~\ref{remark:tightening_cor} (green) and Theorem~\ref{thm:Vanessas_bound} (red).}  
    \label{fig:small_dot}
\end{figure}

 
\section{Applications}
\label{sec:applications}

In any real-world situation, the reach and the leash for the object being imaged are likely to be unknown, and/or computationally expensive to estimate. 
Similarly, the density $\rho_r$ at a given pixel length $r$ might be known, but noisy.
So in this section, we explore a different approach to determining an adequate digital resolution using the persistence diagrams of DSEDTs computed for a succession of larger voxel sizes, i.e., decreasing image dimensions. 
 
In our case studies, we begin with the highest resolution image and downsample to lower resolution images as follows. 
For a $d$-dimensional binary image with grid spacing $r$, and dimensions $n_1 \times \cdots \times n_d$, we compute the averages in blocks of $a^d$ voxels, where $a$ is called the kernel size.
Based on these average values, we create the downsampled binary image with grid spacing $r' = a r$ and dimensions $n_1/a \times \cdots \times n_d/a$ by thresholding at value $t = 0.5$, similar to the process described in Section~\ref{subsec:images_object}.
From the new image, we compute the DSEDT, with the new grid spacing $r'$ used as a scaling factor.

All examples in this section are square, so the image size is $n^d$ and we use $n$ to quantify resolution.  
The voxel size is specified with arbitrary units so that $r=1$ for the highest-resolution image in each case. 
To avoid issues caused by boundary effects, we only choose kernel sizes that fit evenly within the original high-resolution image dimensions.
Thus kernel sizes are always integer divisors of the image size.
This gives us a collection of images of different resolutions that are approximations of the highest resolution image.
Then we compute the bottleneck distance between the persistence diagram for each resolution with the persistence diagram from the high resolution image.
This allows us to quantify how much information is lost when we downsample. 

We compute the persistence diagrams and bottleneck distance, using the Giotto-TDA python package\cite{tauzin2020giottotda}. 
Code for this project, including code to generate the synthetic examples, can be found on Github\cite{phimages_githubrepo}.


\subsection{Structure at Different Length Scales}
\label{sec:theory_not_apply}

Our goal in this section is to explore how the persistent homology of a DSEDT changes at different image resolutions when approximating a particular object $X \subset \R^d$, with structure at different length scales, illustrated in Fig.~\ref{fig:nested_rings}. 
Although the overall trend is that bottleneck distance decreases with resolution, this decrease is not monotonic.  
The distinct length scales mean the distances show a succession of plateaus as each feature is resolved and the persistence diagrams remain relatively stable over an interval of resolutions.

We now define a $\Delta$-$\varepsilon$ plateau to capture the change in bottleneck distance between persistence diagrams as the image resolution changes. 
Recall that the (linear) image size $n$ is used to quantify  resolution, and, for the remainder of this section, denote the DSEDT $\DSEDT{X(r,t)}$ by $D^n$. 
Let $N$ denote the highest resolution available and take an interval $\Delta = [\ell,m]$ of image resolutions and $\varepsilon > 0$. We then say there is a \emph{$\Delta$-$\varepsilon$ plateau} if for all $j,k \in [\ell,m]$, 
\[\absolute{\dB\left(\PD{D^j}, \PD{D^N}\right)  - \dB\left(\PD{D^k}, \PD{D^N}\right)} < \varepsilon. \]
By the triangle inequality $\dB \left(\PD{D^j}, \PD{D^k}\right)<\varepsilon$ 
is a sufficient condition for a $\Delta$-$\varepsilon$ plateau.

As an example, consider the image in Figure~\ref{fig:nested_rings} with three nested rings of different thicknesses. 
The original high-resolution image, $X_N$, has $r=1$ and $N^2 = 5040^2$ pixels. 
We downsample the original image to create 57 different images, where each new image has a kernel of size $a$ where $a$ is a proper divisor of $5040$, so that the resolution in each case is $n = 5040/a$, and the voxel size at this resolution is $r=a$.
The DSEDT and persistence diagrams are computed for each image, and then we find the bottleneck distance $\dB(\PD{D^n},\PD{D^{5040}})$ comparing each lower-resolution image with the original one.
The plot of these results in Figure~\ref{fig:nested_rings}
shows three specific behaviors we would like to emphasize; spikes, plateaus, and final plateaus. 

As we increase the resolution, some fluctuations in the bottleneck distance are to be expected, even within a plateau. These fluctuations are on the order of magnitude of the pixel diameter.
For the example in Fig.~\ref{fig:nested_rings} it makes a difference whether the center of the image is the center of a pixel (for $n$ odd) or the center is in the closure of $4$ pixels 
(for $n$ even). 
These effects can
cause the bottleneck distance to spike, i.e.\ to increase and then immediately decrease.
The lower the resolution, the more pronounced these spikes usually appear.
The blue zero-dimensional bottleneck distance curve in Fig.~\ref{fig:nested_rings} has a spike at $n=21$, the only odd number in this range of resolutions, with an increase by $366$ from $n=20$ to $n=21$ and a decrease by $333$ from $n=21$ to $n=24$. The magnitude of this spike should be compared with the pixel diameter for $n=21$, which is $r\sqrt{2} = 240 \sqrt{2} = 339.4$.

Next, we observe the plateau behavior outlined in the definition of a $\Delta$-$\varepsilon$ plateau.
This is caused by the introduction of a new topological feature, such as a grain of sand or a small pore.
In the nested ring example, the plot of one-dimensional bottleneck distances shows three plateaus. 
The zero-dimensional bottleneck distances mimics this behavior, but the first one or two plateaus are overshadowed by noise, like spikes, explained above.
The three plateaus are due to the three different ring widths. 
After each of these is resolved, an increase in resolution does little to change the persistence diagram. 
As shown in Fig.~\ref{fig:nested_rings}, we see just one ring at resolution $18$, two rings at $n=40$, 
the third ring starts being resolved at $n=105$ (where it only consist of 4 pixels) and gets fully resolved at $n=210$. 
Specifically, the first $\Delta$-$\varepsilon$ plateau is at $\Delta=[18,24]$ with $\varepsilon=36.96$ in dimension 1 and $\varepsilon=402.51$ in dimension 0.
The second $\Delta$-$\varepsilon$ plateau is at $\Delta=[40,90]$ with $\varepsilon=17.31$ in dimension 1 and $\varepsilon=104.49$ in dimension 0.
The third $\Delta$-$\varepsilon$ plateau is at $\Delta=[210,5040]$ with $\varepsilon=13.43$ in dimension 1 and $\varepsilon=15.80$ in dimension 0.

The existence of structure at just a few distinct, well separated length scales in the synthetic example means there is a sequence of plateaus as each structure is resolved. 
This does not happen for the porous materials discussed in Section \ref{sec:matsci}.
However, we note that Corollary~\ref{cor:reach_main}, and Remark~\ref{remark:tightening_cor} guarantee the existence of a final plateau for any object with positive reach.
Specifically, let $M$ be the lowest resolution required for the corresponding voxel size $r_M < \tfrac{1}{\sqrt{d}} \mreach{X}$, and assume that the highest resolution image has $N>M$.
Then the bottleneck distances will have a $\Delta$-$\varepsilon$ plateau with $\Delta = [M,N]$ and  
$\varepsilon = 4 \sqrt{d} \, r_M \leq 4\mreach{X}$. 
In practice, we often see a final plateau even when the reach is zero, such as 
when the leash bounding the bottleneck distance converges to zero.
The glass bead packing in the following section is such an example. 
 
Although a final plateau suggests the image resolution is sufficient to capture the actual underlying structure of the imaged object, we can never definitively know whether a plateau is final or not, because we do not know what structure exists at length-scales finer than the voxel size.
Hence, the plateau serves only as a guide for resolution choice. 

 

\begin{figure}
    \centering
    \includegraphics[width=0.7\columnwidth]{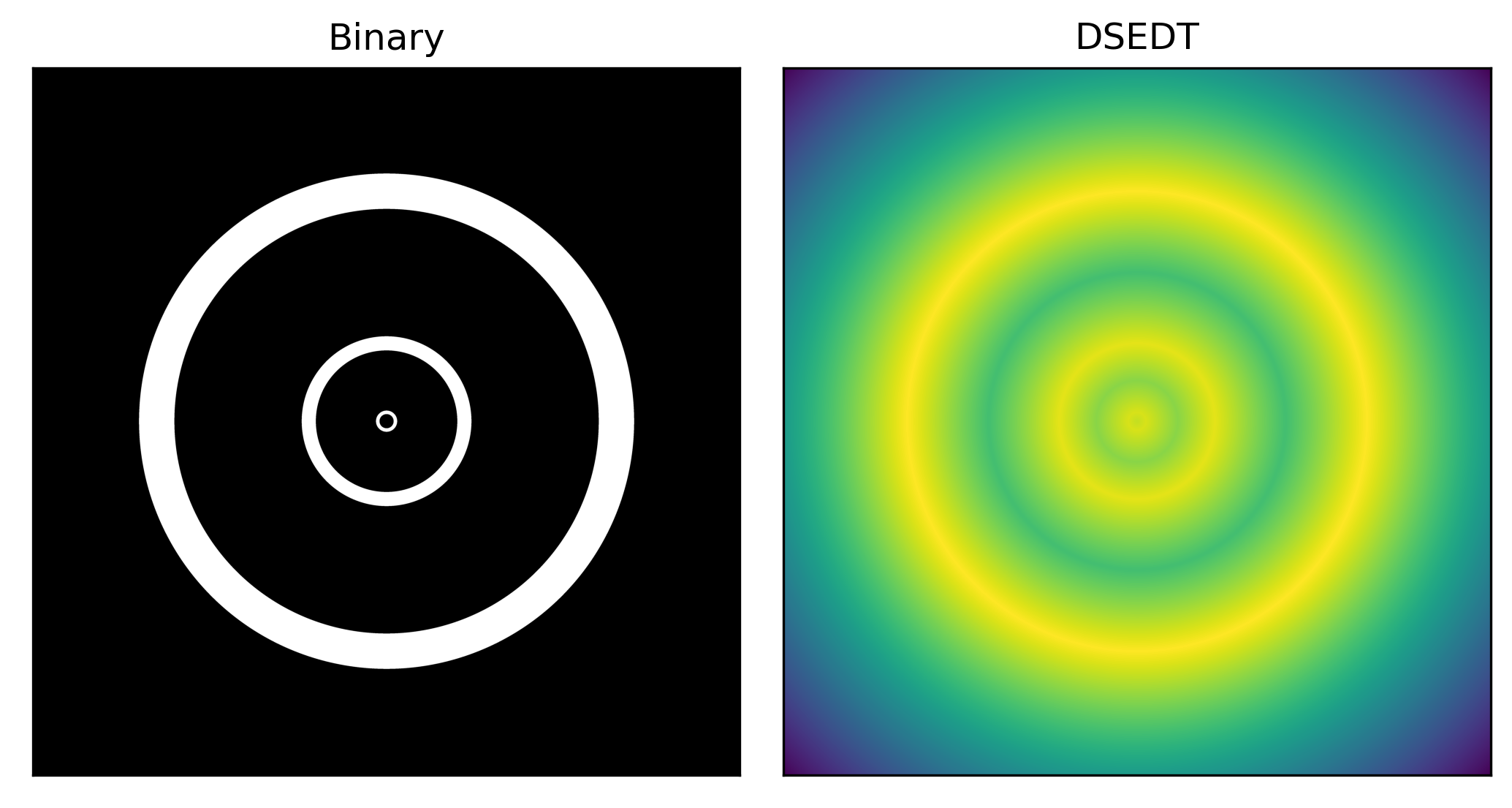}\\
    \includegraphics[width=0.8\columnwidth]{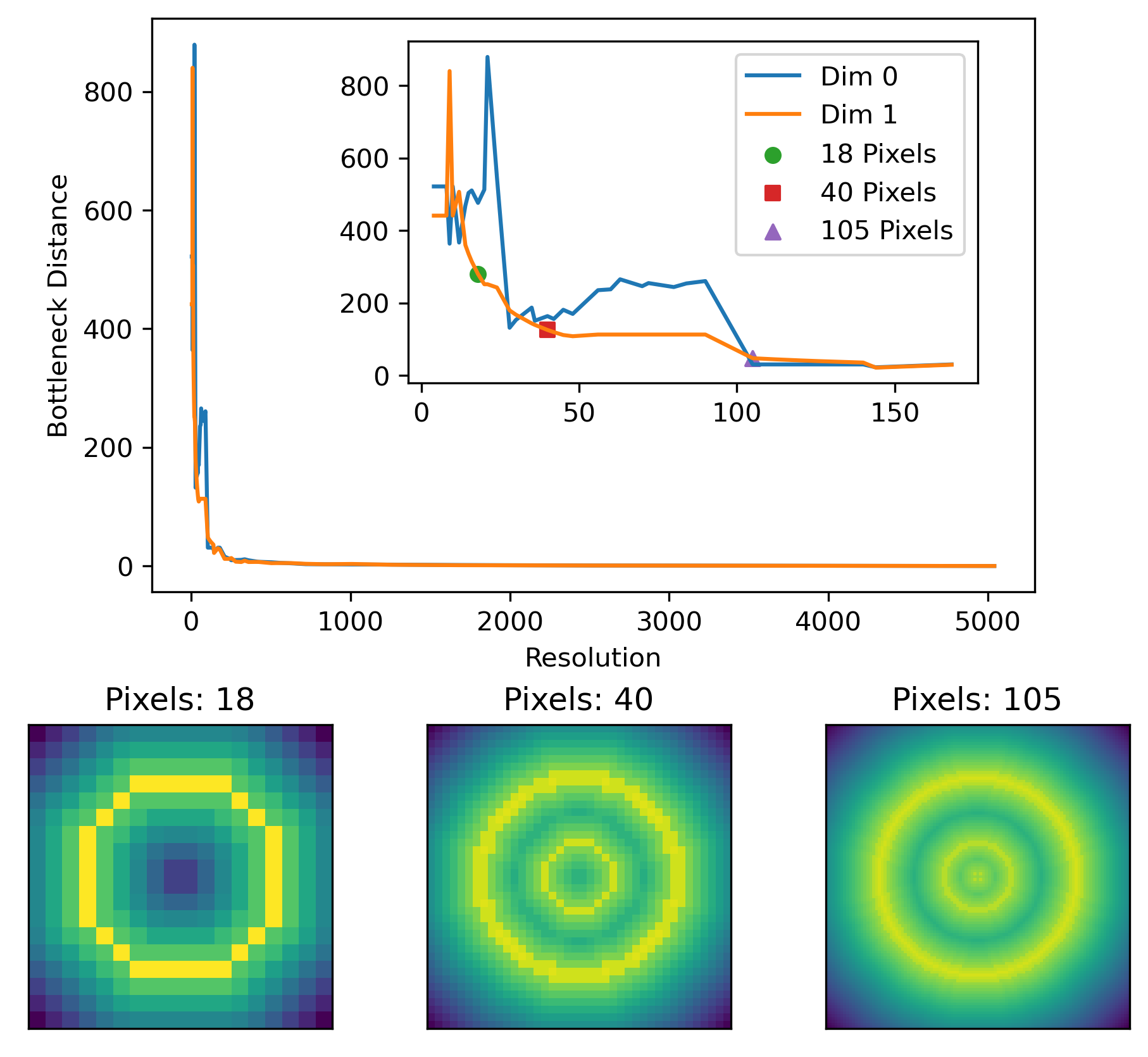}
    \caption{Top, the digital image we use as the ground truth with the corresponding DSEDT. Middle, the plot of the bottleneck distances between 1-dimensional persistence diagrams from each resolution and the highest resolution. Below, the DSEDT of the downsampled image at three different resolutions.}
    \label{fig:nested_rings}
\end{figure}

\subsection{Application to Porous Materials}
\label{sec:matsci}

It is common for material science examples to have zero reach. 
For example, packings of spherical or elliptical grains have $\reach{(X)}=0$, because two grains touch at a single point, while other porous materials, such as a metal foam, have sharp corners. 
Here we present three examples of micro-CT images of porous materials segmented into the two phases of solid and void~\cite{sheppard_network_2005}.
The images are subregions from a packing of spherical glass beads, a sandstone (from Castlegate) and an unconsolidated sandpack, each with $512^3$ voxels. 

In Fig.~\ref{fig:bead_packing} we depict slices through each 3D binary image and their signed Euclidean distance transforms, followed by plots of the bottleneck distances between persistence diagrams computed at different resolutions.  
In each case, we subsample the binary image to a lower resolution using the averaging technique described at the start of Section~\ref{sec:applications} with a threshold $t=0.5$. 
We only use kernel sizes that fit evenly within the image; with $N = 512$ we must use powers of two, $a = 2^k$.
As the persistence diagrams for these examples have so many points, we use an approximation algorithm for the bottleneck distances, as implemented in Giotto-TDA. Specifically, we use an approximation value of $\delta=0.1$ for the glass bead packing, and $\delta = 0.5$ for the other two to make these computations feasible.

The green curve shown on the plots of bottleneck distances in Fig.~\ref{fig:bead_packing} is the function $2\sqrt{3} r = 2\sqrt{3} (512/n)$, where $r =1$ is the voxel spacing for the  highest-resolution image, and $n$ is the resolution as measured by the number of voxels along each side of the cube.  
This is the bound on bottleneck distance we derived in Section~\ref{sec:PH_distance_transform} for the case that the voxel size $r < \mreach{X} / \sqrt{d}$.  
As already argued, the reach is zero for the glass bead pack, and likely to be zero for the sand pack, so the fact that this bound holds suggests that our estimation results are too generous, and/or that $\mleash{X}{s}$ is approximately $s$ despite the reach being zero for these examples.  

We note that the glass bead example has significantly larger bottleneck distances between its 1-dimensional persistence diagrams compared to the 0-dimensional distances for the resolutions $n=32, 64, 128, 256$. 
Inspection of the dimension one $\kPD{1}{D^n}$ diagrams and the original image shows that the larger values of $d_B(\kPD{1}{D^n}, \kPD{1}{D^{512}})$ are due to the presence of just a couple of high-persistence cycles near the boundary of the image, each cycle due to a bead that intersects two faces of the boundary. 
The sensitivity of the bottleneck distance to outliers is well known and is the reason Wasserstein distances between diagrams are often preferred. 

An important physical parameter associated with porous materials is the percolation threshold, $l_c$.  
This is the radius of the largest sphere that can pass through the pore space from one side of the image to the opposite. 
The distribution of points in $\kPD{0}{D^n}$ shows a clear signature of this critical length scale;  see~\cite{robins_percolating_2016} for details.  
As can be seen in the persistence diagrams 
in
Appendix~\ref{sec:app:matsci}, 
this signature yields the same estimate of $l_c$ for image resolutions $n=512,256, 128$ in the three example materials. 
This supports the designation of a $\Delta$-$\varepsilon$ plateau in the 0-dimensional bottleneck
distances with $\Delta = [64,512]$, and $\varepsilon = 10$ for all three samples.

\begin{figure}
    \centering
    Glass Bead Packing\\
    \includegraphics[width=0.58\columnwidth]{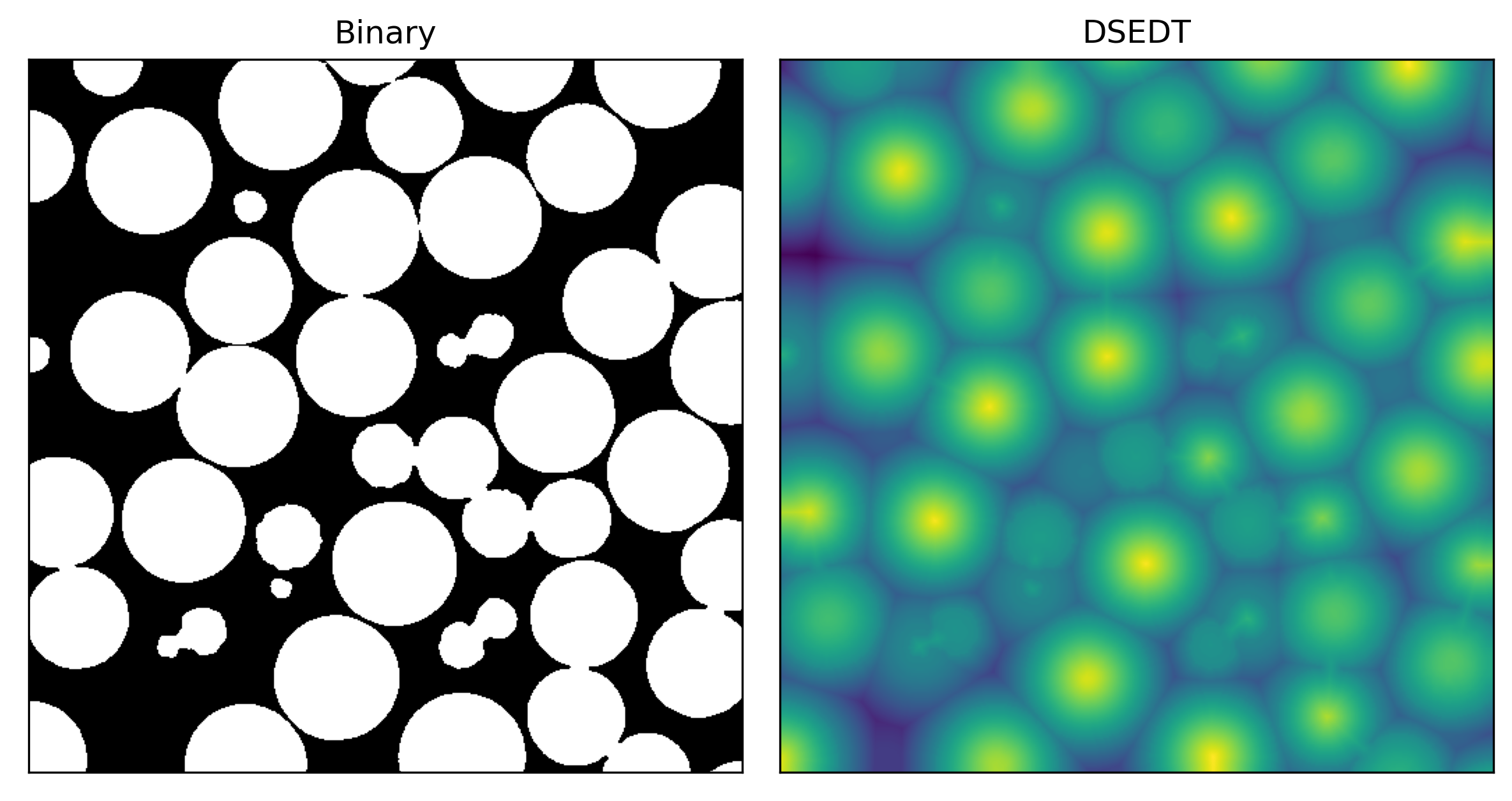}
    \includegraphics[width=0.40\columnwidth]{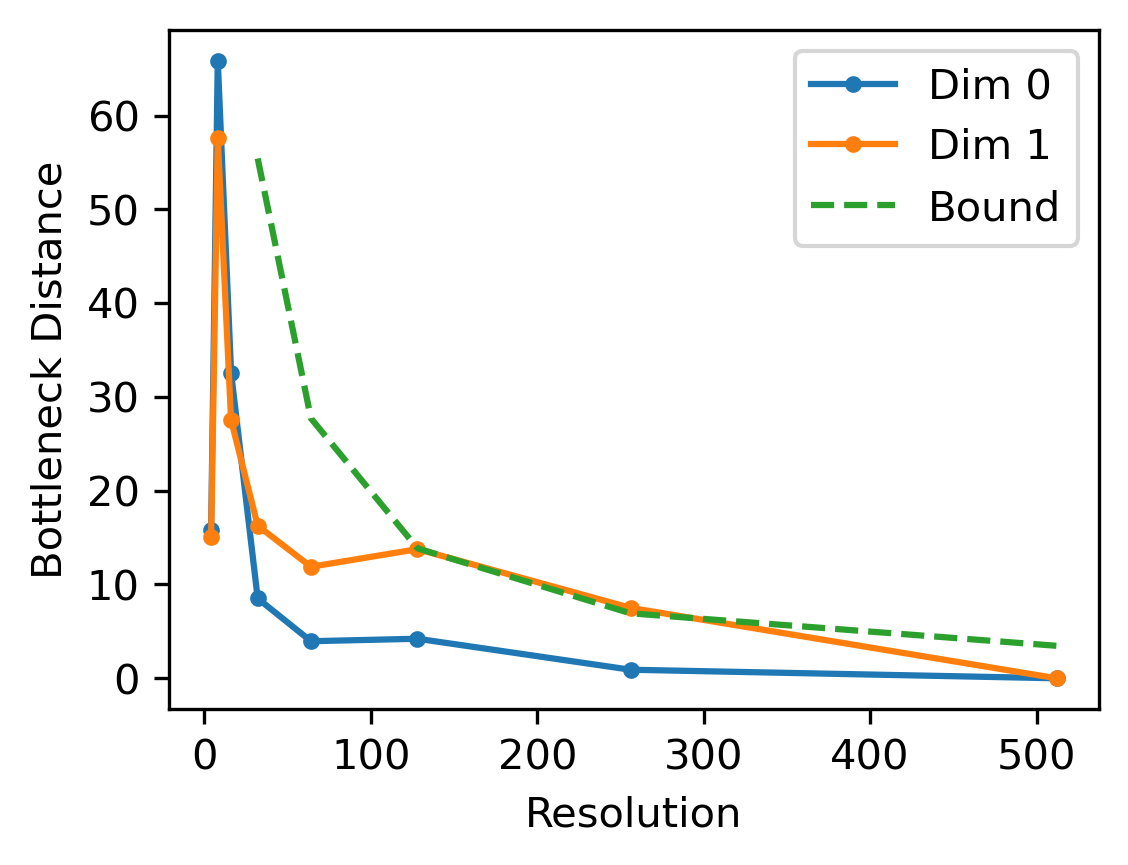} \\
    Castlegate Sandstone\\
    \includegraphics[width=0.58\columnwidth]{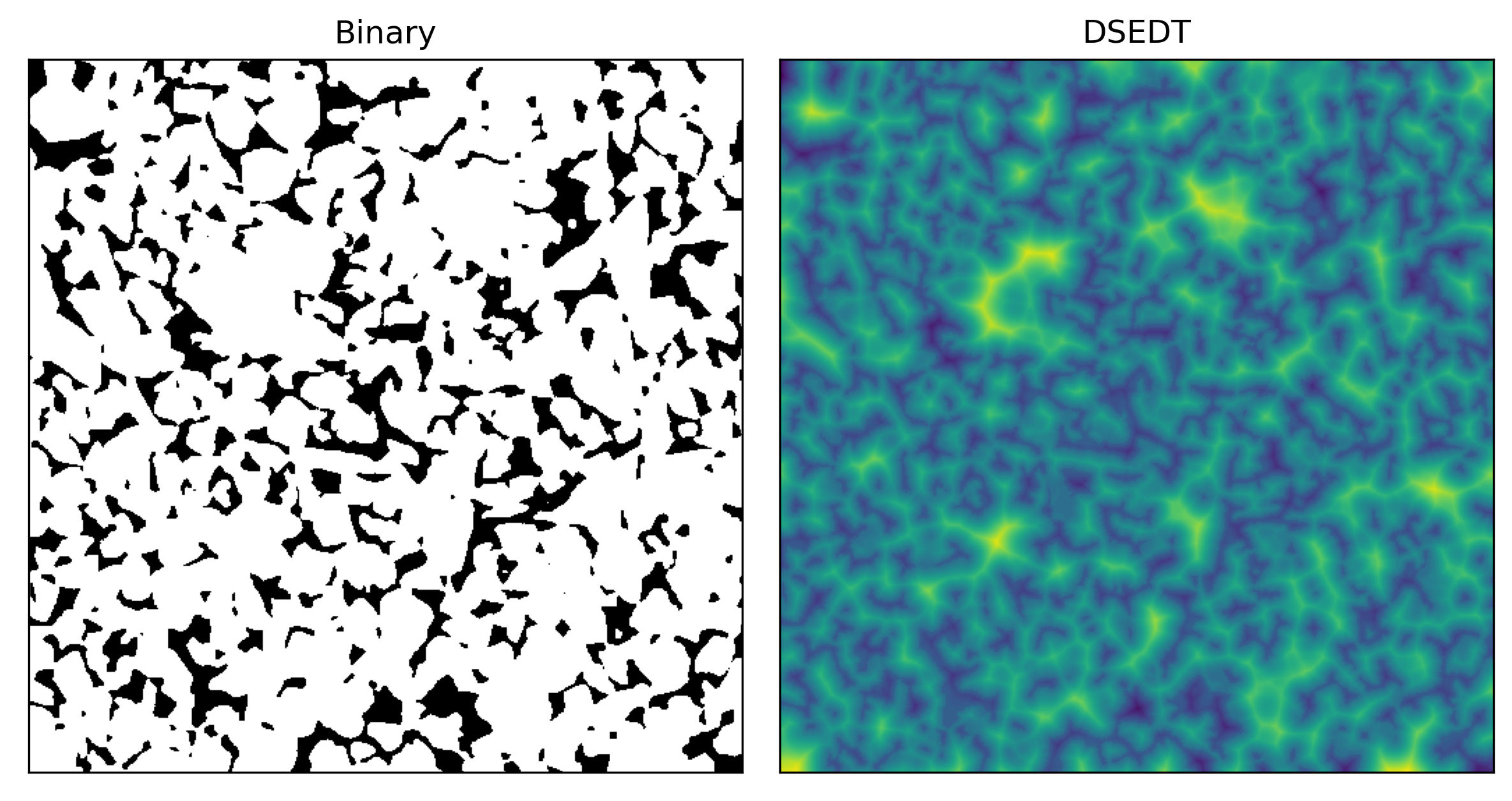}
    \includegraphics[width=0.40\columnwidth]{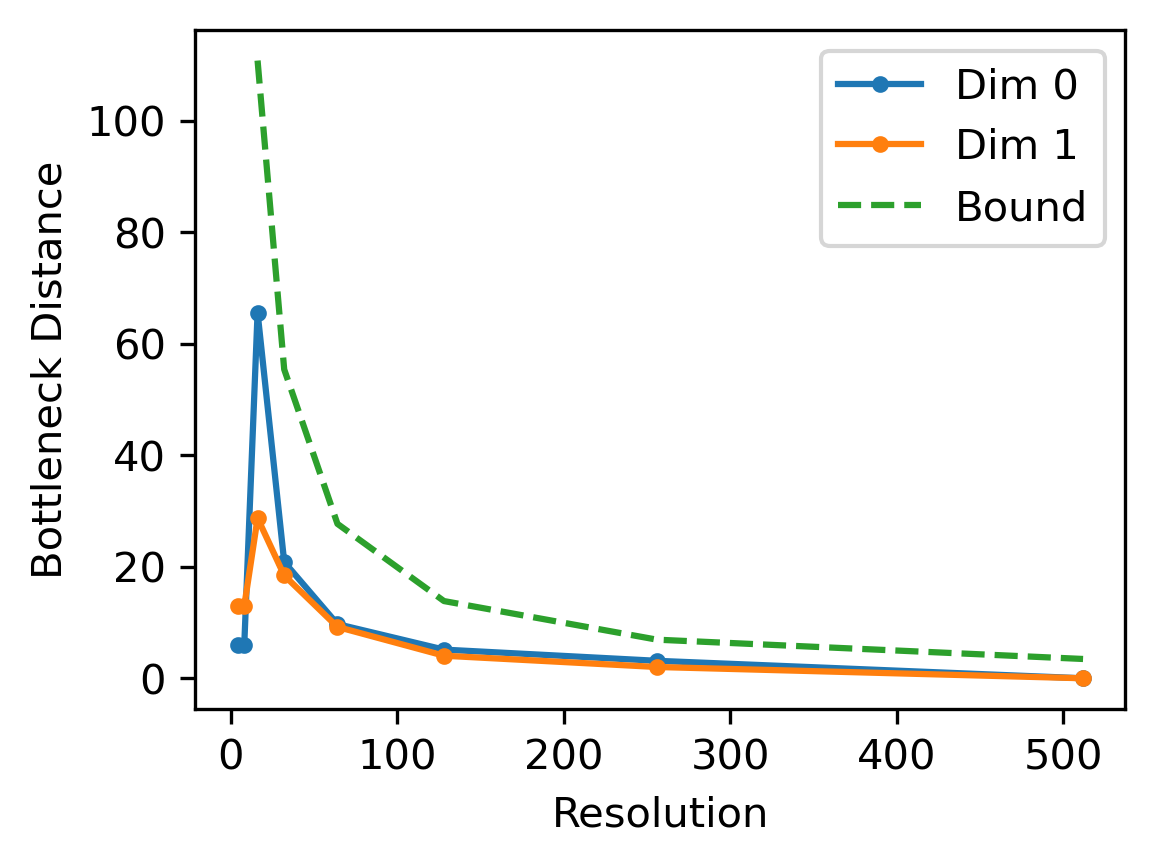} \\
    Sandpack\\
    \includegraphics[width=0.58\columnwidth]{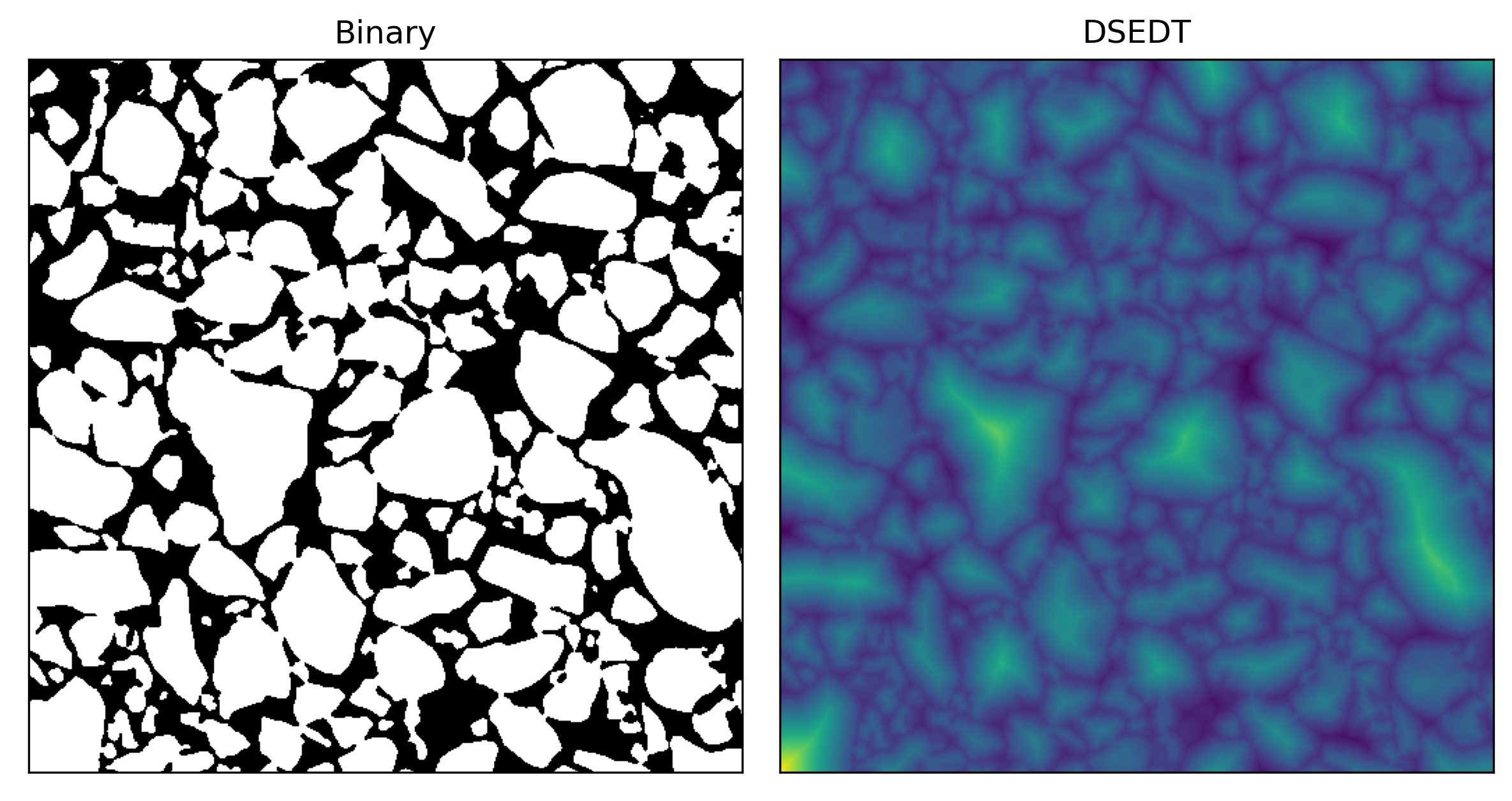}
    \includegraphics[width=0.40\columnwidth]{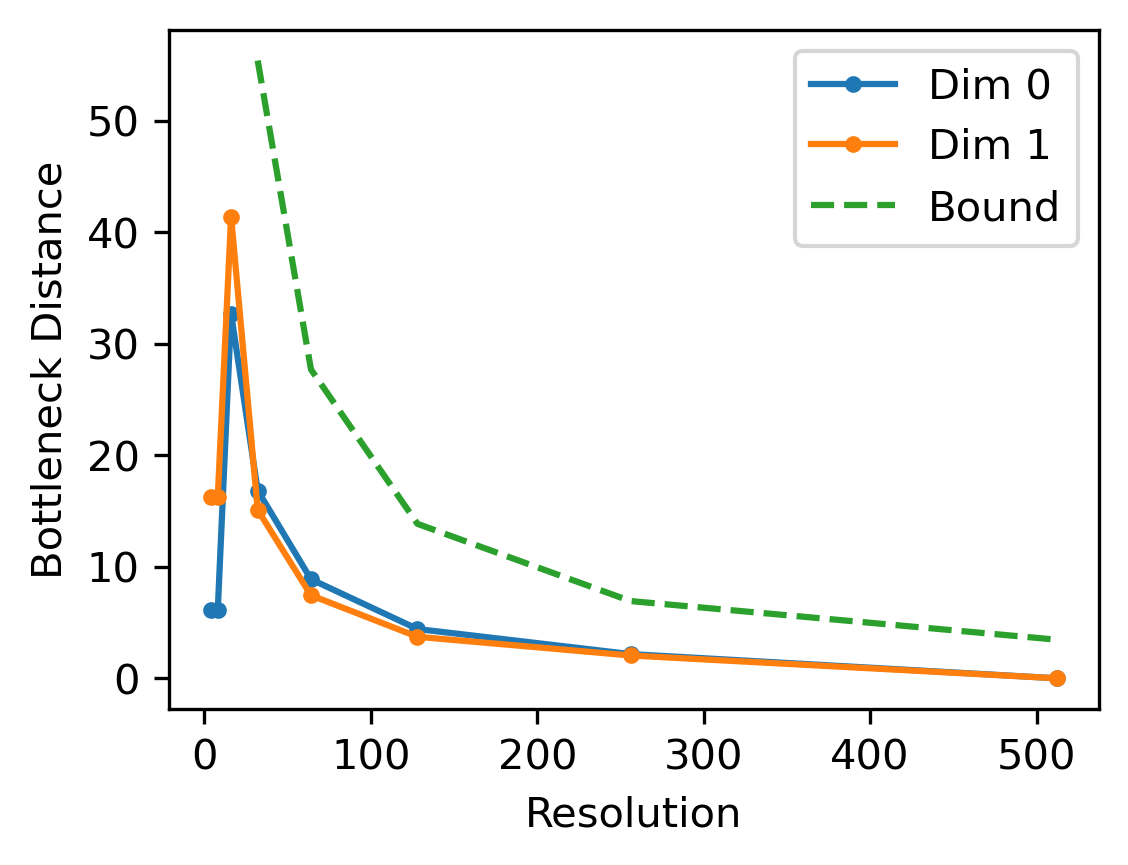}  
    \caption{2D slices of the binary and DSEDT images for the glass bead packing, Castlegate sandstone, and sand packing samples. The colormap for the DSEDT is scaled to the max and min values in each case. 
    The plots in the right column show the bottleneck distance between persistence diagrams for each downsampled resolution and the highest resolution.  The green dashed curve is the function $2\sqrt{3}(512/n)$, where $n$ is the image resolution. 
    }
    \label{fig:bead_packing}
\end{figure}

\section{Conclusion}\label{sec:conclusion}


This paper has presented two sets of results about the digital approximation of functions, and of solid objects. 
Section~\ref{sec:PH_grayscale} makes explicit how far points in a persistence diagram can move when working with locally averaged digital approximations to a function at different resolutions.  

Section~\ref{sec:PH_distance_transform} has considered a more subtle question of how close are the persistence diagrams of a solid object $X$ and a digital approximation to it, $X(r,t)$, when they are filtered by the continuous and discrete signed Euclidean distance transforms respectively.    
These results are the analogues of seminal work for point-cloud approximations of manifolds, which do not translate easily to the digital image setting. 

In general, the difference between the continuous and discrete distance transforms is given in terms of the voxel diameter plus the leash (Theorem~\ref{thm:main}).  
The leash may be large, for example, if $X$ is a material that has regions of micro-porosity or extended structures with geometric detail below the voxel size $r$.
When the voxel diameter $\sqrt{d}r < \mreach{X}$, we show that the leash, $\mleash{X}{\sqrt{d}r} = \sqrt{d}r$, so that the bottleneck distance between persistence diagrams is bounded by $2\sqrt{d} r$ (Corollary~\ref{cor:reach_main} and Remark~\ref{remark:tightening_cor}). 

The practical consequences of these results are that we expect the persistent homology to converge with increasing image resolution, but the error may not be monotonic, especially when considering images at low resolutions.  
When there is no prior information about the critical length scales of the object $X$, the x-ray density function given by the CT-scan can be interpreted as a (noisy) approximation to the density function $\rho_r$, and the Hausdorff distance between the two threshold choices $X(r,1-\epsilon)$ and $X(r,\epsilon)$ provides an estimate of the possible error in the persistence diagrams (Theorem~\ref{thm:Vanessas_bound}). 

Ultimately, our results provide guidance to practioners on how to balance the time, cost, and processing power required for image acquisition and persistent homology computations against the desired level of accuracy in their results.



\bibliographystyle{IEEEtran}
\bibliography{refs}

\appendices

\section{Federer's Lemma about the reach} \label{sec:app:federer}
This section describes a lemma about the reach, proven by Federer in \cite[Theorem 4.8 (12)]{federer1959curvature}, stating that when walking from a point $a \in \partial A$ orthogonally away from a closed set $A$ for distance $r<\reach(A)$, then the closest point of $A$ is still the starting point $a$, and thus the distance to $A$ is $r$. Before we can state this rigorously in Lemma~\ref{lemma:federer_rigorous}, we need formal definitions \cite[Definitions 4.3 and 4.4]{federer1959curvature} of tangent vectors and normal vectors:

\begin{definition}[Tangent vector] \label{def:tangent}
Let $A \subseteq \R^d$ be closed and $a\in \partial A$. Then $u \in \R^d$ is a \emph{tangent vector of $A$ at $a$} if either $u=0$ or for every $\varepsilon > 0$ there exists a point $b \in A$ with
\begin{align*}
    0 < \norm{b-a} < \varepsilon \textrm{ and } 
    \norm{ \frac{b-a}{\norm{b-a}} - \frac{u}{\norm{u}} } < \varepsilon.
\end{align*}
\end{definition}

\begin{definition}[Normal vector] \label{def:normal}
Let $A \subseteq \R^d$ be closed and $a\in \partial A$. Then $v \in \R^d$ is a \emph{normal vector of $A$ at $a$} if for every tangent vector $u$ of $A$ at $a$, the scalar product $v \cdot u$ is non-positive. 
\end{definition}

\begin{lemma}[Federer] \label{lemma:federer_rigorous}
Let $A \subseteq \R^d$ be closed and $a\in \partial A$. Let $\reach(A)>r>0$.
Let $v$ be a normal vector of $A$ at $a$. Then,
\begin{align*}
    d(a + r \frac{v}{\norm{v}}, A) = d(a + r \frac{v}{\norm{v}}, a)=r.
\end{align*}
\end{lemma}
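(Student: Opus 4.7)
The plan is to reduce the lemma to showing that the open ball $B_r(q)$ with $q = a + r v/\norm{v}$ is disjoint from $A$. Combined with $a \in A$ and $\norm{q - a} = r$, this yields $d(q, A) = r$; uniqueness of the nearest point then follows from $r < \reach(A)$ and the definition of reach. Without loss of generality I normalize $\norm{v} = 1$. For $s \in [0, r]$ set $q_s = a + sv$; since $d(q_s, A) \le s < \reach(A)$ the nearest-point projection $\xi_A(q_s)$ is well-defined, and $\xi_A$ is continuous on the open $\reach(A)$-neighborhood of $A$ (a standard consequence of the definition of reach: any limit point of $\xi_A(p_n)$ realizes $d(p,A)$ and so equals $\xi_A(p)$ by uniqueness). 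The goal is therefore to show $\xi_A(q_s) = a$ for every $s \in [0, r]$.

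The natural continuity-based attempt sets $T = \sup\{s \in [0, r] : \xi_A(q_t) = a \text{ for all } t \in [0, s]\}$, so that $\xi_A(q_T) = a$ by continuity, and then aims to derive a contradiction from $T < r$. Pick a sequence $s_k \downarrow T$ with $b_k := \xi_A(q_{s_k}) \neq a$; continuity forces $b_k \to a$. Expanding $\norm{q_{s_k} - b_k}^2 < s_k^2 = \norm{q_{s_k}-a}^2$ yields the key inequality
\begin{equation*}
\norm{b_k - a}^2 < 2 s_k\, v \cdot (b_k - a).
\end{equation*}
Setting $u_k = (b_k - a)/\norm{b_k - a}$ and passing to a convergent subsequence $u_k \to u$, one sees that $u$ is a unit tangent vector of $A$ at $a$ in the sense of Definition~\ref{def:tangent}, while the inequality gives $v \cdot u_k > 0$ and hence $v \cdot u \ge 0$ in the limit. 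The normal-vector condition $v \cdot u \le 0$ from Definition~\ref{def:normal} then forces $v \cdot u = 0$.

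I expect the main obstacle to be that this first-order argument is not quite enough: the defining property of normality only yields $v \cdot u = 0$, not an outright contradiction, because the $b_k$ could in principle approach $a$ tangentially, with $\norm{b_k-a}$ decaying faster than $s_k\, v\cdot u_k$. Closing the gap requires second-order information arising from the strict inequality $r < \reach(A)$, not merely from $v$ being normal. Following Federer, one route is to invoke the $C^{1,1}$-regularity of $\tfrac{1}{2} d(\cdot, A)^2$ on the open $\reach(A)$-neighborhood of $A$, together with the identity $\nabla \tfrac{1}{2} d(\cdot, A)^2(p) = p - \xi_A(p)$; this yields a quantitative curvature-type bound that is incompatible with the displayed inequality once $b_k$ is close enough to $a$. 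A cleaner packaging is to combine the elementary star-shapedness of $\{p : \xi_A(p) = a\}$ with respect to $a$ (which follows from the affine identity $\norm{(a + \lambda(p - a)) - c}^2 - \norm{(a + \lambda(p - a)) - a}^2 = \norm{a - c}^2 + 2\lambda (a - c) \cdot (p - a)$ being nonnegative at $\lambda = 0, 1$ whenever $\xi_A(p) = a$) with the reach hypothesis to propagate the equality $\xi_A(q_s) = a$ past $T$. For the fully rigorous argument, we appeal to \cite[Thm.~4.8(12)]{federer1959curvature}.
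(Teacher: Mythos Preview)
The paper does not give its own proof of this lemma: in Appendix~\ref{sec:app:federer} it is stated as a known result and attributed directly to \cite[Theorem~4.8(12)]{federer1959curvature}. Your proposal ends in exactly the same place, so at the level of what is actually established the two agree.

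Your sketch toward an independent argument is candid about its gap, and you have located it correctly: from $\norm{b_k-a}^2 < 2s_k\, v\cdot(b_k-a)$ and Definition~\ref{def:normal} you only get $v\cdot u = 0$, not a contradiction. I would add that neither of the two closing routes you propose is yet a proof. The star-shapedness observation is true but points the wrong way: it shows that if $\xi_A(q_s)=a$ for some $s$ then the same holds on $[0,s]$, which is precisely the direction you already have and does nothing to push past $T$. The $C^{1,1}$-regularity route is the one Federer actually develops, but invoking it is effectively invoking \cite[Theorem~4.8]{federer1959curvature} again. So the final citation is doing all of the work, exactly as in the paper; the preceding discussion is heuristic motivation rather than a proof.
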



\section{Tighter Bounds} 
\label{sec:app:tightening}
As mentioned in 
Remark~\ref{remark:tightening_cor},
the bound on the bottleneck distance in 
Corollary~\ref{cor:reach_main} 
can be tightened from $3 \sqrt{d} r$ to $2 \sqrt{d} r$. 
To prove this, we first need to prove the tighter version of 
Lemma~\ref{lemma:generalizeCS-E-H}
mentioned in 
Remark~\ref{remark:tighteningCS-E-H}:
\begin{lemma} 
\label{lemma:app:remark:tighteningCS-E-H}
Let $A \subseteq \R^d$ have boundary with positive reach, and let $B \subseteq \R^d$. \\
If $\max \{ \dH(A,B),\dH(A^\compl,B^\compl) \} < \mreach{A}$, then
\begin{align*}
    \norminf{\dt{A}-\dt{B}} \leq \max \{ \dH(A,B), \ \dH(A^\compl,B^\compl) \}.
\end{align*} 
\end{lemma}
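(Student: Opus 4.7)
The plan is to refine the four-case analysis in the proof of Lemma~\ref{lemma:generalizeCS-E-H}. Cases 1 and 2, in which $p$ lies in both $A$ and $B$ or in neither, already deliver the bound $h:=\max\{\dH(A,B),\dH(A^\compl,B^\compl)\}$ with no change. In the mixed Cases 3 ($p\in A$, $p\notin B$) and 4 ($p\notin A$, $p\in B$), the original argument only produced a sum of two terms, each individually at most $h$, so the naive bound there is $2h$. Under the positive-reach hypothesis $h<\mreach{A}$, I will upgrade this to $h$, which completes the lemma. The main idea is that positive reach upgrades the Hausdorff inequality into an \emph{erosion inclusion} that forces a point obtained by walking normally into $A$ (or into $A^\compl$) to lie in $B$ (respectively $B^\compl$).

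I will treat Case 3 in detail; here $|\dt{A}(p)-\dt{B}(p)|=d(p,A^\compl)+d(p,B)$. The key structural observation is that $\sup_{y\in B^\compl}d(y,A^\compl)\le h$ is equivalent, by contraposition, to the open-erosion inclusion $\{y\in\R^d : d(y,A^\compl)>h\}\subseteq B$. Set $s:=d(p,A^\compl)$; since $p\in B^\compl$ we have $s\le h<\reach(\cl(A^\compl))$, so the nearest point $q:=\xi_{\cl(A^\compl)}(p)$ exists uniquely and lies in $\partial A$ (the case $s=0$ is trivial, since then $d(p,B)\le h$ directly). The unit vector $u:=(p-q)/s$ is a normal of $\cl(A^\compl)$ at $q$, and Federer's Lemma~\ref{lemma:federer_rigorous} applied to $\cl(A^\compl)$ yields $d(q+ru,A^\compl)=r$ for every $r\in[0,\reach(\cl(A^\compl)))$. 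Choosing any $r\in(h,\reach(\cl(A^\compl)))$ places $p_r:=q+ru$ in the open $h$-erosion, hence in $B$; letting $r\searrow h$ pushes $p_h$ into $\cl(B)$, so $d(p_h,B)=0$. The triangle inequality then gives $d(p,B)\le d(p,p_h)=h-s$, and summing produces $d(p,A^\compl)+d(p,B)\le h$.

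Case 4 is entirely symmetric, with $A$ and $A^\compl$ interchanged: the hypothesis $\dH(A,B)\le h$ is rephrased as $\{y:d(y,A)>h\}\subseteq B^\compl$, the reach of $A$ itself (which exceeds $h$ since $h<\mreach{A}$) lets me apply Federer's Lemma at $q:=\xi_A(p)$ along the outward normal into $A^\compl$, and the same limit argument yields $d(p,B^\compl)\le h-d(p,A)$. The main obstacle I anticipate is purely technical: the erosion inclusion is strict ($>h$), so $p_h$ itself need not belong to $B$ (or $B^\compl$); this is harmless once I take $r$ strictly greater than $h$ and pass to the limit in the closed set $\cl(B)$ (respectively $\cl(B^\compl)$), which suffices because $d(p_h,B)=0$ follows from $p_h\in\cl(B)$. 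The reach hypothesis is used in exactly two places within each case: to guarantee uniqueness of the nearest point $q$, and to keep Federer's isometric-walk conclusion valid for some radius strictly exceeding $h$.
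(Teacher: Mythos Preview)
Your proposal is correct and follows essentially the same route as the paper's proof: Cases 1 and 2 are left untouched, and in Case 3 (with Case 4 symmetric) you walk from the nearest point $q\in\partial\cl(A^\compl)$ along the normal direction a distance just exceeding $h$, invoke Federer's Lemma~\ref{lemma:federer_rigorous} to force the resulting point into $B$, and use collinearity to collapse the sum $d(p,A^\compl)+d(p,B)$ to at most $h$. The only cosmetic differences are that the paper splits off the boundary sub-case $p\in\partial A$ explicitly (your $s=0$ remark) and phrases the overshoot with an $\varepsilon\to 0$ argument rather than your limit $r\searrow h$ through $\cl(B)$.
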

\begin{proof}
Similarly to the proof of 
Lemma~\ref{lemma:generalizeCS-E-H}, 
we distinguish between 4 different cases for an arbitrary point $p$.
Cases 1 and 2 stay unchanged.

\emph{Case 3: $p \in A ,\ p \notin B$.}
As in the proof of 
Lemma~\ref{lemma:generalizeCS-E-H}, 
we need to bound $\absolute{\dt{A}(p)-\dt{B}(p)} = \absolute{-d(A^\compl,p)-d(p,B)} = d(A^\compl,p) + d(p,B)$.
To prove Case 3, we distinguish between two sub-cases.

\emph{Case 3a: $p \in \partial A \cap A ,\ p \notin B$.}
As $p\in \partial A$, there is a sequence in $A^\compl$ converging to $p$, proving $d(A^\compl,p)=0$.
Thus,
\begin{align*}
\absolute{\dt{A}(p)-\dt{B}(p)} &= d(A^\compl,p) + d(p,B) \leq 0 + \dH(A,B) \\
&\leq \max \{ \dH(A,B), \ \dH(A^\compl,B^\compl) \}.
\end{align*}

\emph{Case 3b: $p \in \inter(A) ,\ p \notin B$.}
Let $a \in \partial \cl(A^\compl)$ be such that $d(A^\compl,p) = d(\cl(A^\compl),p) = d(a,p)$. 
As $a \neq p$, the vector $p-a$ has non-zero length. From $d(\cl(A^\compl),p) = d(a,p)$ follows that $n_a = \frac{p-a}{\norm{p-a}}$ is a unit normal vector of $\cl(A^\compl)$ at $x$ (for a rigorous definition of normal vector, see 
Definition~\ref{def:normal}).
Let $\varepsilon \in (0,\mreach{A}-\dH(A^\compl,B^\compl))$.
Let $x = a + (\dH(A^\compl,B^\compl) + \varepsilon) n_a$. 
As $\dH(A^\compl,B^\compl) + \varepsilon < \reach(\partial A) \leq \reach(\cl(A^\compl))$, 
Lemma~\ref{lemma:federer_rigorous} 
yields 
\begin{align*}
d(A^\compl, x) &= d(\cl(A^\compl), x) = d(a, x) = \dH(A^\compl,B^\compl) + \varepsilon  \\
&> \dH(A^\compl,B^\compl).
\end{align*}
Hence, by the definition of Hausdorff distance, $x$ cannot be in $B^\compl$ and thus $x \in B$. 
The points $a$, $p$, and $x$ lie on a straight line by construction. To prove $d(a,p) < d(a,x)$, let us assume $d(a,p) \geq d(a,x)$ 
which gives a contradiction between $d(A^\compl,p) = d(a,p) \geq d(a,x) > \dH(A^\compl,B^\compl)$ and $p \in B^\compl$. Therefore, $p$ lies on the line segment between $a$ and $x$.
With this we can bound $d(A^\compl,p)+d(p,B) \leq d(a,p)+d(p,x) = d(a,x) = \dH(A^\compl,B^\compl) + \varepsilon$. 
As this bound is true for every $\varepsilon \in (0,\reach(A^\compl)-\dH(A^\compl,B^\compl))$, we follow
\begin{align*}
\absolute{\dt{A}(p)-\dt{B}(p)} &= d(A^\compl,p)+d(p,B) \leq \dH(A^\compl,B^\compl) \\
&\leq \max \{ \dH(A,B), \ \dH(A^\compl,B^\compl) \}.
\end{align*}

\emph{Case 4: $p \notin A ,\ p \in B$.}
Analogously $\absolute{\dt{A}(p)-\dt{B}(p)} \leq \max \{ \dH(A,B), \ \dH(A^\compl,B^\compl) \}$.
\end{proof}

With this we can prove the tighter version of 
Corollary~\ref{cor:reach_main}, 
mentioned in 
Remark~\ref{remark:tightening_cor}:
\begin{corollary} 
\label{cor:app:tighten}
Using the notation of 
Theorem~\ref{thm:main}.
If $r < \frac{1}{\sqrt{d}} \mreach{X}$, then 
\begin{align*}
\dB(\PD{\dt{X}}, \PD{\tg}) \leq 2 \sqrt{d} r
\end{align*}
\end{corollary}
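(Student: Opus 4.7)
The plan is to retrace the proof of Corollary~\ref{cor:reach_main} but swap in the sharper Hausdorff-type bound of Lemma~\ref{lemma:app:remark:tighteningCS-E-H} in place of the generic Lemma~\ref{lemma:generalizeCS-E-H}. Recall that the $3\sqrt{d}r$ bound came from writing $\dt{X}-\tg = (\dt{X}-\dt{\tcx}) + (\dt{\tcx}-\tg)$, bounding the first term by the sum of two suprema from Lemma~\ref{lemma:generalizeCS-E-H} (which gave $\sqrt{d}r + \mleash{X}{\sqrt{d}r}$) and the second by $\sqrt{d}r$ via Lemma~\ref{lemma:DSEDTvsCSEDT}. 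Under the reach hypothesis, Lemma~\ref{lemma:reach_leash} collapses the leash to $\sqrt{d}r$, giving $3\sqrt{d}r$ total. The improvement will come from replacing the sum-of-suprema bound on $\norminf{\dt{X}-\dt{\tcx}}$ by a max-of-Hausdorff-distances bound, shaving off the extra $\sqrt{d}r$.

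First, I would verify the hypothesis of Lemma~\ref{lemma:app:remark:tighteningCS-E-H} for the pair $(A,B)=(X,\tcx)$. Items 1 and 3 of Lemma~\ref{lemma:bounding_suprema_by_leash} give
\[
\dH(X,\tcx) \leq \max\{\leash{X}{\sqrt{d}r},\,\sqrt{d}r\},
\]
and items 2 and 4 give the analogous bound for the complements with $\leash{\cl(X^\compl)}{\sqrt{d}r}$. Since $r < \tfrac{1}{\sqrt d}\mreach{X}$, Lemma~\ref{lemma:reach_leash} applied to both $X$ and $\cl(X^\compl)$ yields $\leash{X}{\sqrt{d}r} = \leash{\cl(X^\compl)}{\sqrt{d}r} = \sqrt{d}r$. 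Therefore
\[
\max\{\dH(X,\tcx),\dH(X^\compl,\tcx^\compl)\} \leq \sqrt{d}r < \mreach{X},
\]
so the hypothesis of Lemma~\ref{lemma:app:remark:tighteningCS-E-H} holds.

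Applying that lemma then gives $\norminf{\dt{X}-\dt{\tcx}} \leq \sqrt{d}r$. Combined with Lemma~\ref{lemma:DSEDTvsCSEDT}'s bound $\norminf{\dt{\tcx}-\tg} \leq \sqrt{d}r$ and the triangle inequality, we obtain $\norminf{\dt{X}-\tg} \leq 2\sqrt{d}r$. The stability theorem \cite{cohen2007stability} then promotes this to the desired bound $\dB(\PD{\dt{X}},\PD{\tg}) \leq 2\sqrt{d}r$.

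The only subtle point, and the one deserving care, is confirming the strict inequality $\max\{\dH(X,\tcx),\dH(X^\compl,\tcx^\compl)\} < \mreach{X}$ rather than merely $\leq$. This is where the strict hypothesis $r < \tfrac{1}{\sqrt d}\mreach{X}$ (as opposed to $\leq$) is consumed: it guarantees $\sqrt{d}r < \mreach{X}$, which is exactly what Lemma~\ref{lemma:app:remark:tighteningCS-E-H} demands. Everything else is a direct reassembly of lemmas already proved, so no new technical obstacle arises.
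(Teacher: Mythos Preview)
Your proposal is correct and follows essentially the same approach as the paper: bound $\max\{\dH(X,\tcx),\dH(X^\compl,\tcx^\compl)\}$ by $\sqrt{d}r$ via Lemma~\ref{lemma:bounding_suprema_by_leash} and Lemma~\ref{lemma:reach_leash}, then invoke Lemma~\ref{lemma:app:remark:tighteningCS-E-H} (in place of Lemma~\ref{lemma:generalizeCS-E-H}) together with Lemma~\ref{lemma:DSEDTvsCSEDT}, the triangle inequality, and stability. Your explicit remark that the strict hypothesis $r<\tfrac{1}{\sqrt d}\mreach{X}$ is exactly what secures the strict inequality needed by Lemma~\ref{lemma:app:remark:tighteningCS-E-H} is a nice touch the paper leaves implicit.
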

\begin{proof}
Note that $\max \{ \dH(X,\tcx), \ \dH(X^\compl,\tcx^\compl) \}$ is the maximum of the $4$ suprema from 
Lemma~\ref{lemma:bounding_suprema_by_leash}.
We thus use 
Lemma~\ref{lemma:bounding_suprema_by_leash} 
to bound $\max \{ \dH(X,\tcx), \ \dH(X^\compl,\tcx^\compl) \}$ by $\max \{ \mleash{X}{\sqrt{d}r} , \sqrt{d}r \}$, which is $\sqrt{d}r$ by 
Lemma~\ref{lemma:reach_leash}.


Similar to the proof of 
Lemma~\ref{lemma:bound_with_suprema}, 
we combine the stability theorem of persistent homology \cite{cohen2007stability}, the triangle inequality, and 
Lemma~\ref{lemma:DSEDTvsCSEDT}, 
now with the new Lemma~\ref{lemma:app:remark:tighteningCS-E-H}:
\begin{align*}
&\dB(\PD{\dt{X}}, \PD{\tg})\\
&\leq \norminf{\dt{X} - \tg} \\
&\leq \norminf{\dt{X} - \dt{\tcx}} + \norminf{ \dt{\tcx} - \tg} \\
&\leq \max \{ \dH(X,\tcx), \ \dH(X^\compl,\tcx^\compl) \} + \sqrt{d} r \\
&\leq 2 \sqrt{d} r.
\end{align*} 
\end{proof}

\section{Bottleneck Distances with Respect to Pixel Size}
\label{sec:app:dists_pixelsize}

For all our examples we present the plots of bottleneck distances vs.\ image resolution which can exhibit the $\Delta$-$\varepsilon$ plateau behavior described in Section~\ref{sec:theory_not_apply}. 
However there is another way of visualizing this, considering bottleneck distance vs. pixel size, defined in Section~\ref{subsec:images_cont_func} and denoted by $r$ throughout the paper.
Figures~\ref{fig:small_dot_ps}, \ref{fig:nested_rings_ps}, and \ref{fig:bead_packing_ps} present this alternate perspective on Figures~\ref{fig:small_dot}, \ref{fig:nested_rings}, and \ref{fig:bead_packing}, respectively.

The important thing to note is that in the figures presented in terms of resolution, the $x$-axis goes from low to high resolution from left to right. 
However, in the figures presented here in terms of pixel size, the highest resolution image (considered our ``ground truth'') has a pixel size of $1$, thus the $x$-axis in these plots go from high to low resolution from left to right. 
Thus, the plateau behavior we look for occurs from right to left in the plots with respect to pixel size.

\begin{figure}
    \centering
    \includegraphics[align=c,width=0.99\columnwidth]{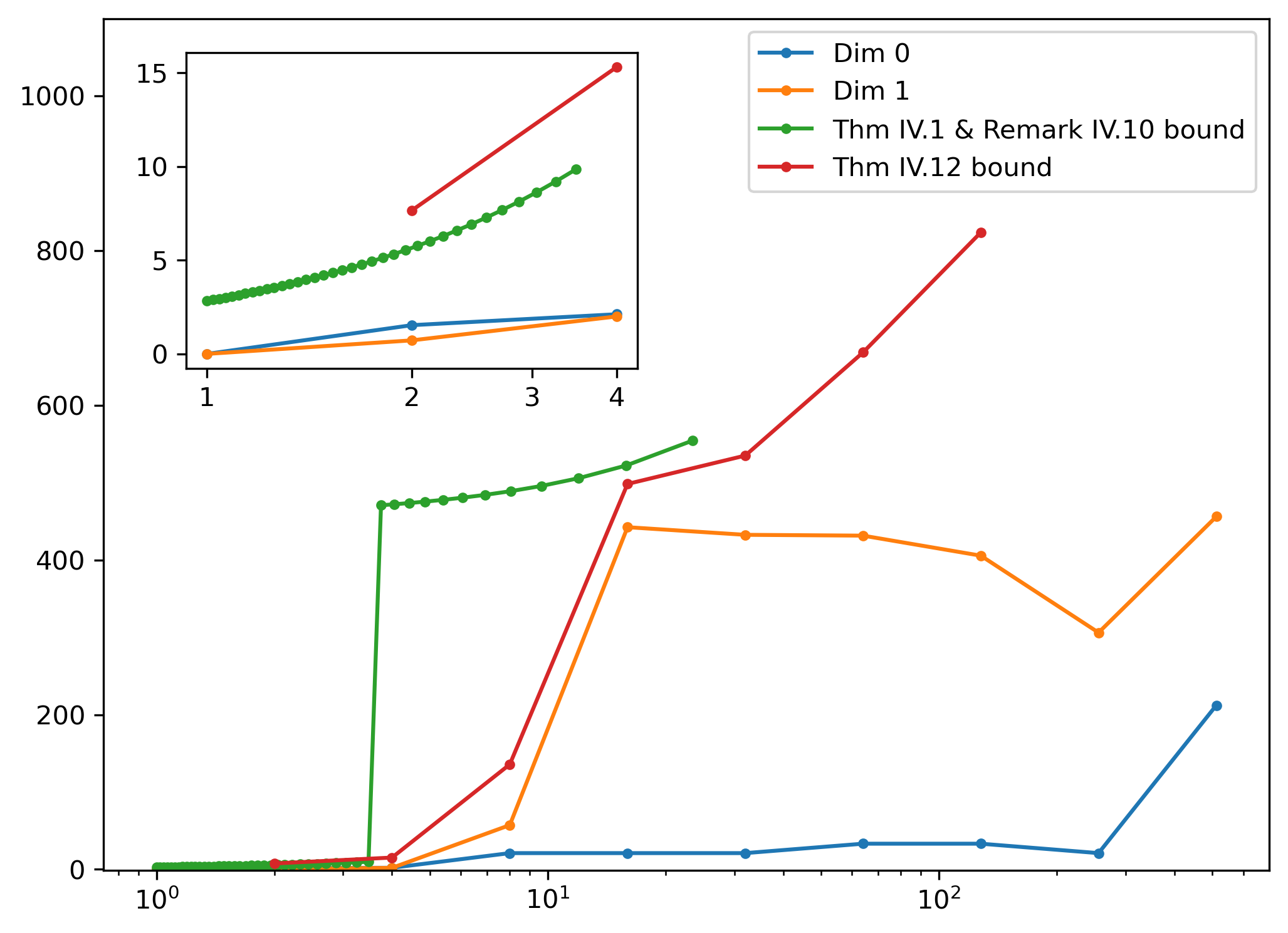}    \\
    \caption{ 
    Bottleneck distances comparing persistence diagrams for each pixel size with the image with pixel size of 1. The distances for dimensions 0 and 1 are shown along with the bounds provided by Theorem~\ref{thm:main} with Remark~\ref{remark:tightening_cor} (green) and Theorem~\ref{thm:Vanessas_bound} (red).
    Note that the $x$-axis is on a logarithmic scale. 
    This figure contains the same information as Fig.~\ref{fig:small_dot} but presented in terms of pixel size instead of resolution.}  
    \label{fig:small_dot_ps}
\end{figure}

\begin{figure}
    \centering
    \includegraphics[width=0.99\columnwidth]{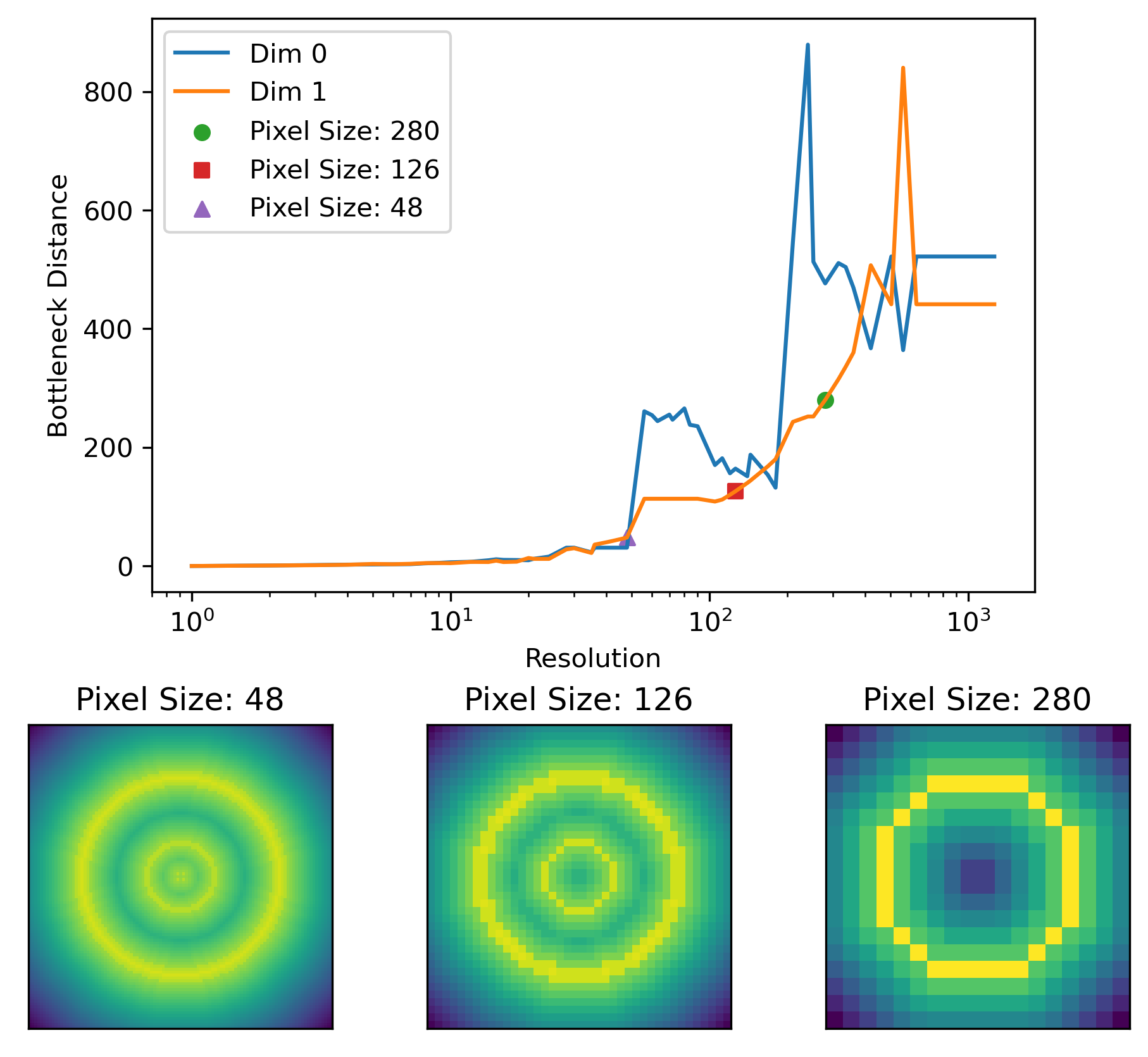}
    \caption{Top: the plot of the bottleneck distances between 1-dimensional persistence diagrams from each pixel size and the image with pixel size of 1. 
    Note that the $x$-axis is on a logarithmic scale. 
    Bottom: the DSEDT of the downsampled image at three different pixel sizes. This figure contains the same information as Fig.~\ref{fig:nested_rings} but presented in terms of pixel size instead of resolution.}
    \label{fig:nested_rings_ps}
\end{figure}

\begin{figure}
    \centering
    Glass Bead Packing \\ 
    \includegraphics[width=0.8\columnwidth]{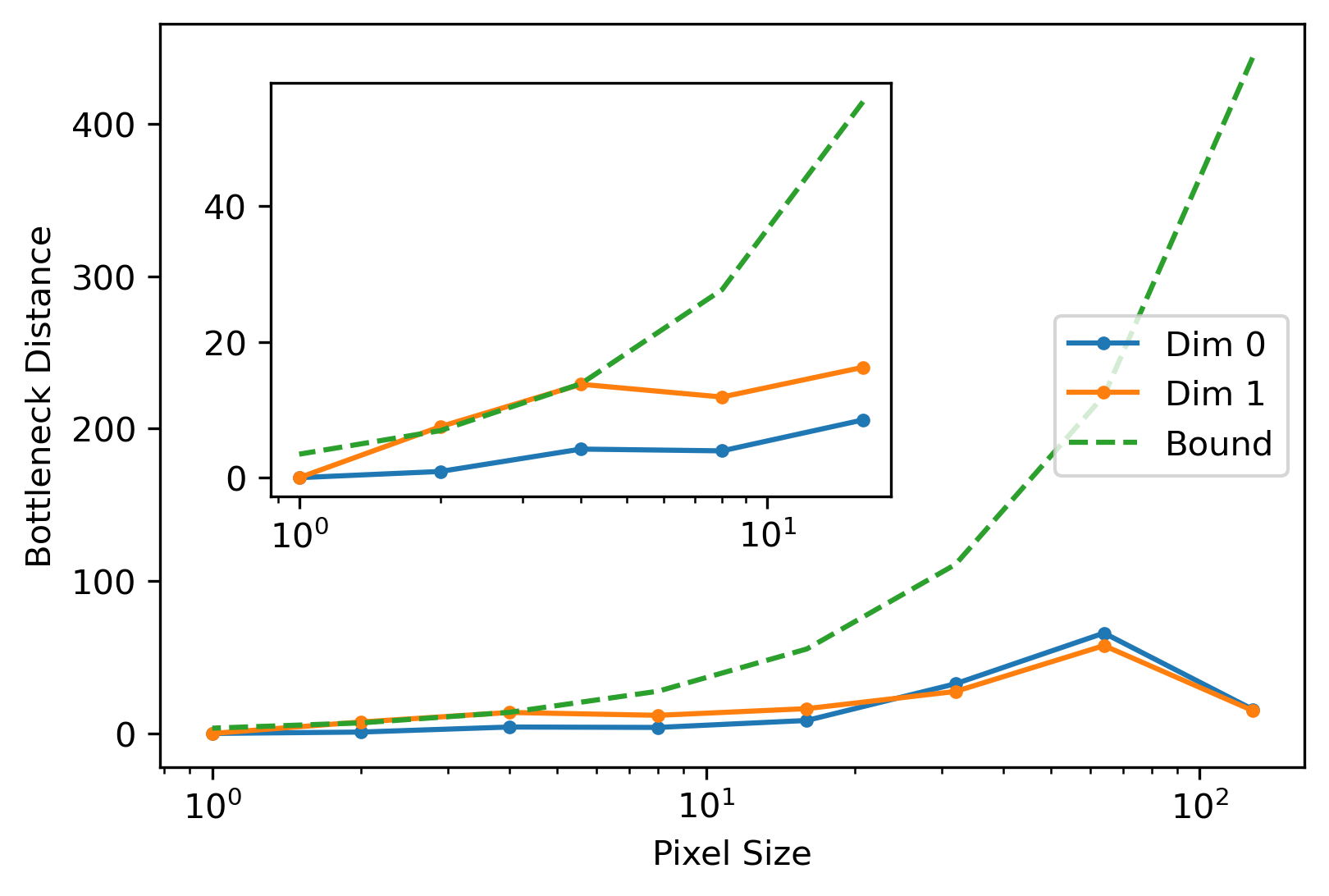}  \\
    Castlegate Sandstone\\
    \includegraphics[width=0.8\columnwidth]{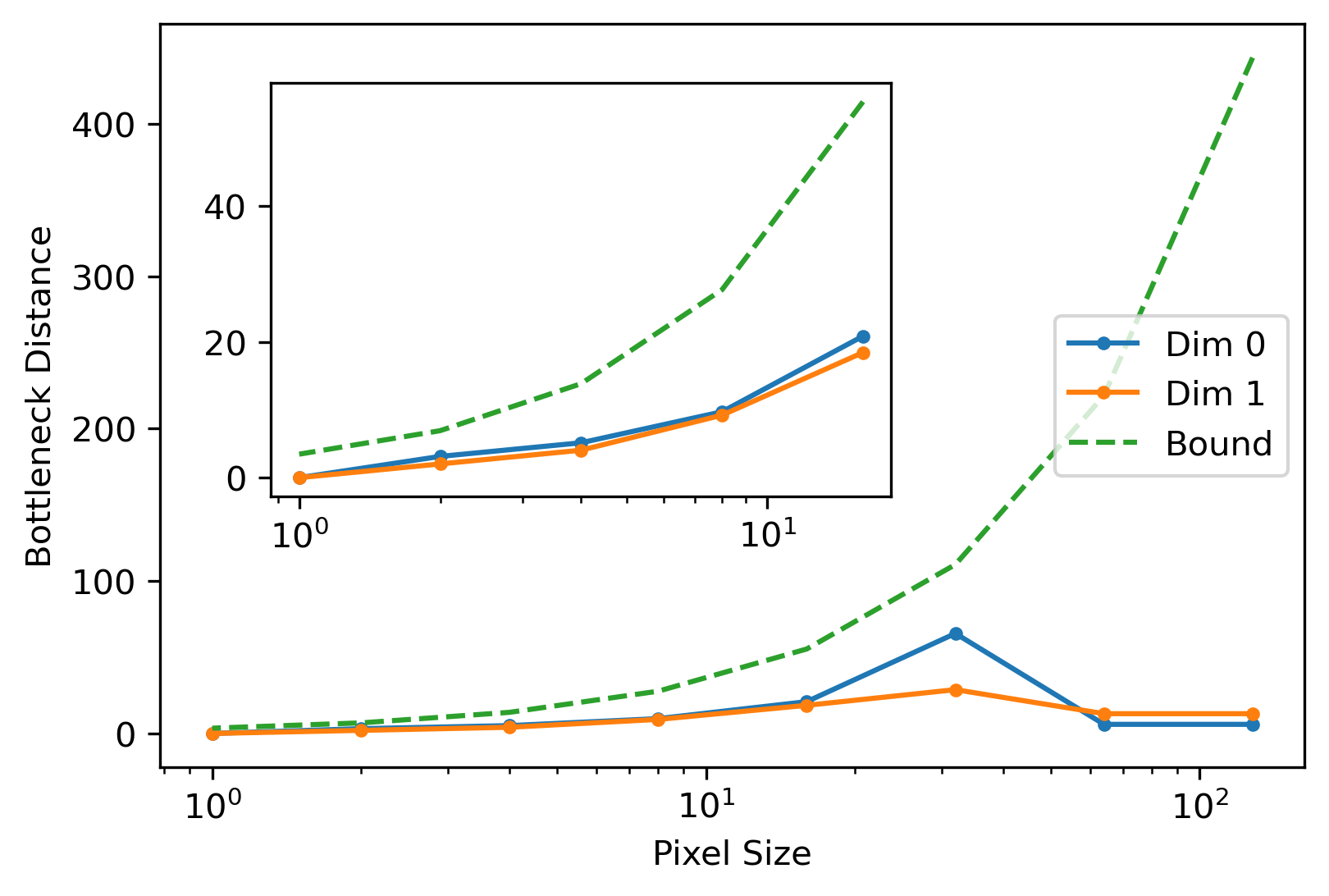}  \\
    Sandpack\\
    \includegraphics[width=0.8\columnwidth]{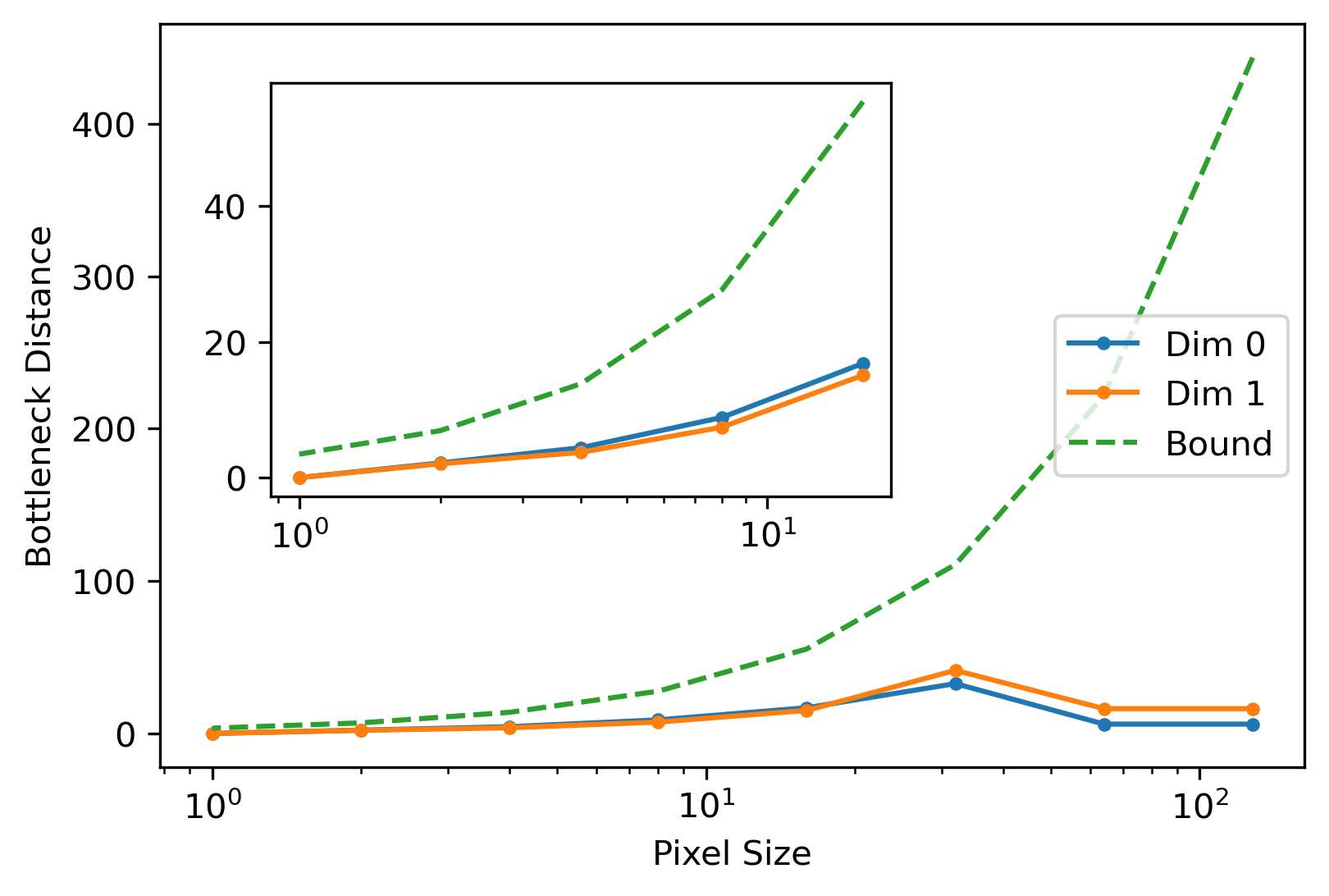}  
    \caption{Bottleneck distance between persistence diagrams for each larger pixel size and the image with pixel size of 1 for each of the three material science examples. The green dashed curve is the function $2r\sqrt{3}$, where $r$ is the pixel size. 
    Note that the $x$-axes are on a logarithmic scale. 
    This figure contains the same information as Fig.~\ref{fig:bead_packing} but presented in terms of pixel size instead of resolution.
    }
    \label{fig:bead_packing_ps}
\end{figure}


\section{Persistence Diagrams from Material Science Examples}
\label{sec:app:matsci}

Section~\ref{sec:matsci} 
presents three material science examples: glass bead packing, a Castlegate sandstone sample, and a sand packing sample.  
The original binary images are available from~\cite{sheppard_network_2005}. 
In Figures~\ref{fig:beadpack},~\ref{fig:castlegate}, and~\ref{fig:sandpack}, we show the 0- and 1-dimensional 
persistence diagrams for the 3D images computed using several resolutions for the bead packing, Castlegate sandstone, and sand packing, respectively.

As shown in ~\cite{robins_percolating_2016}, the percolation threshold, $l_c$, can be determined from the distribution of points in the zero-dimensional persistence diagram. 
This threshold is the radius of the largest sphere that can pass through the pore space from one side of the image to the opposite and is an important physical parameter associated with porous materials.

In the three sets of examples shown in Figures~\ref{fig:beadpack}-\ref{fig:sandpack}, the distribution of points in $\kPD{0}{D_n}$ shows a clear signature of this critical length scale.
This signature yields the same estimate for $l_c$ for image resolutions $n=512$, $256$, and $128$ in the three example materials. 

\begin{figure*}[p]
    \centering
    \large{Bead Packing} \\
    \includegraphics[width=0.8\textwidth]{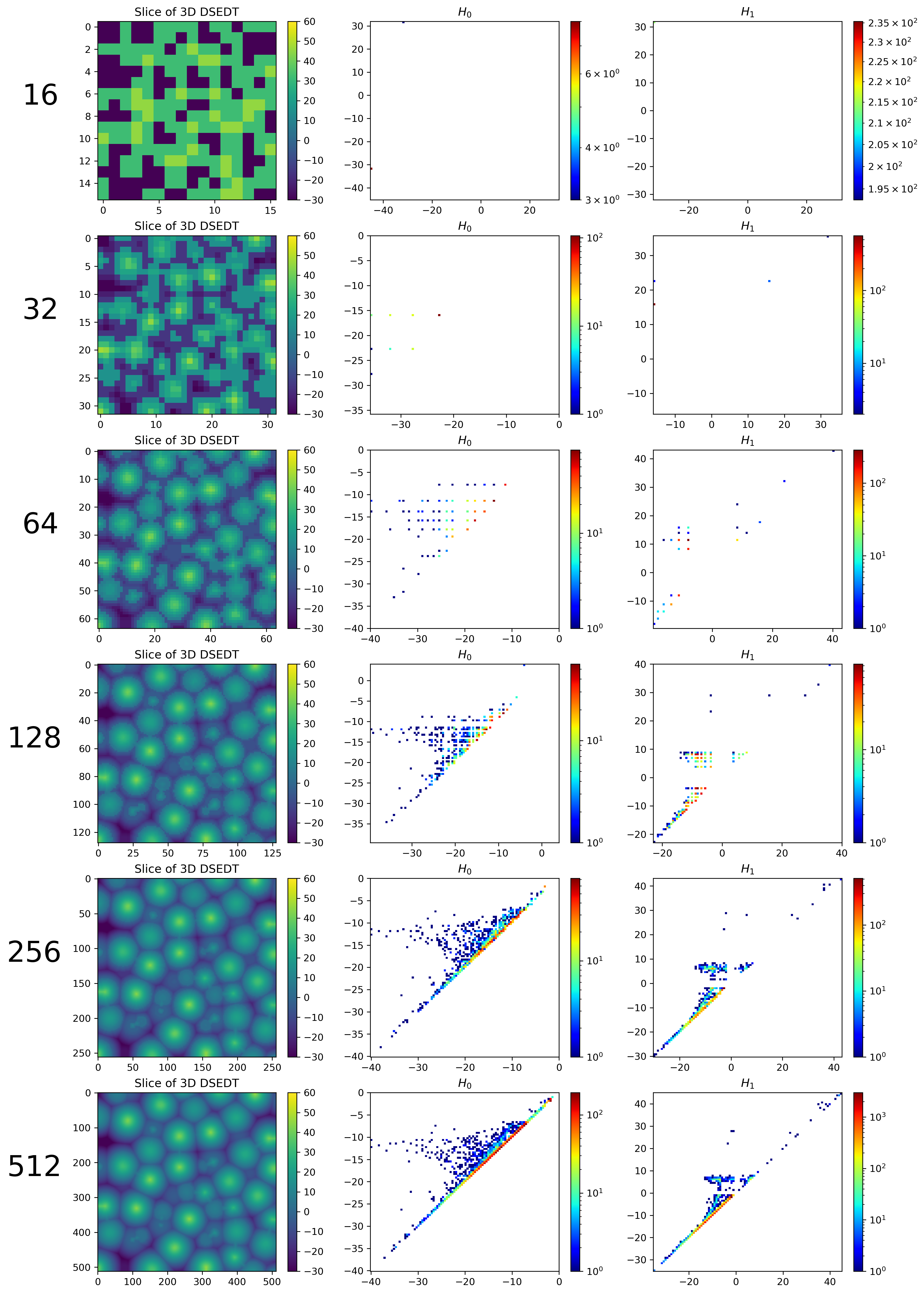}
    \caption{Each row shows a slice of the 3D SEDT, and the 0 and 1-dimensional persistence diagrams of the bead packing sample at different resolutions.}
    \label{fig:beadpack}
\end{figure*}

\begin{figure*}[p]
    \centering
    \large{Castlegate Sandstone} \\
    \includegraphics[width=0.8\textwidth]{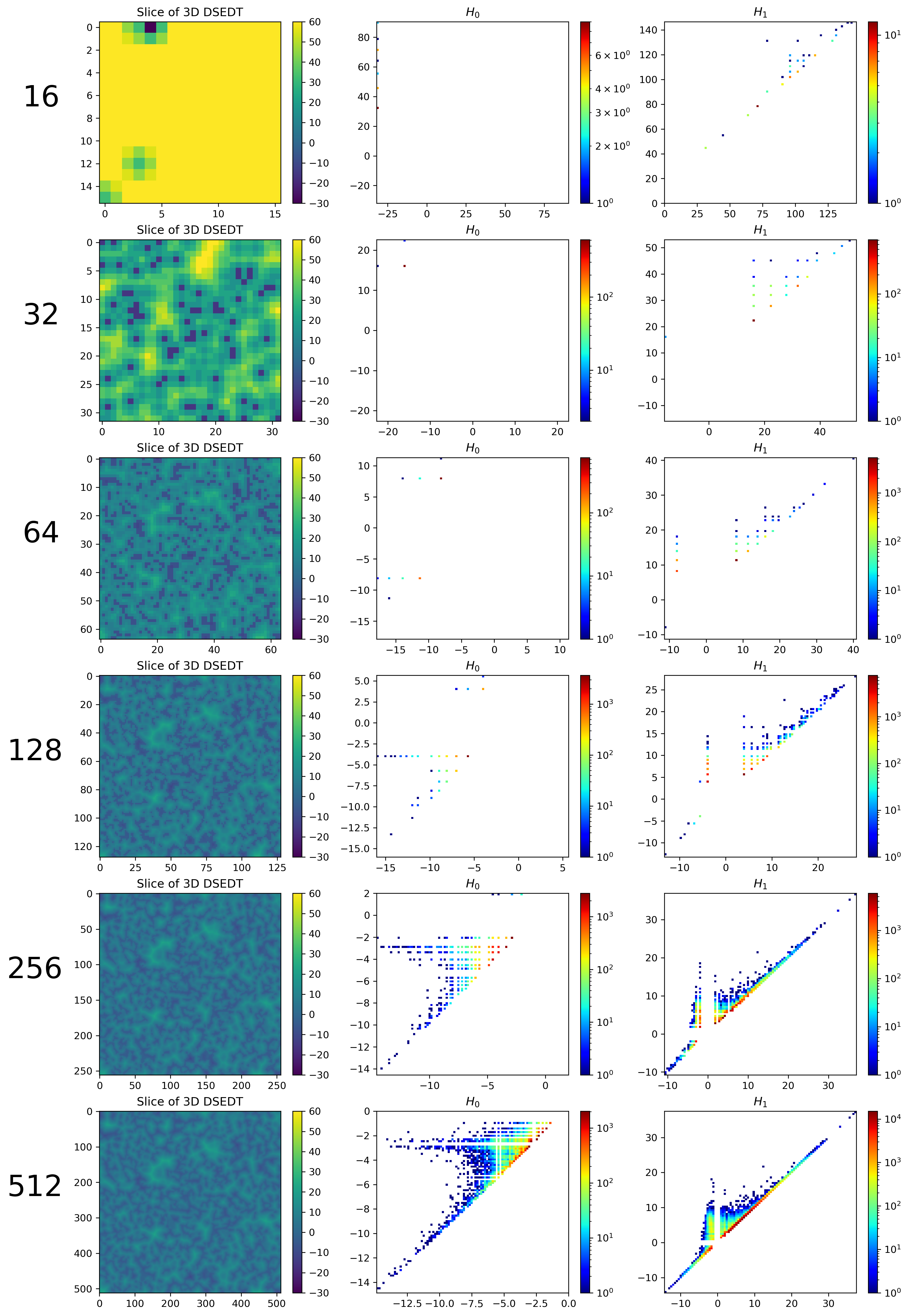}
    \caption{Each row shows a slice of the 3D SEDT, and the 0 and 1-dimensional persistence diagrams of the Castlegate sandstone sample at different resolutions.}
    \label{fig:castlegate}
\end{figure*}

\begin{figure*}[p]
    \centering
    \large{Sand Pack} \\
    \includegraphics[width=0.8\textwidth]{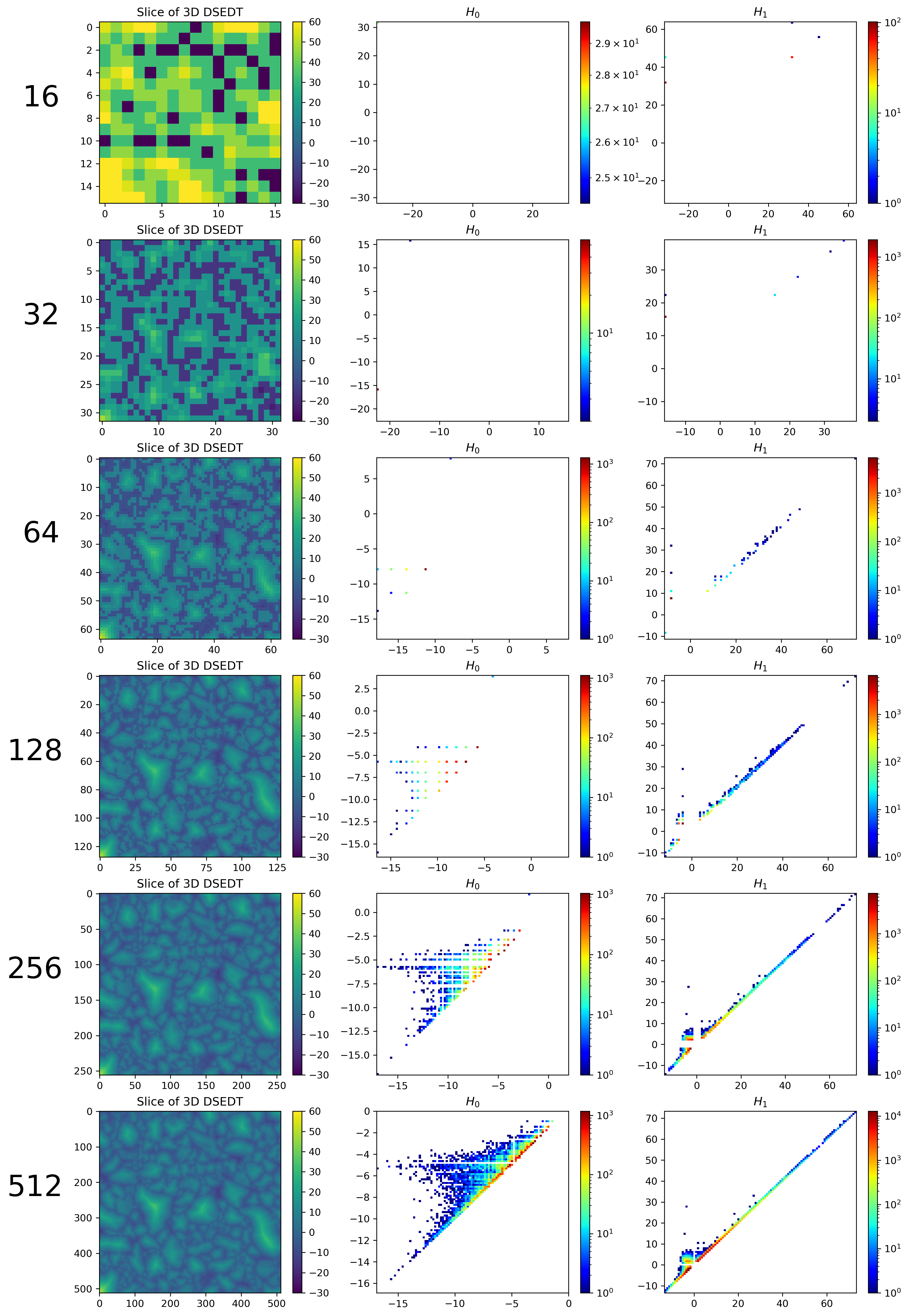}
    \caption{Each row shows a slice of the 3D SEDT, and the 0- and 1-dimensional persistence diagrams of the sand packing sample at different resolutions.}
    \label{fig:sandpack}
\end{figure*}

\end{document}